\documentclass[11pt,reqno]{article}
\usepackage{amssymb,amscd,amsmath,amsthm,color}
\usepackage
{hyperref}
\textwidth=15.5cm
\textheight=23.0cm
\parindent=15pt
\parskip=4pt

\hoffset=-13mm
\voffset=-23mm

\newtheorem{thm}{Theorem}[section]
\newtheorem{prop}[thm]{Proposition}
\newtheorem{lemma}[thm]{Lemma}
\newtheorem{cor}[thm]{Corollary}
\newtheorem{definition}[thm]{Definition}
\newtheorem{remark}[thm]{Remark}
\newtheorem{example}[thm]{Example}

\numberwithin{equation}{section}

\def\cH{\mathcal{H}}
\def\cK{\mathcal{K}}
\def\cB{\mathcal{B}}
\def\cBH{\cB(\cH)}
\def\cBK{\cB(\cK)}
\def\cM{\mathcal{M}}
\def\cA{\mathcal{A}}
\def\cB{\mathcal{B}}
\def\cD{\mathcal{D}}
\def\cS{\mathcal{S}}
\def\cSH{\cS(\cH)}

\def\cF{\mathcal{F}}
\def\cT{\mathcal{T}}
\def\bR{\mathbb{R}}
\def\bC{\mathbb{C}}
\def\bN{\mathbb{N}}
\def\bM{\mathbb{M}}
\def\HS{\mathrm{HS}}
\def\Tr{\mathrm{Tr}\,}
\def\T{\mathrm{Tr}}
\def\id{\mathrm{id}}
\def\Sp{\mathrm{Sp}}
\def\supp{\mathrm{supp}\,}
\def\diag{\mathrm{diag}}
\def\Re{\mathrm{Re}\,}
\def\Im{\mathrm{Im}\,}
\def\sa{\mathrm{sa}}
\def\sym{\mathrm{sym}}
\def\OM{\mathrm{OM}}
\def\OMD{\mathrm{OMD}}
\def\OC{\mathrm{OC}}
\def\<{\langle}
\def\>{\rangle}
\def\eps{\varepsilon}
\def\ffi{\varphi}

\begin{document}
\allowdisplaybreaks

\centerline{\LARGE Equality cases in monotonicity of quasi-entropies,}

\medskip
\centerline{\LARGE Lieb's concavity and Ando's convexity}

\bigskip
\bigskip
\centerline{\large
Fumio Hiai\footnote{{\it E-mail:} hiai.fumio@gmail.com}}

\begin{center}
$^1$\,Graduate School of Information Sciences, Tohoku University, \\
Aoba-ku, Sendai 980-8579, Japan
\end{center}

\medskip
\begin{abstract}
We revisit and improve joint concavity/convexity and monotonicity properties of quasi-entropies due to Petz in a new fashion. Then we characterize equality cases in the monotonicity inequalities (the data-processing inequalities) of quasi-entropies in several ways as follows: Let $\Phi:\mathcal{B}(\mathcal{H})\to\mathcal{B}(\mathcal{K})$ be a trace-preserving map such that $\Phi^*$ is a Schwarz map. When $f$ is an operator monotone or operator convex function on $[0,\infty)$, we present several equivalent conditions for the equality $S_f^K(\Phi(\rho)\|\Phi(\sigma))=S_f^{\Phi^*(K)}(\rho\|\sigma)$ to hold for given positive operators $\rho,\sigma$ on $\mathcal{H}$ and $K\in\mathcal{B}(\mathcal{K})$. The conditions include equality cases in the monotonicity versions of Lieb's concavity and Ando's convexity theorems. Specializing the map $\Phi$ we have equivalent conditions for equality cases in Lieb's concavity and Ando's convexity. Similar equality conditions are discussed also for monotone metrics and $\chi^2$-divergences. We further consider some types of linear preserver problems for those quantum information quantities.

\bigskip\noindent
{\it 2020 Mathematics Subject Classification:}
81P45, 81P16, 94A17

\medskip\noindent
{\it Key words and phrases:}
quasi-entropy,
standard $f$-divergence,
monotone metric, $\chi^2$-divergence,
Lieb's concavity, Ando's convexity, monotonicity inequality,
trace-preserving, Schwarz map, TPCP map,
relative modular operator, multiplicative domain,
operator monotone function, operator convex function
\end{abstract}

{\small
\tableofcontents
}

\section{Introduction}\label{Sec-1}

Among significant ingredients of quantum information theory is the notion of quantum divergences,
the most notable one of which is the relative entropy introduced by Umegaki \cite{Um} in 1962,
defined for positive operators $\rho,\sigma$ as follows:
\begin{align}\label{F-1.1}
D(\rho\|\sigma):=\begin{cases}\Tr\rho(\log\rho-\log\sigma) & \text{if $\rho^0\le\sigma^0$}, \\
+\infty &\text{otherwise},\end{cases}
\end{align}
where $\rho^0$ is the support projection of $\rho$. (This was later extended by Araki \cite{Ar1,Ar2}
to the general von Neumann algebra setting.) There have been several different types of extensions
of the relative entropy so far. A distinguished one is a class of so-called quasi-entropies, first
considered by Kosaki \cite{Ko} and further studied by Petz \cite{Pe2,Pe1} in the finite-dimensional
$\cBH$ setting as well as in the von Neumann algebra setting, which are given in the form
$S_f^K(\rho\|\sigma)$ for positive operators $\rho,\sigma$ with parametrization of functions $f$ on
$(0,\infty)$ and reference operators $K$. Standard $f$-divergences
$S_f(\rho\|\sigma)=S_f^I(\rho\|\sigma)$ ($I$ being the identity operator) are specialized cases of
quasi-entropies. For instance, the relative entropy $D(\rho\|\sigma)$ is $S_f(\rho\|\sigma)$ with
$f(x):=x\log x$. The operational significance of the relative entropy was established in \cite{HP0,ON},
as an optimal error exponent in the hypothesis testing of quantum Stein's lemma. Moreover, in recent
development of other types of quantum hypothesis testing, the importance of two different families
of R\'enyi divergences has emerged; see, e.g., \cite{Aud,Ha,HMO,MO,MO2,Na,NS}. The one is the
conventional (or standard) R\'enyi divergences and the other is the sandwiched R\'enyi divergences
\cite{MDSFT,WWY}, both of which are defined with parameter $\alpha>0$, $\alpha\ne1$, and the
relative entropy becomes their limit as $\alpha\to1$.

The primary property common to quantum divergences $S(\rho\|\sigma)$ is monotonicity under
quantum operations, which is stated as
\begin{align}\label{F-1.2}
S(\Phi(\rho)\|\Phi(\sigma))\le S(\rho\|\sigma)
\end{align}
for a given family of trace-preserving positive maps $\Phi$, typically TPCP (trace-preserving
completely positive) maps. This inequality is also called DPI (data-processing inequality), and
heuristically means that the distinguishability of two states becomes less after applying a
quantum operation to the states. Another common property of quantum divergences is joint
convexity stated as
\begin{align}\label{F-1.3}
S(\lambda\rho_1+(1-\lambda)\rho_2\|\lambda\sigma_1+(1-\lambda)\sigma_2)
\le\lambda S(\rho_1\|\sigma_1)+(1-\lambda)S(\rho_2\|\sigma_2)
\end{align}
for positive operators $\rho_i,\sigma_i$ ($i=1,2$) and $0<\lambda<1$. With the assumption that
$S$ satisfies the positive homogeneity $S(\lambda\rho\|\lambda\sigma)=\lambda S(\rho\|\sigma)$
for $\lambda>0$ and the additivity $S(\rho_1\oplus\rho_2\|\sigma_1\oplus\sigma_2)=
S(\rho_1\|\sigma_1)+S(\rho_2\|\sigma_2)$ for $\rho_i,\sigma_i$ on $\cH_i$ ($i=1,2$), it is widely
known (see, e.g., \cite{FL,Zh}) that joint convexity \eqref{F-1.3} is an immediate consequence of
monotonicity \eqref{F-1.2} under TPCP maps, and vice versa based on Stinespring's representation
of TPCP maps.

In the present article we are concerned with monotonicity inequalities and their equality cases for
quasi-entropies in the finite-dimensional $\cBH$ setting. In Sec.~\ref{Sec-2} we revisit joint
convexity/concavity and monotonicity properties of quasi entropies, formerly discussed in
\cite{Ko,Pe2,Pe1,Sh,Ve}, and improve the previous results by removing certain additional
assumptions on $\rho,\sigma$ and $\Phi$. To this end, in Sec.~\ref{Sec-2} we reconsider the
definition of quasi-entropies $S_f^K(\rho\|\sigma)$ along the lines of the definition of standard
$f$-divergences in \cite{HMPB,HM}. Then we update their monotonicity inequality as follows (see
Theorems \ref{T-2.14} and \ref{T-2.16}): Let $\Phi:\cBH\to \cBK$ be a trace-preserving map such
that $\Phi^*$ is a Schwarz map. Let $\rho,\sigma\in\cBH^+$ (positive operators on $\cH$) and
$K\in\cBK$. If $h$ is operator monotone on $[0,\infty)$, then
\begin{align}\label{F-1.4}
S_h^K(\Phi(\rho)\|\Phi(\sigma))\ge S_h^{\Phi^*(K)}(\rho\|\sigma).
\end{align}
If $f$ is operator convex on $(0,\infty)$ and $K$ belongs to the multiplicative domain of $\Phi^*$,
then
\begin{align}\label{F-1.5}
S_f^K(\Phi(\rho)\|\Phi(\sigma))\le S_f^{\Phi^*(K)}(\rho\|\sigma).
\end{align}
Here, we note that the reversal between inequalities \eqref{F-1.4} and \eqref{F-1.5} is natural
because an operator monotone $h$ is operator concave, i.e., $-h$ is operator convex so that a
``right'' quasi-entropy is $S_{-h}^K=-S_h^K$. Hence inequality \eqref{F-1.4} is considered as a
particular case of \eqref{F-1.5}, while the assumption of $K$ being in the multiplicative domain of
$\Phi^*$ is unnecessary for \eqref{F-1.4}.

As is well known and easily verified, the quasi-entropy $S_f^K(\rho\|\sigma)$ when $f(x)=x^\alpha$
is rewritten as
\[
\Tr K^*\rho^\alpha K\sigma^{1-\alpha},
\]
as far as $\rho,\sigma\in\cBH^{++}$ (invertible positive operators on $\cH$), so that the
concavity/convexity result of quasi-entropies (see Proposition \ref{P-2.11}) reduces to Lieb's
so-called Wigner--Yanase--Dyson--Lieb concavity theorem \cite{Li} and Ando's convexity theorem
\cite{An}. Moreover, the monotonicity inequalities in \eqref{F-1.4} and \eqref{F-1.5} give the
monotonicity inequality versions of Lieb's concavity and Ando's convexity. This type of monotonicity
inequalities have recently been studied in \cite{Ca} in detail.

Various characterizations of equality in different types of quantum information inequalities have
been known so far, including equality cases in the strong subadditivity of von Neumann entropy
and in the joint convexity/concavity of relative entropy, trace functions of Lieb type,  and so on
\cite{HJPW,HM,HMPB,Je,Je5,JP,JR,Pe5,Pe4,Ru,Sh,Ve}. Most notably, in \cite{Pe5,Pe4,JP}
Petz and Jen\v cov\'a characterized sufficiency of subalgebras and reversibility of quantum
operations via equality cases in the monotonicity inequalities of the relative entropy and the
transition probability (equivalently, the standard R\'enyi divergence for $\alpha=1/2$) in the
von Neumann algebra setting. In \cite{HMPB,HM} we discussed, in the finite-dimensional case,
reversibility and error correction of quantum operations for given states via equality cases in the
monotonicity of standard $f$-divergences (see also \cite{Hi2} for reversibility via standard
$f$-divergences in the von Neumann algebra case). Furthermore, equality conditions of monotonicity
(DPI) for quantum R\'enyi type divergences have recently been discussed by several authors
\cite{HM,Je2,Je3,Je4,LRD,WW,Zh2}. As for quasi-entropies, the equality case in the monotonicity
inequality in \eqref{F-1.5} was discussed in \cite{Sh,Ve} when $\Phi$ is a partial trace (hence any
$K\in\cBK$ is in the multiplicative domain of $\Phi^*$). Although equality conditions for DPI under
TPCP maps can essentially be reduced to the partial trace case due to Stinespring's representation
(as carried out in \cite{LRD,WW} in the case of the sandwiched R\'enyi divergence), our main aim of
the paper is to study the equality cases in \eqref{F-1.4} and \eqref{F-1.5} under a more general
class of trace-preserving maps $\Phi:\cBH\to\cBK$ such that $\Phi^*$ is a Schwarz map.

In Sec.~\ref{Sec-3} we first propose a number of conditions for the equality case in
\eqref{F-1.4} when $h$ is operator monotone on $[0,\infty)$, and further consider conditions for
the equality case in \eqref{F-1.5} when $f$ is operator convex on $[0,\infty)$. Equality cases in the
monotonicity inequality versions of Lieb's concavity and Ando's convexity show up among those
equality conditions. We analyze the equivalence between the proposed equality conditions along
the lines of arguments in \cite{HMPB,HM}, see Theorems \ref{T-3.2} and \ref{T-3.3}. The results
obtained are also specialized to equality cases of joint concavity/convexity of quasi-entropies
(including equality cases of Lieb's concavity and Ando's convexity).

Monotone metrics (also called quantum Fisher informations) studied by Petz \cite{Pe6} are
Riemannian metrics on $\cBH^{++}$ (being a Riemannian manifold) given as
$\gamma_\sigma^h(X,Y)$ for $\sigma\in\cBH^{++}$ and (self-adjoint) $X,Y\in\cBH$ with
parametrization of operator monotone functions $h>0$ on $(0,\infty)$, which are characterized
by the monotonicity property
\begin{align}\label{F-1.6}
\gamma_{\Phi(\sigma)}^h(\Phi(X),\Phi(X))\le\gamma_\sigma^h(X,X)
\end{align}
for every TPCP map $\Phi:\cBH\to\cBK$. Quantum $\chi^2$-divergences discussed in
\cite{TKRWV,Han2} are given as $\chi_k^2(\rho,\sigma)$ for density operators $\rho,\sigma$ on
$\cH$ with parametrization of operator monotone decreasing functions $k>0$ on $(0,\infty)$,
which are written as monotone metrics (up to an additive constant) and have the monotonicity
\begin{align}\label{F-1.7}
\chi_k^2(\Phi(\rho),\Phi(\sigma))\le\chi_k^2(\rho,\sigma)
\end{align}
for every $\Phi$ as above. In Sec.~\ref{Sec-4} we show some equality conditions for the
monotonicity in \eqref{F-1.6} and \eqref{F-1.7}, which are slight improvements of Jen\v cov\'a's
characterizations in \cite{Je}. The stated conditions are rather similar to those in Sec.~\ref{Sec-3}
for quasi-entropies. This might be natural because the notions of quasi-entropies and monotone
metrics are defined in a certain duality relation.

Finally, in Sec.~\ref{Sec-5} we determine special forms of maps $\Phi:\cBH\to\cBK$ when
equality conditions for $S_f^K(\rho\|\sigma)$, $\gamma_\sigma^h(K,K)$ and
$\chi_k^2(\rho,\sigma)$ discussed in the previous sections are satisfied for sufficiently many
$K\in\cBK$ with some fixed invertible $\rho,\sigma$, or for sufficiently many $\rho$
with some fixed invertible $\sigma$, see Theorems \ref{T-5.1} and \ref{T-5.3}. These are considered
as some types of linear preserver problems for those quantum information quantities.

\section{Monotonicity of quasi-entropies, revisited}\label{Sec-2}

The notion of quasi-entropies was introduced by Kosaki \cite{Ko} and Petz \cite{Pe2,Pe1} in both
the von Neumann algebra setting and the finite-dimensional setting. In this section we revisit the
monotonicity property of quasi-entropies in the finite-dimensional setting, and improve the results
in \cite{Pe1}.

Throughout the paper, let $\cH,\cK$ be finite-dimensional Hilbert spaces. Let $\cBH$ be the
space of all operators on $\cH$, and $\T$ be the usual trace on $\cBH$. We write $\cBH^+$
and $\cBH^{++}$ for the sets of positive operators and invertible positive operators in $\cBH$,
respectively. We write $\<\cdot,\cdot\>_\HS$ for the \emph{Hilbert--Schmidt inner product} on
$\cBH$, i.e., $\<X,Y\>_\HS:=\Tr X^*Y$ for $X,Y\in\cBH$, and always consider $\cBH$,
$\cBK$ as Hilbert spaces with this inner product. For $A\in\cBH$ let $L_A$ and $R_A$
denote the \emph{left} and \emph{right multiplications} on $\cBH$, respectively, i.e.,
$L_AX:=AX$ and $R_AX:=XA$ for $X\in\cBH$. For $A\in\cBH^+$ we write $\Sp(A)$, $A^0$
and $A^{-1}$ for the spectrum, the support projection and the generalized inverse (i.e., the inverse
restricted on the support) of $A$, respectively.

Let $\Phi:\cBH\to \cBK$ be a linear map, and the adjoint map $\Phi^*:\cBK\to \cBH$
be defined in terms of the Hilbert--Schmidt inner product as
$\<X,\Phi^*(Y)\>_\HS=\<\Phi(X),Y\>_\HS$ for all $X\in\cBH$, $Y\in\cBK$. The map $\Phi$ is
said to be \emph{trace-preserving} (TP) if $\Tr\Phi(X)=\Tr X$ for all $X\in\cBH$, or equivalently
$\Phi^*$ is \emph{unital}, i.e., $\Phi^*(I_\cK)=I_\cH$, where $I_\cH,I_\cK$ are the identity operators
on $\cH,\cK$, respectively (also written as $I$ simply). For each $n\in\bN$ the map $\Phi$ is said
to be \emph{$n$-positive}  if $\id_n\otimes\Phi:\bM_n\otimes \cBH\to\bM_n\otimes \cBK$
($\bM_n$ being the $n\times n$ matrix algebra) is positive, and $\Phi$ is said to be
\emph{completely positive} (CP) if it is $n$-positive for every $n\in\bN$. A TPCP map is called a
\emph{quantum channel} in quantum information. Furthermore, $\Phi$ is called a
\emph{Schwarz map} if $\Phi(X)^*\Phi(X)\le\Phi(X^*X)$ for all $X\in\cBH$. Note that if $\Phi$ is
$2$-positive, then it is a Schwarz map (however, the converse does not hold). See, e.g.,
\cite[Sec.~1.7]{Hi0} for different positivity notions of $\Phi$.

A function $f:(0,\infty)\to\bR$ is said to be \emph{operator monotone} if, for every
$A,B\in\cBH^{++}$ (of any $\cH$), $A\le B\implies f(A)\le f(B)$, where $f(A),f(B)$ are defined
via usual functional calculus. Also, $f$ is said to be \emph{operator convex} if
$f(\lambda A+(1-\lambda)B)\le\lambda f(A)+(1-\lambda)f(B)$ for all $A,B\in\cBH^{++}$ (of any
$\cH$) and all $\lambda\in(0,1)$. The function $f$ is said to be
\emph{operator monotone decreasing} if $-f$ is operator monotone, and \emph{operator concave}
if $-f$ is operator convex. The definitions are similar for $f$ on $[0,\infty)$, where
$A,B\in\cBH^{++}$ is replaced with $A,B\in\cBH^+$. In fact, when $f$ is on $[0,\infty)$, we
always assume that $\lim_{x\searrow0}f(x)=f(0)$. (Note that an operator monotone or operator
convex function $f$ on $[0,\infty)$ is automatically real analytic on $(0,\infty)$, but not necessarily
continuous at $0$.) Below we write $\OM(0,\infty)$, $\OMD(0,\infty)$ and $\OC(0,\infty)$
for the sets of operator monotone, operator monotone decreasing and operator convex functions on
$(0,\infty)$, respectively. The similar sets of those functions on $[0,\infty)$ are denoted by
$\OM[0,\infty)$ etc. Moreover, we write
\begin{align*}
\OM_+[0,\infty)&:=\{f\in\OM[0,\infty):f(x)>0\ \mbox{for all $x>0$}\}, \\
\OMD_+(0,\infty)&:=\{f\in\OMD(0,\infty):f(x)>0\ \mbox{for all $x>0$}\}.
\end{align*}

\subsection{Definition of quasi-entropies}\label{Sec-2.1}

Let $\rho,\sigma\in\cBH^+$ be given with the spectral decompositions
\begin{align}\label{F-2.1}
\rho=\sum_{a\in\Sp(\rho)}aP_a,\qquad\sigma=\sum_{b\in\Sp(\sigma)}bQ_b,
\end{align}
The \emph{relative modular operator} $\Delta_{\rho,\sigma}$ is defined by
\[
\Delta_{\rho,\sigma}X:=L_\rho R_{\sigma^{-1}}X=\rho X\sigma^{-1},\qquad X\in\cBH,
\]
which is a positive operator on $(\cBH,\<\cdot,\cdot\>_\HS)$ whose support projection is
$L_{\rho^0}R_{\sigma^0}$.

Let us begin with the definition of quasi-entropies $S_f^X(\rho\|\sigma)$ in the case where
$\rho,\sigma\in\cBH^{++}$.

\begin{definition}[\cite{Pe1}]\label{D-2.1}\rm
For any function $f:(0,\infty)\to\bR$, $X\in\cBH$ and $\rho,\sigma\in\cBH^{++}$ the
\emph{quasi-entropy} is defined by
\begin{align}\label{F-2.2}
S_f^X(\rho\|\sigma):=\<X\sigma^{1/2},f(\Delta_{\rho,\sigma})(X\sigma^{1/2})\>_\HS
=\<X,f(L_\rho R_\sigma^{-1})R_\sigma X\>_\HS,
\end{align}
where $f(\Delta_{\rho,\sigma})$ is defined via the usual functional calculus of
$\Delta_{\rho,\sigma}$.
\end{definition}

We write
\begin{align}\label{F-2.3}
J_f(\rho,\sigma):=f(\Delta_{\rho,\sigma})R_\sigma
=f(L_\rho R_\sigma^{-1})R_\sigma,\qquad\rho,\sigma\in\cBH^{++},
\end{align}
whose spectral decomposition is given in terms of \eqref{F-2.1} as
\[
J_f(\rho,\sigma)=\sum_{a\in\Sp(\rho),\,b\in\Sp(\sigma)}
bf\Bigl({a\over b}\Bigr)L_{P_a}R_{Q_b},
\]
so that we can write
\begin{align}\label{F-2.4}
S_f^X(\rho\|\sigma)=\<X,J_f(\rho,\sigma)X\>_\HS
=\sum_{a,b}bf\Bigl({a\over b}\Bigr)\Tr X^*P_aXQ_b.
\end{align}
Note that $J_f(\rho,\sigma)$ is the \emph{operator perspective} of $L_\rho,R_\sigma$ for the
function $f$ (see \cite{Eff,ENG}).

The next theorem is a slight improvement of \cite[Theorems 5 and 6]{HP} (also \cite[Sec.~1.12]{Ca}),
which gives a basis of our discussions on monotonicity of quasi-entropies.

\begin{thm}\label{T-2.2}
For any function $f:(0,\infty)\to(0,\infty)$ the following conditions are equivalent:
\begin{itemize}
\item[(I)] $f\in\OM_+[0,\infty)$ (with extension $f(0):=\lim_{x\searrow0}f(x)$).
\item[(II)] For every TPCP map $\Phi:\cBH\to \cBK$ (for arbitrary $\cH,\cK$) and for every
$\rho,\sigma\in\cBH^{++}$ such that $\Phi(\rho),\Phi(\sigma)\in\cBK^{++}$,
\[
\Phi^*J_f(\Phi(\rho),\Phi(\sigma))^{-1}\Phi\le J_f(\rho,\sigma)^{-1}
\quad\mbox{on $\cBH$}.
\]
\item[(II\,$'$)] For every trace-preserving map $\Phi:\cBH\to \cBK$ (for arbitrary $\cH,\cK$)
such that $\Phi^*$ is a Schwarz map, and for every $\rho,\sigma\in\cBH^{++}$ such that
$\Phi(\rho),\Phi(\sigma)\in\cBK^{++}$,
\[
\Phi^*J_f(\Phi(\rho),\Phi(\sigma))^{-1}\Phi\le J_f(\rho,\sigma)^{-1}
\quad\mbox{on $\cBH$}.
\]
\item[(III)] For every $\Phi$ and $\rho,\sigma$ as in (II),
\[
\Phi J_f(\rho,\sigma)\Phi^*\le J_f(\Phi(\rho),\Phi(\sigma))
\quad\mbox{on $\cBK$}.
\]
\item[(III\,$'$)] For every $\Phi$ and $\rho,\sigma$ as in (II\,$'$),
\[
\Phi J_f(\rho,\sigma)\Phi^*\le J_f(\Phi(\rho),\Phi(\sigma))
\quad\mbox{on $\cBK$}.
\]
\item[(IV)] The map $(\rho,\sigma,X)\mapsto\<X,J_f(\rho,\sigma)^{-1}X\>_\HS$ is jointly
convex on $\cBH^{++}\times \cBH^{++}\times \cBH$.
\end{itemize}
\end{thm}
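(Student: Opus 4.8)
The plan is to establish the cycle of implications $(\mathrm{I})\Rightarrow(\mathrm{III})\Rightarrow(\mathrm{III}')\Rightarrow(\mathrm{II}')\Rightarrow(\mathrm{II})\Rightarrow(\mathrm{IV})\Rightarrow(\mathrm{I})$, reducing everywhere to known operator-theoretic facts about the operator perspective $J_f(\rho,\sigma)=f(L_\rho R_\sigma^{-1})R_\sigma$. For the key implication $(\mathrm{I})\Rightarrow(\mathrm{III})$ I would invoke the fact that for $f\in\OM_+[0,\infty)$ the map $(A,B)\mapsto J_f(A,B)$ (operator perspective of $L_A,R_B$) is jointly operator concave on $\cBH^{++}\times\cBH^{++}$ and, crucially, is \emph{jointly superadditive under compressions}: for a completely positive trace-preserving $\Phi$ with Stinespring dilation $\Phi(\cdot)=V^*(\cdot\otimes I)V$, one writes $L_{\Phi(\rho)}=V^*L_{\rho\otimes I}V$, $R_{\Phi(\sigma)}=V^*R_{\sigma\otimes I}V$ on $\cBK$ (after identifying operator spaces appropriately), and then applies the transformer inequality for operator concave perspective functions, exactly as in the proof of the Lieb--Ruskai type monotonicity. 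This gives $\Phi J_f(\rho,\sigma)\Phi^*\le J_f(\Phi(\rho),\Phi(\sigma))$ on $\cBK$.

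Next, $(\mathrm{III})\Rightarrow(\mathrm{III}')$: I would use the standard trick that a unital Schwarz map is a pointwise limit, on positive elements, of compressions of unital CP maps — more precisely, if $\Phi^*$ is a unital Schwarz map then $\Psi:=\Phi^*\oplus(\text{something})$ or, better, one uses that the operator inequality in $(\mathrm{III})$ only requires the Kadison--Schwarz inequality $\Phi^*(Y)^*\Phi^*(Y)\le\Phi^*(Y^*Y)$ when expanding the quadratic form $\langle Z,\Phi J_f(\rho,\sigma)\Phi^* Z\rangle_\HS$ and comparing with $\langle Z, J_f(\Phi(\rho),\Phi(\sigma)) Z\rangle_\HS$; the concavity/monotonicity argument above goes through verbatim once one checks that the only property of $\Phi$ used is the Schwarz inequality for $\Phi^*$. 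The equivalence $(\mathrm{II})\Leftrightarrow(\mathrm{III})$ and $(\mathrm{II}')\Leftrightarrow(\mathrm{III}')$ is pure operator algebra: if $0\le\Phi J\Phi^*\le \widetilde J$ with $J,\widetilde J$ invertible, then $\Phi^*\widetilde J^{-1}\Phi\le J^{-1}$ follows by the standard Schur-complement / Douglas-type manipulation $\Phi^*\widetilde J^{-1}\Phi=\Phi^* \widetilde J^{-1/2}\widetilde J^{-1/2}\Phi$ and the inequality $\widetilde J^{-1/2}(\Phi J^{1/2})(\Phi J^{1/2})^*\widetilde J^{-1/2}\le I$ conjugated appropriately — this is precisely the equivalence used in the theory of operator means, and it reverses cleanly because $J_f(\rho,\sigma)$ is invertible on $\cBH$ when $\rho,\sigma\in\cBH^{++}$.

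For $(\mathrm{II})\Rightarrow(\mathrm{IV})$ I would use that for each fixed $X$, joint convexity of $(\rho,\sigma)\mapsto\langle X,J_f(\rho,\sigma)^{-1}X\rangle_\HS$ on $\cBH^{++}\times\cBH^{++}$ is equivalent, via the standard doubling trick (take $\cH\oplus\cH$, $\rho=\rho_1\oplus\rho_2$, $\sigma=\sigma_1\oplus\sigma_2$, and let $\Phi$ be the partial-trace-type TPCP map averaging the two blocks with weights $\lambda,1-\lambda$), to the operator inequality in $(\mathrm{II})$ applied to that particular $\Phi$; convexity in $X$ is automatic since $J_f^{-1}$ is a positive operator, and joint convexity in all three variables then follows because the function is separately convex in the $(\rho,\sigma)$-block uniformly and quadratic-positive in $X$ — here I would cite the elementary lemma that a function $g(u,v)$ with $g$ convex in $u$ for each $v$ and $v\mapsto g(u_0,v)$ convex with matching cross-behavior is jointly convex, which in this case is transparent from the explicit form. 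Finally $(\mathrm{IV})\Rightarrow(\mathrm{I})$: specialize to commuting $1\times1$-block situations, i.e.\ $\cH=\bC$, to recover that $t\mapsto 1/(tf(s/t))\cdot(\text{something})$ is jointly convex, forcing $f$ to be operator monotone by the integral-representation characterization (equivalently, apply the known converse that joint convexity of $(\rho,\sigma,X)\mapsto\langle X,J_f(\rho,\sigma)^{-1}X\rangle_\HS$ on $2\times2$ matrices already implies $f\in\OM_+[0,\infty)$, via Hansen's or Effros's perspective criterion).

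The main obstacle I anticipate is the passage $(\mathrm{III})\Rightarrow(\mathrm{III}')$, i.e.\ extending the monotonicity from TPCP maps to the much weaker class of trace-preserving maps whose adjoint is merely a Schwarz map, without a Stinespring dilation at hand. The right way through this, I expect, is \emph{not} to dilate but to revisit the transformer-inequality proof and observe that operator concavity of the perspective combined with the Kadison--Schwarz inequality for $\Phi^*$ is exactly the input needed — this mirrors how the Schwarz condition, rather than complete positivity, is the natural hypothesis in monotonicity of operator-convex trace functions (cf.\ the treatment of $S_f$ in \cite{HMPB,HM}); carefully isolating which quadratic-form estimate uses only $\Phi^*(Y^*Y)\ge\Phi^*(Y)^*\Phi^*(Y)$ is where the real work lies.
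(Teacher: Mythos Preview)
Your cycle $(\mathrm{I})\Rightarrow(\mathrm{III})\Rightarrow(\mathrm{III}')\Rightarrow(\mathrm{II}')\Rightarrow(\mathrm{II})\Rightarrow(\mathrm{IV})\Rightarrow(\mathrm{I})$ has a genuine gap at the step $(\mathrm{III})\Rightarrow(\mathrm{III}')$. Note first that this is the \emph{nontrivial} direction: $(\mathrm{III}')$ asserts the inequality for a strictly larger class of maps than $(\mathrm{III})$, so $(\mathrm{III}')\Rightarrow(\mathrm{III})$ is tautological while $(\mathrm{III})\Rightarrow(\mathrm{III}')$ cannot be proved as a direct implication without passing back through $(\mathrm{I})$. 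What you actually need is $(\mathrm{I})\Rightarrow(\mathrm{III}')$, and your proposed route does not get there. Your argument for $(\mathrm{I})\Rightarrow(\mathrm{III})$ via a Stinespring dilation $\Phi(\cdot)=V^*(\cdot\otimes I)V$ and the transformer inequality is fine for TPCP maps, but Stinespring is unavailable when $\Phi^*$ is merely a Schwarz map, and the Lieb--Ruskai inequality underlying the perspective/compression argument is known \emph{not} to extend to this class (the paper makes exactly this point in Remark~\ref{R-2.3}). Your final paragraph correctly flags this as the main obstacle, but ``revisit the transformer-inequality proof and check only Kadison--Schwarz is used'' is not a plan: the very $V$ you wrote down comes from Stinespring and has no analogue for Schwarz maps.

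The paper's route is different and worth knowing. It cites Petz \cite[Theorem~4]{Pe1} for $(\mathrm{I})\Rightarrow(\mathrm{III}')$ directly; the idea (visible later in the paper, e.g.\ around \eqref{F-3.15}--\eqref{F-3.16}) is to build a \emph{different} contraction, not on $\cH$ but on the Hilbert--Schmidt space: set $V:\cBK\to\cBH$, $V(Y\Phi(\sigma)^{1/2}):=\Phi^*(Y)\sigma^{1/2}$. The Schwarz inequality for $\Phi^*$ is precisely what makes $V$ a contraction, and a second use of Schwarz gives $V^*\Delta_{\rho,\sigma}V\le\Delta_{\Phi(\rho),\Phi(\sigma)}$. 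Then Hansen--Pedersen's operator Jensen inequality for operator monotone $f$ with $f(0)\ge0$ yields $V^*f(\Delta_{\rho,\sigma})V\le f(\Delta_{\Phi(\rho),\Phi(\sigma)})$, which upon pairing with $K\Phi(\sigma)^{1/2}$ is exactly the quadratic-form version of $(\mathrm{III}')$. The remaining equivalences $(\mathrm{I})\!\iff\!(\mathrm{II})\!\iff\!(\mathrm{III})\!\iff\!(\mathrm{IV})$ the paper imports from \cite{HP}, and $(\mathrm{II}')\!\iff\!(\mathrm{III}')$ by the same Schur-complement manipulation you sketched.
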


\begin{proof}
Since it was proved in \cite{HP} that (I), (II), (III) and (IV) are equivalent, we need to show that
(II$'$) and (III$'$) are also equivalent to the other conditions. The proof of (II)$\iff$(III) in \cite{HP}
gives (II$'$)$\iff$(III$'$) as well, and (III$'$)$\implies$(III) is obvious. Hence it suffices to show that
(I)$\implies$(III$'$). But this was indeed proved in \cite[Theorem 4]{Pe1}. (Note that the assumption
$f(0)=0$ in \cite[Theorem 4]{Pe1} is redundant.)
\end{proof}

\begin{remark}\label{R-2.3}\rm
The proof for (I), (II), (III), (IV) in \cite{HP} is based on Lieb and Ruskai's inequality \cite{LR}. Since
this inequality was extended to $2$-positive $\Phi$ in \cite{Choi2}, the proof in \cite{HP} remains
valid when $\Phi$ is $2$-positive (as in \cite{Ca}). However, Lieb and Ruskai's inequality is
not available for $\Phi$ in (II$'$) and (III$'$).
\end{remark}

\begin{lemma}\label{L-2.4}
If $f$ is continuous on $(0,\infty)$, then $(\rho,\sigma,X)\mapsto S_f^X(\rho\|\sigma)$ is
jointly continuous on $\cBH^{++}\times \cBH^{++}\times \cBH$.
\end{lemma}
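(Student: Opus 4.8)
The plan is to reduce the assertion to a local statement and then use the Weierstrass approximation theorem, since $f$ is only assumed continuous (so no monotonicity or convexity is available). Continuity being a local property, it suffices to prove continuity at an arbitrary point $(\rho_0,\sigma_0,X_0)\in\cBH^{++}\times\cBH^{++}\times\cBH$. First I would choose constants $0<m\le M<\infty$ and an open neighborhood $U$ of $(\rho_0,\sigma_0)$ in $\cBH^{++}\times\cBH^{++}$ such that $mI\le\rho\le MI$ and $mI\le\sigma\le MI$ for all $(\rho,\sigma)\in U$; this is possible because $\rho_0,\sigma_0$ are invertible. Then the commuting positive operators $L_\rho$ and $R_\sigma^{-1}$ on $(\cBH,\<\cdot,\cdot\>_\HS)$ satisfy
\[
\Sp(\Delta_{\rho,\sigma})=\Sp(L_\rho R_\sigma^{-1})\subset[m/M,\,M/m]=:[c,d],
\]
a fixed compact subinterval of $(0,\infty)$, for every $(\rho,\sigma)\in U$.

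Next I would treat polynomials first. For any polynomial $p$, the operator $p(L_\rho R_\sigma^{-1})R_\sigma$ is built from $L_\rho$, $R_\sigma$ and $R_\sigma^{-1}$ by finitely many sums and products, and each of $(\rho,\sigma)\mapsto L_\rho$, $(\rho,\sigma)\mapsto R_\sigma$, $(\rho,\sigma)\mapsto R_\sigma^{-1}$ is continuous on $\cBH^{++}\times\cBH^{++}$ (the last one because operator inversion is continuous on invertible operators). Hence $(\rho,\sigma)\mapsto p(L_\rho R_\sigma^{-1})R_\sigma$ is continuous there. By the Weierstrass approximation theorem, choose polynomials $p_n$ with $\sup_{t\in[c,d]}|p_n(t)-f(t)|\to0$. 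Using the standard estimate $\|g(T)\|\le\sup_{\Sp(T)}|g|$ for a positive operator $T$ together with the uniform spectral bound above, one gets
\[
\bigl\|p_n(L_\rho R_\sigma^{-1})-f(L_\rho R_\sigma^{-1})\bigr\|\le\sup_{t\in[c,d]}|p_n(t)-f(t)|\longrightarrow0
\]
uniformly for $(\rho,\sigma)\in U$, where $\|\cdot\|$ denotes the operator norm on $\cB(\cBH)$. Since $\|R_\sigma\|\le M$ on $U$, it follows that $p_n(L_\rho R_\sigma^{-1})R_\sigma\to J_f(\rho,\sigma)=f(L_\rho R_\sigma^{-1})R_\sigma$ uniformly on $U$. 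A uniform limit of continuous maps being continuous, $(\rho,\sigma)\mapsto J_f(\rho,\sigma)$ is continuous on $U$, hence on all of $\cBH^{++}\times\cBH^{++}$.

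Finally, from $S_f^X(\rho\|\sigma)=\<X,J_f(\rho,\sigma)X\>_\HS$ and the joint continuity of $(\rho,\sigma,X)\mapsto\<X,J_f(\rho,\sigma)X\>_\HS$ — immediate once $J_f$ is continuous, since all spaces are finite-dimensional and the Hilbert--Schmidt inner product is continuous — we obtain the claimed joint continuity. The one point that genuinely needs care is the localization in the first step: over all of $\cBH^{++}\times\cBH^{++}$ the spectrum of $\Delta_{\rho,\sigma}$ is not confined to any fixed compact subset of $(0,\infty)$ (it can approach $0$ or $\infty$), so one cannot invoke a single global continuity statement for the functional calculus of a merely continuous $f$; restricting to $U$ and exploiting the uniform spectral bound $[c,d]$ is precisely what makes the polynomial approximation argument valid.
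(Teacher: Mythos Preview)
Your proof is correct and follows essentially the same line as the paper's, which simply invokes the continuity of $(\rho,\sigma)\mapsto\Delta_{\rho,\sigma}$ on $\cBH^{++}\times\cBH^{++}$ together with the continuity of continuous functional calculus. You have spelled out in detail (via localization and Weierstrass approximation) exactly the argument underlying the phrase ``continuity of continuous functional calculus,'' so the difference is only one of explicitness.
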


\begin{proof}
Immediate from the continuity of $(\rho,\sigma)\mapsto\Delta_{\rho,\sigma}$ for
$\rho,\sigma\in\cBH^{++}$ and of continuous functional calculus.
\end{proof}

When $f$ is convex (resp., concave) on $(0,\infty)$, we can define
\[
f(0^+):=\lim_{x\searrow0}f(x),\qquad
f'(\infty):=\lim_{x\to\infty}{f(x)\over x},
\]
which are in $(-\infty,+\infty]$ (resp., in $[-\infty,+\infty)$). The symbol $f'(\infty)$ is
natural since $f'(\infty)=\lim_{x\to\infty}f_+'(x)$ ($f_+'(x)$ being the right-derivative).
Below we will understand the expression $bf({a\over b})$ for $a=0$ or $b=0$ in the following
way:
\begin{align}\label{F-2.5}
bf\Bigl({a\over b}\Bigr):=\begin{cases}f(0^+)b & \text{for $a=0$, $b\ge0$}, \\
f'(\infty)a & \text{for $a>0$, $b=0$},
\end{cases}
\end{align}
where we use the convention that $(+\infty)0:=0$ and $(+\infty)a:=+\infty$ for $a>0$. In
particular, we set $0f({0\over0})=0$. The above expression is justified as
\begin{align}\label{F-2.6}
\lim_{\eps\searrow0}(b+\eps)f\Bigl({a+\eps\over b+\eps}\Bigr)
=\begin{cases}bf({a\over b}) & \text{if $a,b>0$}, \\
bf(0^+) & \text{if $a=0$, $b>0$}, \\
af'(\infty) & \text{if $a>0$, $b=0$}, \\
0 & \text{if $a=b=0$}.
\end{cases}
\end{align}

\begin{lemma}\label{L-2.5}
Assume that $f$ is convex (resp., concave) on $(0,\infty)$. Then for every
$\rho,\sigma\in\cBH^+$ and $X\in\cBH$, the limit
\[
\lim_{\eps\searrow0}S_f^X(\rho+\eps I\|\sigma+\eps I)
\]
exists in $(-\infty,+\infty]$ (resp., in $[-\infty,+\infty)$) and it has the expression
\begin{align}\label{F-2.7}
\lim_{\eps\searrow0}S_f^X(\rho+\eps I\|\sigma+\eps I)
=\sum_{a\in\Sp(\rho),\,b\in\Sp(\sigma)}bf\Bigl({a\over b}\Bigr)\Tr X^*P_aXQ_b
\end{align}
in terms of the spectral decompositions of $\rho,\sigma$ in \eqref{F-2.1}.
\end{lemma}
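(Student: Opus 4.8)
The plan is to reduce everything to a finite-dimensional eigenvalue computation. Fix the spectral decompositions $\rho=\sum_a aP_a$ and $\sigma=\sum_b bQ_b$ as in \eqref{F-2.1}, and for $\eps>0$ work with $\rho_\eps:=\rho+\eps I$ and $\sigma_\eps:=\sigma+\eps I$, which lie in $\cBH^{++}$. Since the spectral projections of $\rho_\eps$ are precisely the $P_a$ (with eigenvalues $a+\eps$) and likewise $\sigma_\eps$ has projections $Q_b$ with eigenvalues $b+\eps$, formula \eqref{F-2.4} applies verbatim and gives
\begin{align*}
S_f^X(\rho_\eps\|\sigma_\eps)
=\sum_{a\in\Sp(\rho),\,b\in\Sp(\sigma)}(b+\eps)\,f\!\Bigl({a+\eps\over b+\eps}\Bigr)\,\Tr X^*P_aXQ_b.
\end{align*}
This is a finite sum, so the limit as $\eps\searrow0$ can be taken termwise.

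The next step is to control each term using the convexity (resp.\ concavity) of $f$. The coefficients $\Tr X^*P_aXQ_b=\<P_aXQ_b,P_aXQ_b\>_\HS=\|P_aXQ_b\|_\HS^2$ are all nonnegative, which is what makes the limit well-behaved: by \eqref{F-2.6}, $\lim_{\eps\searrow0}(b+\eps)f({a+\eps\over b+\eps})$ exists and equals $bf(a/b)$ in the sense of \eqref{F-2.5} (interpreting $bf(0^+)$, $af'(\infty)$, and $0f(0/0)=0$ as appropriate). Hence each summand converges to $bf(a/b)\,\Tr X^*P_aXQ_b$, and summing the finitely many terms yields \eqref{F-2.7}. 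For the sign of the limit: when $f$ is convex, $f(0^+)\in(-\infty,+\infty]$ and $f'(\infty)\in(-\infty,+\infty]$, so every term lies in $(-\infty,+\infty]$ and so does their finite sum; the concave case is symmetric via $-f$.

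The only genuine point requiring care — the ``hard part,'' though it is not very hard — is justifying that the elementary scalar limit \eqref{F-2.6} holds and that the termwise passage is legitimate even when some terms are $+\infty$. For $a,b>0$ this is just continuity of $f$ on $(0,\infty)$; for $a>0,b=0$ one writes $(b+\eps)f({a+\eps\over b+\eps})={a+\eps\over (a+\eps)/(b+\eps)}\,f({a+\eps\over b+\eps})$ and uses that ${f(x)\over x}\to f'(\infty)$ as $x\to\infty$ together with $a+\eps\to a$, handling the $f'(\infty)=+\infty$ case by the stated convention; for $a=0,b>0$ one uses right-continuity $f({a+\eps\over b+\eps})=f({\eps\over b+\eps})\to f(0^+)$; and for $a=b=0$ one notes $(b+\eps)f({a+\eps\over b+\eps})=\eps f(1)\to0$. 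Since \eqref{F-2.6} is already displayed in the text, I would simply invoke it. Finally, that the limit is \emph{monotone} in $\eps$ (hence exists without needing \eqref{F-2.6} as an independent input) follows from the joint convexity statement in Theorem~\ref{T-2.2}(IV) applied along the segment, but the direct termwise argument above is cleaner and self-contained, so that is the route I would take.
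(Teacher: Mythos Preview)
Your proof is correct and follows essentially the same route as the paper: write $S_f^X(\rho+\eps I\|\sigma+\eps I)$ via \eqref{F-2.4} as the finite sum $\sum_{a,b}(b+\eps)f\bigl({a+\eps\over b+\eps}\bigr)\Tr X^*P_aXQ_b$ and pass to the limit termwise using \eqref{F-2.6}. Your added observation that $\Tr X^*P_aXQ_b=\|P_aXQ_b\|_\HS^2\ge0$ is the reason the termwise limit is well-defined in $(-\infty,+\infty]$ even when some terms diverge, a point the paper leaves implicit; the closing aside about monotonicity via Theorem~\ref{T-2.2}(IV) is extraneous (and would be circular in the paper's logical order), but since you explicitly discard it, no harm done.
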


\begin{proof}
By \eqref{F-2.4} we write
\[
S_f^X(\rho+\eps I\|\sigma+\eps I)
=\sum_{a,b}(b+\eps)f\Bigl({a+\eps\over b+\eps}\Bigr)\Tr X^*P_aXQ_b.
\]
Hence the result follows from \eqref{F-2.6}.
\end{proof}

Now we extend the definition of quasi-entropies to all $\rho,\sigma\in\cBH^{++}$ as follows:

\begin{definition}\label{D-2.6}\rm
When $f$ is convex (or concave) on $(0,\infty)$, for every $\rho,\sigma\in\cBH^+$ and
$X\in\cBH$ we can define $S_f^X(\rho\|\sigma)$ by expression \eqref{F-2.7}. When
$\rho,\sigma\in\cBH^{++}$, this definition is compatible with Definition \ref{D-2.1} by
Lemma \ref{L-2.4}.
\end{definition}


We note that the \emph{standard $f$-divergence} $S_f(\rho\|\sigma)$ is the special case of
quasi-entropies (specialized to $X=I$) as
\[
S_f(\rho\|\sigma)=S_f^I(\rho\|\sigma),\qquad\rho,\sigma\in\cBH^+.
\]

Note that the expression
\[
\<X\sigma^{1/2},f(\Delta_{\rho,\sigma})(X\sigma^{1/2})\>_\HS
\]
makes sense for any $\rho,\sigma\in\cBH^+$ and any function $f$ on $[0,\infty)$ by regarding
$f(\Delta_{\rho,\sigma})$ as the functional calculus of $\Delta_{\rho,\sigma}$. The next lemma
gives the precise relation between the above expression and $S_f^X(\rho\|\sigma)$, which will be
useful in our later discussions. In fact, one can directly define $S_f^X(\rho\|\sigma)$ for general
$\rho,\sigma\in\cBH^+$ by \eqref{F-2.8}, extending \eqref{F-2.2} when
$\rho,\sigma\in\cBH^{++}$.

\begin{lemma}\label{L-2.7}
Let $f$ be a convex or concave function on $[0,\infty)$ with $f(0)=f(0^+)\in\bR$. For every
$\rho,\sigma\in\cBH^+$ and $X\in\cBH$,
\begin{align}\label{F-2.8}
S_f^X(\rho\|\sigma)
=\<X\sigma^{1/2},f(\Delta_{\rho,\sigma})(X\sigma^{1/2})\>_\HS
+f'(\infty)\Tr X^*\rho X(I-\sigma^0).
\end{align}
Hence, if $f'(\infty)=0$ or $\sigma^0=I$, then
$S_f^X(\rho\|\sigma)=\<X\sigma^{1/2},f(\Delta_{\rho,\sigma})(X\sigma^{1/2})\>_\HS$.
\end{lemma}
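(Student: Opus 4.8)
The plan is to prove \eqref{F-2.8} by a direct spectral computation, matching both sides against the defining sum \eqref{F-2.7} for $S_f^X(\rho\|\sigma)$ (Definition \ref{D-2.6}); write $\rho=\sum_a aP_a$, $\sigma=\sum_b bQ_b$ as in \eqref{F-2.1}.

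First I would diagonalize $\Delta_{\rho,\sigma}=L_\rho R_{\sigma^{-1}}$. Since $L_\rho=\sum_a aL_{P_a}$ and $R_{\sigma^{-1}}=\sum_{b>0}b^{-1}R_{Q_b}$ are commuting positive operators on $(\cBH,\<\cdot,\cdot\>_\HS)$, the $L_{P_a}R_{Q_b}$ with $a\in\Sp(\rho)$ and $b\in\Sp(\sigma)$, $b>0$, form a complete family of mutually orthogonal projections on $R_{\sigma^0}\cBH$, and on that subspace $\Delta_{\rho,\sigma}=\sum_{a,\,b>0}(a/b)L_{P_a}R_{Q_b}$, while $\Delta_{\rho,\sigma}$ annihilates $R_{I-\sigma^0}\cBH$. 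Hence functional calculus gives $f(\Delta_{\rho,\sigma})Y=f(a/b)\,Y$ whenever $Y=P_aYQ_b$ with $b>0$, the value for $a=0$ being $f(0)=f(0^+)$ --- this is exactly where the hypothesis $f(0)=f(0^+)\in\bR$ is used. Now $X\sigma^{1/2}=X\sigma^{1/2}\sigma^0$ lies in $R_{\sigma^0}\cBH$ and decomposes as $X\sigma^{1/2}=\sum_a\sum_{b>0}b^{1/2}P_aXQ_b$ (its component on $R_{I-\sigma^0}\cBH$ vanishes since $\sigma^{1/2}(I-\sigma^0)=0$). Applying the previous line and then the orthogonality identity $\<P_aXQ_b,P_{a'}XQ_{b'}\>_\HS=\delta_{aa'}\delta_{bb'}\Tr X^*P_aXQ_b$, I obtain
\[
\<X\sigma^{1/2},f(\Delta_{\rho,\sigma})(X\sigma^{1/2})\>_\HS
=\sum_{a\in\Sp(\rho)}\ \sum_{b\in\Sp(\sigma),\,b>0}bf\Bigl({a\over b}\Bigr)\Tr X^*P_aXQ_b,
\]
which is precisely the ``$b>0$'' portion of \eqref{F-2.7}, the $a=0$ terms agreeing with the convention $bf(0/b)=f(0^+)b$ of \eqref{F-2.5}.

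It remains to recover the ``$b=0$'' portion. Since $\rho=\sum_{a\in\Sp(\rho),\,a>0}aP_a$ and $I-\sigma^0$ is the eigenprojection $Q_0$ of $\sigma$ at $0$ (vanishing if $0\notin\Sp(\sigma)$), the correction term equals $\sum_{a\in\Sp(\rho),\,a>0}f'(\infty)a\,\Tr X^*P_aX(I-\sigma^0)$, which by the convention $bf(a/b)=f'(\infty)a$ for $b=0$, $a>0$, together with $0f(0/0)=0$ in \eqref{F-2.5}, is exactly the ``$b=0$'' portion of \eqref{F-2.7}. Adding the two contributions yields the full sum \eqref{F-2.7}, i.e.\ $S_f^X(\rho\|\sigma)$, which proves \eqref{F-2.8}; the last sentence of the lemma is then immediate, as the correction term vanishes when $f'(\infty)=0$ or $\sigma^0=I$. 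Throughout one uses, as in Lemma \ref{L-2.5}, that $f'(\infty)\in(-\infty,+\infty]$ for $f$ convex (resp.\ $\in[-\infty,+\infty)$ for $f$ concave) and $\Tr X^*P_aX(I-\sigma^0)=\|P_aX(I-\sigma^0)\|_\HS^2\ge0$, so that every term is well defined in the extended-real sense.

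There is no deep obstacle here --- it is bookkeeping against the conventions \eqref{F-2.5} --- but the one point that genuinely needs attention, and the reason \eqref{F-2.8} is not merely an identification of two ``obvious'' formulas, is that $\<X\sigma^{1/2},f(\Delta_{\rho,\sigma})(X\sigma^{1/2})\>_\HS$ is blind to the ``$b=0$'' behaviour: because $\sigma^{1/2}(I-\sigma^0)=0$, the vector $X\sigma^{1/2}$ has no component in the subspace $R_{I-\sigma^0}\cBH\subseteq\ker\Delta_{\rho,\sigma}$, so the value $f(0)$ that functional calculus assigns there never enters, whereas \eqref{F-2.7} picks up a term proportional to $f'(\infty)$ via \eqref{F-2.5}. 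That mismatch is precisely what the explicit summand $f'(\infty)\Tr X^*\rho X(I-\sigma^0)$ restores.
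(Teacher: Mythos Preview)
Your proof is correct and follows essentially the same approach as the paper: both compute the inner product $\<X\sigma^{1/2},f(\Delta_{\rho,\sigma})(X\sigma^{1/2})\>_\HS$ by explicit spectral decomposition of $\Delta_{\rho,\sigma}$ (yielding the $b>0$ terms of \eqref{F-2.7}), then identify the $b=0$ terms with the correction $f'(\infty)\Tr X^*\rho X(I-\sigma^0)$. Your added commentary on why the inner product is ``blind'' to the $b=0$ piece is a nice clarification of the point, but the underlying computation is the same as the paper's.
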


\begin{proof}
For $\rho,\sigma\in\cBH^+$ with the spectral decomposition \eqref{F-2.1}, by Definition
\ref{D-2.6} we have
\begin{equation}\label{F-2.9}
\begin{aligned}
S_f^X(\rho\|\sigma)&=\sum_{a,b}b\Bigl({a\over b}\Bigr)\Tr X^*P_aXQ_b \\
&=\sum_{a,b>0}bf\Bigl({a\over b}\Bigr)\Tr X^*P_aXQ_b
+\sum_{a=0,\,b>0}bf(0)\Tr X^*P_aXQ_b \\
&\qquad+\sum_{a>0,\,b=0}af'(\infty)\Tr X^*P_aXQ_b \\
&=\sum_{a,b>0}bf\Bigl({a\over b}\Bigr)\Tr X^*P_aXQ_b \\
&\qquad+f(0)\Tr X^*(I-\rho^0)X\sigma+f'(\infty)\Tr X^*\rho X(I-\sigma^0),
\end{aligned}
\end{equation}
where $bf(a/b)$ for $a=0$ or $b=0$ is given in \eqref{F-2.5}. On the other hand,
since the spectral decomposition of $\Delta_{\rho,\sigma}$ is
\[
\Delta_{\rho,\sigma}=\sum_{a,b>0}{a\over b}L_{P_a}R_{Q_b}
+0\sum_{\mbox{\scriptsize$a=0$ or $b=0$}}L_{P_a}R_{Q_b},
\]
one can write
\[
f(\Delta_{\rho,\sigma})=\sum_{a,b>0}f\Bigl({a\over b}\Bigr)L_{P_a}R_{Q_b}
+f(0)\sum_{\mbox{\scriptsize$a=0$ or $b=0$}}L_{P_a}R_{Q_b}
\]
and hence
\begin{equation}\label{F-2.10}
\begin{aligned}
&\<X\sigma^{1/2},f(\Delta_{\rho,\sigma})X\sigma^{1/2}\>_\HS \\
&\quad=\sum_{a,b>0}f\Bigl({a\over b}\Bigr)\Tr\sigma^{1/2}X^*P_aX\sigma^{1/2}Q_b
+f(0)\sum_{\mbox{\scriptsize$a=0$ or $b=0$}}\Tr\sigma^{1/2}X^*P_aX\sigma^{1/2}Q_b \\
&\quad=\sum_{a,b>0}bf\Bigl({a\over b}\Bigr)\Tr X^*P_aXQ_b
+f(0)\sum_{a=0,\,b>0}b\Tr X^*P_aXQ_b \\
&\quad=\sum_{a,b>0}bf\Bigl({a\over b}\Bigr)\Tr X^*P_aXQ_b
+f(0)\Tr X^*(I-\rho^0)X\sigma.
\end{aligned}
\end{equation}
Therefore, \eqref{F-2.8} follows from \eqref{F-2.9} and \eqref{F-2.10}. The latter
assertion is now immediate.
\end{proof}

Concerning the term $\Tr X^*\rho X(I-\sigma^0)$ in \eqref{F-2.8}, another simple lemma is
included here for later use.

\begin{lemma}\label{L-2.8}
For any $\rho,\sigma\in\cB(\cH)^+$ and $X\in\cBH$, the following conditions are equivalent:
\begin{itemize}
\item[(a)] $\Tr X^*\rho X(I-\sigma^0)=0$;
\item[(b)] $\rho^0X(I-\sigma^0)=0$;
\item[(c)] $\sigma^0X^*\rho X\sigma^0=X^*\rho X$.
\end{itemize}
\end{lemma}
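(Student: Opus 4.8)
The plan is to prove the three conditions in Lemma \ref{L-2.8} equivalent by a cyclic chain, exploiting positivity of $\rho$ and the fact that $I-\sigma^0$ is a projection. First I would write $\rho = \rho^{1/2}\rho^{1/2}$ with $\rho^{1/2}\in\cBH^+$, so that
\[
\Tr X^*\rho X(I-\sigma^0)=\Tr(I-\sigma^0)X^*\rho^{1/2}\cdot\rho^{1/2}X(I-\sigma^0)
=\Tr\bigl(\rho^{1/2}X(I-\sigma^0)\bigr)^*\bigl(\rho^{1/2}X(I-\sigma^0)\bigr),
\]
using that $I-\sigma^0$ is a self-adjoint idempotent and cyclicity of the trace. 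The right-hand side is $\|\rho^{1/2}X(I-\sigma^0)\|_{\HS}^2$, so (a) holds if and only if $\rho^{1/2}X(I-\sigma^0)=0$. Multiplying on the left by $\rho^{1/2}$ (and noting $\rho^0\rho^{1/2}=\rho^{1/2}$, while conversely $\rho^{1/2}X(I-\sigma^0)=0$ forces $\rho X(I-\sigma^0)=0$, hence $\rho^0 X(I-\sigma^0)=0$ since $\rho^0$ is the strong limit of functions of $\rho$ vanishing at $0$) shows this is equivalent to (b). I would present the equivalence (a)$\iff$(b) through this Hilbert--Schmidt norm computation, which is the cleanest route.

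Next I would handle (b)$\iff$(c). Starting from (b), $\rho^0 X(I-\sigma^0)=0$ means $\rho^0 X = \rho^0 X\sigma^0$. Since $\rho = \rho\rho^0 = \rho^0\rho$, we get $\rho X = \rho\rho^0 X = \rho\rho^0 X\sigma^0 = \rho X\sigma^0$, and taking adjoints together with $\rho=\rho^*$ gives $X^*\rho = \sigma^0 X^*\rho$ as well; combining, $X^*\rho X = \sigma^0 X^*\rho X\sigma^0$, which is (c). Conversely, assume (c): $X^*\rho X = \sigma^0 X^*\rho X\sigma^0$. Multiply both sides on the left and right by $I-\sigma^0$; since $(I-\sigma^0)\sigma^0=0$, the right-hand side vanishes, so $(I-\sigma^0)X^*\rho X(I-\sigma^0)=0$, i.e. $\|\rho^{1/2}X(I-\sigma^0)\|_{\HS}^2=0$ again, giving $\rho^{1/2}X(I-\sigma^0)=0$ and hence (b) as before. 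This closes the cycle.

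The argument is essentially routine, the only mild subtlety being the passage between $\rho$, $\rho^{1/2}$ and the support projection $\rho^0$ — namely that $\rho^{1/2}Y=0$, $\rho Y=0$ and $\rho^0 Y=0$ are all equivalent for $Y\in\cBH$, which follows because $\rho^0$, $\rho$ and $\rho^{1/2}$ all have the same kernel (the generalized inverse $\rho^{-1}$ satisfies $\rho^{-1}\rho=\rho^0$, etc.). I would state this once at the outset and use it freely. I do not anticipate any genuine obstacle here; the main care is simply to route everything through the Hilbert--Schmidt positivity identity $\Tr Z^*Z = \|Z\|_{\HS}^2\ge 0$ with $Z=\rho^{1/2}X(I-\sigma^0)$, which simultaneously gives the equivalences with (a) and with (c) and keeps the proof short.
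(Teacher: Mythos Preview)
Your proof is correct and follows essentially the same approach as the paper: both hinge on rewriting $\Tr X^*\rho X(I-\sigma^0)=\|\rho^{1/2}X(I-\sigma^0)\|_\HS^2$ and on the fact that $\rho^{1/2}Y=0\iff\rho^0Y=0$. The only cosmetic difference is the routing of implications---the paper runs the cycle (a)$\implies$(b)$\implies$(c)$\implies$(a) (with (c)$\implies$(a) via the trace identity $\Tr X^*\rho X(I-\sigma^0)=\Tr X^*\rho X-\Tr\sigma^0X^*\rho X\sigma^0$), whereas you prove (a)$\iff$(b) and (b)$\iff$(c) directly; the underlying computations are the same.
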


\begin{proof}
Assume (a). Since $\Tr(\rho^{1/2}(I-\sigma^0)X)^*(\rho^{1/2}X(I-\sigma^0))=0$, one has
$\rho^{1/2}X(I-\sigma^0)=0$ and hence (b) follows. If (b) holds, then 
\[
\sigma^0X^*\rho X\sigma^0=(\sigma^0X^*\rho^0)\rho(\rho^0 X\sigma^0)
=(X^*\rho^0)\rho(\rho^0 X)=X^*\rho X.
\]
Hence (b)$\implies$(c). Finally, since
\[
\Tr X^*\rho X(I-\sigma^0)=\Tr X^*\rho X-\Tr\sigma^0X^*\rho X\sigma^0,
\]
(c)$\implies$(a) follows.
\end{proof}

\begin{remark}\label{R-2.9}\rm
It might be instructive to make a historical comment on the definition of quasi-entropies. Petz
\cite{Pe1} introduced the quasi-entropy $S_f^X(\rho\|\sigma)$ with \eqref{F-2.2}, while he
restricted $\rho,\sigma$ to operators in $\cBH^{++}$. To extend the definition to all
$\rho,\sigma\in\cBH^+$, we take the limit in Lemma \ref{L-2.5}, and as it turns out,
$S_f^X(\rho\|\sigma)$ has the expression in the RHS of \eqref{F-2.7} in terms of the spectral
decompositions of $\rho,\sigma$ with specification \eqref{F-2.5}. This way of definition is
compatible with that of standard $f$-divergences in \cite{HMPB,HM}. On the other hand,
$S_f^X(\rho\|\sigma)$ was defined in \cite{Sh,Ve} by \eqref{F-2.2} directly for all
$\rho,\sigma\in\cBH^+$, which does not seem a rigorous way. Although the expression in the
RHS of \eqref{F-2.7} was mentioned in \cite{Sh,Ve}, the boundary values in \eqref{F-2.5} when
$a=0$ or $b=0$ are not explicitly specified there. Here, we stress the boundary term
$f'(\infty)\Tr X^*\rho X(I-\sigma^0)$ that appears in \eqref{F-2.8}, which will play a crucial role in
our discussions below. Without this term, for instance, $S_{x\log x}^I(\rho\|\sigma)$ cannot
coincide with the relative entropy $D(\rho\|\sigma)$ in \eqref{F-1.1}.
\end{remark}

\begin{remark}\label{R-2.10}\rm
One can extend $J_f(\rho,\sigma)$ in \eqref{F-2.4} and Theorem \ref{T-2.2} for
$\rho,\sigma\in\cBH^{++}$ to all $\rho,\sigma\in\cBH^+$ as
\[
J_f(\rho,\sigma):=\sum_{a,b}b\Bigl({a\over b}\Bigr)L_{P_a}R_{Q_b}
\]
with convention \eqref{F-2.5}, which is though no longer bounded, but lower semibounded when
$f$ is convex on $(0,\infty)$. Then one may regard expression \eqref{F-2.7} as
$\<X,J_f(\rho,\sigma)X\>_\HS$ ($\in(-\infty,+\infty]$). It might be more natural to consider the
operator perspective of $\rho,\sigma$ (instead of commuting $L_\rho,R_\sigma$). Indeed, for
any convex function $f$ on $(0,\infty)$, the extended operator perspective, denoted by
$\phi_f(\rho,\sigma)$, was recently introduced in \cite[Sec.~7]{HUW} for every
$\rho,\sigma\in\cBH^+$ in the method of the Pusz--Woronowicz functional calculus. Then one
can think of another type of quasi-entropy given by
$\<X,\phi_f(\rho,\sigma)X\>_\HS=\Tr X^*\phi(\rho,\sigma)X$ ($\in(-\infty,+\infty]$). When $f$ is
operator convex and $X=I$, we note that $\Tr\phi_f(\rho,\sigma)$ becomes the so-called
\emph{maximal $f$-divergence} studied in \cite{Ma,HM,Hi2}, which is different from the standard
$f$-divergence $S_f(\rho\|\sigma)$.
\end{remark}

\subsection{Monotonicity properties of quasi-entropies}\label{Sec-2.2}

In this subsection we improve, based on the definition described in Sec.~\ref{Sec-2.1}, the
monotonicity properties (DPIs) of quasi-entropies, as well as joint convexity/concavity and lower
semicontinuity, formerly shown in \cite{Pe1}.

\begin{prop}\label{P-2.11}
Assume that $f$ is operator convex (resp., operator concave) on $(0,\infty)$. Then for every
$X\in\cBH$, $(\rho,\sigma)\mapsto S_f^X(\rho\|\sigma)$ is jointly convex (resp., jointly
concave) and jointly subadditive (resp., jointly superadditive) on $\cBH^+\times \cBH^+$.
\end{prop}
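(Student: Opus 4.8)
The plan is to reduce the statement for general $\rho,\sigma\in\cBH^+$ to the invertible case, where the desired joint convexity (resp.\ concavity) and joint subadditivity (resp.\ superadditivity) are already available from the spectral/perspective machinery and from Theorem \ref{T-2.2}. Assume first $f$ is operator convex on $(0,\infty)$; the operator concave case follows by replacing $f$ with $-f$ and noting $S_{-f}^X=-S_f^X$. First I would record that for $\rho,\sigma\in\cBH^{++}$ the quasi-entropy is $S_f^X(\rho\|\sigma)=\<X,J_f(\rho,\sigma)X\>_\HS$ with $J_f(\rho,\sigma)$ the operator perspective of the commuting pair $L_\rho,R_\sigma$; joint convexity of $(\rho,\sigma)\mapsto\<X,J_f(\rho,\sigma)X\>_\HS$ on $\cBH^{++}\times\cBH^{++}$ is the classical perspective-convexity fact for operator convex $f$ (Effros \cite{Eff}, Ebadian--Nikoufar--Gordji \cite{ENG}), and in fact is essentially the equivalence (I)$\iff$(IV)-type statement behind Theorem \ref{T-2.2}. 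For joint subadditivity in the invertible case, I would use positive homogeneity of $J_f$ together with joint convexity: $S_f^X(\rho_1+\rho_2\|\sigma_1+\sigma_2)=2\,S_f^X\bigl(\tfrac{\rho_1+\rho_2}{2}\big\|\tfrac{\sigma_1+\sigma_2}{2}\bigr)\le S_f^X(\rho_1\|\sigma_1)+S_f^X(\rho_2\|\sigma_2)$.

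Next I would pass to the boundary. By Lemma \ref{L-2.5}, for $f$ convex on $(0,\infty)$ and any $\rho,\sigma\in\cBH^+$ we have $S_f^X(\rho\|\sigma)=\lim_{\eps\searrow0}S_f^X(\rho+\eps I\|\sigma+\eps I)$, the limit existing in $(-\infty,+\infty]$ and equal to the explicit spectral sum \eqref{F-2.7}. Now for $\rho_1,\rho_2,\sigma_1,\sigma_2\in\cBH^+$ and $\lambda\in(0,1)$, write $\rho_i^\eps:=\rho_i+\eps I\in\cBH^{++}$, $\sigma_i^\eps:=\sigma_i+\eps I\in\cBH^{++}$, apply the invertible-case joint convexity to $\rho_i^\eps,\sigma_i^\eps$ (noting $\lambda\rho_1^\eps+(1-\lambda)\rho_2^\eps=(\lambda\rho_1+(1-\lambda)\rho_2)+\eps I$, likewise for $\sigma$), and let $\eps\searrow0$; the left side converges to $S_f^X(\lambda\rho_1+(1-\lambda)\rho_2\|\lambda\sigma_1+(1-\lambda)\sigma_2)$ and each term on the right converges to $S_f^X(\rho_i\|\sigma_i)$, all by Lemma \ref{L-2.5}. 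This gives joint convexity on $\cBH^+\times\cBH^+$. Joint subadditivity follows identically from the invertible-case subadditivity, using $(\rho_1+\rho_2)+2\eps I=(\rho_1+\eps I)+(\rho_2+\eps I)$ and again Lemma \ref{L-2.5} to take $\eps\searrow0$.

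The one point that needs care — and which I expect to be the main (if modest) obstacle — is the handling of $+\infty$ values in the limiting argument: since $S_f^X(\rho_i\|\sigma_i)$ can equal $+\infty$ (precisely because of the boundary term $f'(\infty)\Tr X^*\rho_i X(I-\sigma_i^0)$ isolated in Lemma \ref{L-2.7}), one must check that the inequalities survive the limit in the extended-real sense. This is routine: if some $S_f^X(\rho_i\|\sigma_i)=+\infty$ the convex/subadditive inequalities hold trivially, and if all are finite then the monotone (by convexity in $\eps$, or just by the explicit $\eps$-expression) convergence from Lemma \ref{L-2.5} lets one pass to the limit term by term. Finally, the operator concave/superadditive case is obtained by applying everything already proved to $-f\in\OC(0,\infty)$ and using $S_{-f}^X(\rho\|\sigma)=-S_f^X(\rho\|\sigma)$, with $f'(\infty)$, $f(0^+)$ replaced by $(-f)'(\infty)=-f'(\infty)\in[-\infty,+\infty)$ etc., so that all values lie in $[-\infty,+\infty)$ and the reversed inequalities hold.
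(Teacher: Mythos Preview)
Your proposal is correct and follows essentially the same approach as the paper: establish joint convexity/subadditivity on $\cBH^{++}\times\cBH^{++}$ via the joint convexity of the operator perspective $J_f$ (citing \cite{Eff,ENG}), then extend to $\cBH^+\times\cBH^+$ by the $\eps$-regularization limit of Lemma \ref{L-2.5}/Definition \ref{D-2.6}, using the identities $(\rho_1+\rho_2)+2\eps I=(\rho_1+\eps I)+(\rho_2+\eps I)$ and $\lambda\rho_1^\eps+(1-\lambda)\rho_2^\eps=(\lambda\rho_1+(1-\lambda)\rho_2)+\eps I$. The only cosmetic difference is that the paper notes positive homogeneity of $S_f^X$ at the outset and therefore proves only subadditivity (convexity then being automatic), whereas you prove convexity first and derive subadditivity from it via homogeneity; your extra paragraph on handling $+\infty$ values is careful but not strictly needed, since Lemma \ref{L-2.5} guarantees the limits exist in $(-\infty,+\infty]$ and inequalities pass to such limits directly.
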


\begin{proof}
Assume that $f\in\OC(0,\infty)$. From Definitions \ref{D-2.1} and \ref{D-2.6}
it is easily verified that $S_f^X(\rho\|\sigma)$ is positively homogeneous in $(\rho,\sigma)$, i.e.,
$S_f^X(\lambda\rho\|\lambda\sigma)=\lambda S_f^X(\rho\|\sigma)$ for all $\lambda\ge0$ and
$\rho,\sigma\in\cBH^+$. So we may only show joint subadditivity. Let
$\rho_i,\sigma_i\in\cBH^{++}$, $i=1,2$. Since the operator perspective for an operator convex
function is jointly convex (equivalently, jointly subadditive) (see \cite{Eff,ENG}), we have
\[
f(L_{\rho_1+\rho_2}R_{\sigma_1+\sigma_2}^{-1})R_{\sigma_1+\sigma_2}
\le f(L_{\rho_1}R_{\sigma_1}^{-1})R_{\sigma_1}
+f(L_{\rho_2}R_{\sigma_2}^{-1})R_{\sigma_2},
\]
and hence
\[
S_f^X(\rho_1+\rho_2\|\sigma_1+\sigma_2)
\le S_f^X(\rho_1\|\sigma_1)+S_f^X(\rho_2\|\sigma_2).
\]
For general $\rho_i,\sigma_i\in\cBH^+$, by Definition \ref{D-2.6} we have
\begin{align*}
S_f^X(\rho_1+\rho_2\|\sigma_1+\sigma_2)
&=\lim_{\eps\searrow0}S_f^X(\rho_1+\rho_2+2\eps I\|\sigma_1+\sigma_2+2\eps I) \\
&\le\lim_{\eps\searrow0}\bigl[S_f^X(\rho_1+\eps I\|\sigma_1+\eps I)
+S_f^X(\rho_2+\eps I\|\sigma_2+\eps I)\bigr] \\
&=S_f^X(\rho_1\|\sigma_1)+S_f^X(\rho_2\|\sigma_2).
\end{align*}
The proof for operator concave $f$ is similar (or just apply the above case to $-f$).
\end{proof}

\begin{lemma}\label{L-2.12}
Assume that $f\in\OC(0,\infty)$ with $f(1)=0$. Then for every $\rho,\sigma\in\cBH^+$
and $X\in\cBH$,
\[
S_f^X(\rho\|\sigma)=\sup_{\eps>0}S_f^X(\rho+\eps I\|\sigma+\eps I).
\]
\end{lemma}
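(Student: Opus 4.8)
The plan is to combine Lemma~\ref{L-2.5}, which already identifies the relevant limit, with the joint subadditivity of $S_f^X$ from Proposition~\ref{P-2.11}. Since $f\in\OC(0,\infty)$ is in particular convex on $(0,\infty)$, Lemma~\ref{L-2.5} shows that $\lim_{\eps\searrow0}S_f^X(\rho+\eps I\|\sigma+\eps I)$ exists in $(-\infty,+\infty]$, and by Definition~\ref{D-2.6} this limit is exactly $S_f^X(\rho\|\sigma)$. So the whole content of the lemma reduces to showing that $\eps\mapsto S_f^X(\rho+\eps I\|\sigma+\eps I)$ is nonincreasing on $(0,\infty)$; once that is known, its supremum over $\eps>0$ agrees with its limit as $\eps\searrow0$, and we are done.

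To get the monotonicity, fix $0<\eps_2<\eps_1$ and write $\rho+\eps_1I=(\rho+\eps_2I)+(\eps_1-\eps_2)I$ and likewise $\sigma+\eps_1I=(\sigma+\eps_2I)+(\eps_1-\eps_2)I$. Joint subadditivity from Proposition~\ref{P-2.11} (which applies because $f$ is operator convex) gives
\[
S_f^X(\rho+\eps_1I\|\sigma+\eps_1I)
\le S_f^X(\rho+\eps_2I\|\sigma+\eps_2I)+S_f^X\bigl((\eps_1-\eps_2)I\,\big\|\,(\eps_1-\eps_2)I\bigr).
\]
By positive homogeneity of $S_f^X$ the last term equals $(\eps_1-\eps_2)\,S_f^X(I\|I)$, and from \eqref{F-2.4} with $\rho=\sigma=I$ (so the only eigenvalue is $1$ with projection $I$) one computes $S_f^X(I\|I)=f(1)\,\Tr X^*X=0$ since $f(1)=0$. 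Hence $S_f^X(\rho+\eps_1I\|\sigma+\eps_1I)\le S_f^X(\rho+\eps_2I\|\sigma+\eps_2I)$, the desired monotonicity.

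Combining, $\sup_{\eps>0}S_f^X(\rho+\eps I\|\sigma+\eps I)=\lim_{\eps\searrow0}S_f^X(\rho+\eps I\|\sigma+\eps I)=S_f^X(\rho\|\sigma)$. I do not expect any real obstacle here; the only points needing a little care are that $S_f^X(\rho+\eps I\|\sigma+\eps I)$ is a genuine real number for every $\eps>0$ (so the subadditivity inequality is an honest inequality of reals and the cancellation of the $(\eps_1-\eps_2)S_f^X(I\|I)$ term is legitimate), and that the supremum may be $+\infty$, which is consistent with the value $S_f^X(\rho\|\sigma)\in(-\infty,+\infty]$ allowed by Definition~\ref{D-2.6}.
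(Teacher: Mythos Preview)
Your proof is correct and follows essentially the same approach as the paper: both establish that $S_f^X(I\|I)=0$ from $f(1)=0$, then use the joint subadditivity of Proposition~\ref{P-2.11} to show $\eps\mapsto S_f^X(\rho+\eps I\|\sigma+\eps I)$ is nonincreasing, so that the limit (which is $S_f^X(\rho\|\sigma)$ by Definition~\ref{D-2.6}) coincides with the supremum.
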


\begin{proof}
Note that $S_f^X(I\|I)=0$ from the assumption $f(1)=0$. For $0<\eps'<\eps$ we have by
Proposition \ref{P-2.11}
\[
S_f^X(\rho+\eps I\|\sigma+\eps I)
\le S_f^X(\rho+\eps'I\|\sigma+\eps'I)+(\eps-\eps')S_f^X(I\|I)
=S_f^X(\rho+\eps'I\|\sigma+\eps'I).
\]
Hence
\[
S_f^X(\rho\|\sigma)=\lim_{\eps\searrow0}S_f^X(\rho+\eps I\|\sigma+\eps I)
=\sup_{\eps>0}S_f^X(\rho+\eps I\|\sigma+\eps I),
\]
as desired.
\end{proof}

\begin{prop}\label{P-2.13}
If $f\in\OC(0,\infty)$, then $(\rho,\sigma,X)\mapsto S_f^X(\rho\|\sigma)$ is jointly lower
continuous on $\cBH^+\times \cBH^+\times \cBH$.
\end{prop}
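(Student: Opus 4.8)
The plan is to reduce to the normalized case $f(1)=0$ and then to realize $S_f^X(\rho\|\sigma)$ as an increasing limit (equivalently, a pointwise supremum) of jointly continuous functions, whence joint lower semicontinuity will be automatic.

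First I would peel off the constant term. Put $g:=f-f(1)$, so that $g\in\OC(0,\infty)$ and $g(1)=0$. A direct check from the defining expression \eqref{F-2.7} together with the boundary conventions \eqref{F-2.5} (for the constant function $c$ one has $c(0^+)=c$ and $c'(\infty)=0$) gives
\[
S_f^X(\rho\|\sigma)=S_g^X(\rho\|\sigma)+f(1)\,\Tr X^*X\sigma,\qquad \rho,\sigma\in\cBH^+,\ X\in\cBH.
\]
Since $(\rho,\sigma,X)\mapsto f(1)\,\Tr X^*X\sigma$ is jointly continuous (indeed, linear in $\sigma$ and sesquilinear in $X$), it is enough to prove the statement for $g$; in other words, I may assume $f(1)=0$ from the outset.

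Next, with $f(1)=0$, Lemma \ref{L-2.12} applies and yields, for all $\rho,\sigma\in\cBH^+$ and $X\in\cBH$,
\[
S_f^X(\rho\|\sigma)=\sup_{\eps>0}S_f^X(\rho+\eps I\,\|\,\sigma+\eps I).
\]
For each fixed $\eps>0$, the map $(\rho,\sigma,X)\mapsto S_f^X(\rho+\eps I\|\sigma+\eps I)$ is the composition of the continuous shift $(\rho,\sigma)\mapsto(\rho+\eps I,\sigma+\eps I)$, which maps $\cBH^+\times\cBH^+$ into $\cBH^{++}\times\cBH^{++}$, with the map $(\rho',\sigma',X)\mapsto S_f^X(\rho'\|\sigma')$ on $\cBH^{++}\times\cBH^{++}\times\cBH$; the latter is jointly continuous by Lemma \ref{L-2.4}, since $f$ is operator convex on $(0,\infty)$ and hence continuous there. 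Thus each $(\rho,\sigma,X)\mapsto S_f^X(\rho+\eps I\|\sigma+\eps I)$ is jointly continuous on $\cBH^+\times\cBH^+\times\cBH$.

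Finally, $S_f^X(\rho\|\sigma)$ is the pointwise supremum over $\eps>0$ of this family of jointly continuous $\bR$-valued functions (and, by the monotonicity obtained in the proof of Lemma \ref{L-2.12}, it is in fact the increasing limit as $\eps\searrow0$), so it is jointly lower semicontinuous on $\cBH^+\times\cBH^+\times\cBH$ with values in $(-\infty,+\infty]$. I do not expect a genuine obstacle here once the lemmas of Sec.~\ref{Sec-2.1} are available; the only point that needs care is that one cannot simply pass to the limit $\eps\searrow0$ and deduce lower semicontinuity without knowing that $\eps\mapsto S_f^X(\rho+\eps I\|\sigma+\eps I)$ is monotone — which is precisely why the normalization $f(1)=0$ (hence the use of Lemma \ref{L-2.12}) is carried out first.
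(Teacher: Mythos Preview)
Your proof is correct and follows essentially the same approach as the paper's own argument: normalize to $f(1)=0$ by peeling off the continuous term $f(1)\,\Tr\sigma X^*X$, invoke Lemma~\ref{L-2.12} to write $S_f^X(\rho\|\sigma)$ as a supremum over $\eps>0$ of the shifted quantities, and then use Lemma~\ref{L-2.4} for the joint continuity of each term. Your additional remark explaining why the normalization is needed (to secure the monotonicity in $\eps$ underlying Lemma~\ref{L-2.12}) is a nice clarification the paper leaves implicit.
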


\begin{proof}
Note that
\begin{align*}
S_f^X(\rho\|\sigma)&=S_{f-f(1)}^X(\rho\|\sigma)+S_{f(1)}^X(\rho\|\sigma) \\
&=\sup_{\eps>0}S_{f-f(1)}^X(\rho+\eps I\|\sigma+\eps I)+f(1)\Tr\sigma X^*X
\end{align*}
by Lemma \ref{L-2.12}. It follows from Lemma \ref{L-2.4} that
$(\rho,\sigma,X)\mapsto S_f^X(\rho+\eps I\|\sigma+\eps I)$ is jointly continuous on
$\cBH^+\times \cBH^+\times \cBH$. Also $(\sigma,X)\mapsto\Tr\sigma X^*X$ is jointly
continuous on $\cBH^+\times \cBH$. Hence the result follows.
\end{proof}

The next theorem improves \cite[Theorem 4]{Pe1}.

\begin{thm}\label{T-2.14}
Assume that $h\in\OM_+[0,\infty)$. Let $\Phi:\cBH\to \cBK$ be a trace-preserving map such that
$\Phi^*$ is a Schwarz map. Then for every $\rho,\sigma\in\cBH^+$ and $K\in\cBK$,
\begin{align}\label{F-2.11}
S_h^K(\Phi(\rho)\|\Phi(\sigma))\ge S_h^{\Phi^*(K)}(\rho\|\sigma).
\end{align}
\end{thm}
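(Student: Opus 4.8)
The plan is to deduce \eqref{F-2.11} first in the case where all four operators $\rho,\sigma,\Phi(\rho),\Phi(\sigma)$ are invertible, and then to remove the invertibility assumptions in two approximation steps. For the invertible case I would simply note that $h\in\OM_+[0,\infty)$ is precisely condition (I) of Theorem \ref{T-2.2}, so condition (III$'$) holds as well: for every trace-preserving $\Phi$ with $\Phi^*$ a Schwarz map and all $\rho,\sigma\in\cBH^{++}$ with $\Phi(\rho),\Phi(\sigma)\in\cBK^{++}$ one has $\Phi J_h(\rho,\sigma)\Phi^*\le J_h(\Phi(\rho),\Phi(\sigma))$ on $\cBK$. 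Pairing this operator inequality with a vector $K\in\cBK$ and using $\<K,\Phi J_h(\rho,\sigma)\Phi^*(K)\>_\HS=\<\Phi^*(K),J_h(\rho,\sigma)\Phi^*(K)\>_\HS$ together with \eqref{F-2.4} yields at once $S_h^K(\Phi(\rho)\|\Phi(\sigma))\ge S_h^{\Phi^*(K)}(\rho\|\sigma)$ in this fully invertible case. (Here $S_h^K(\Phi(\rho)\|\Phi(\sigma))$ is meaningful because $\Phi^*$ being a Schwarz map forces $\Phi^*$, hence $\Phi$, to be positive.)

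To drop the requirement that $\Phi(\rho),\Phi(\sigma)$ be invertible, I would perturb $\Phi$ rather than $\rho,\sigma$. For $\delta\in(0,1)$ set $\Phi_\delta:=(1-\delta)\Phi+\delta\,\Xi$ where $\Xi(X):=(\dim\cK)^{-1}(\Tr X)\,I_\cK$, whose adjoint is $\Xi^*(Y)=(\dim\cK)^{-1}(\Tr Y)\,I_\cH$. Then $\Phi_\delta$ is trace-preserving, it maps $\cBH^{++}$ into $\cBK^{++}$ (since $\Phi_\delta(\rho)\ge\delta(\dim\cK)^{-1}(\Tr\rho)\,I_\cK>0$ when $\Tr\rho>0$), and $\Phi_\delta^*=(1-\delta)\Phi^*+\delta\,\Xi^*$ is again a Schwarz map: $\Xi^*$ is a Schwarz map by the Cauchy--Schwarz bound $|\Tr Y|^2\le(\dim\cK)\Tr Y^*Y$, and a convex combination of Schwarz maps is a Schwarz map because $(\lambda A+(1-\lambda)B)^*(\lambda A+(1-\lambda)B)\le\lambda A^*A+(1-\lambda)B^*B$ for $A,B\in\cBK$, $\lambda\in[0,1]$. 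Applying the invertible case to $\Phi_\delta$ gives $S_h^K(\Phi_\delta(\rho)\|\Phi_\delta(\sigma))\ge S_h^{\Phi_\delta^*(K)}(\rho\|\sigma)$ for all $\rho,\sigma\in\cBH^{++}$ and $K\in\cBK$, and then I would let $\delta\searrow0$. On the right-hand side $X\mapsto S_h^X(\rho\|\sigma)=\<X,J_h(\rho,\sigma)X\>_\HS$ is continuous for fixed $\rho,\sigma\in\cBH^{++}$, so it converges to $S_h^{\Phi^*(K)}(\rho\|\sigma)$; on the left-hand side, since $h$ is operator concave the function $-h$ lies in $\OC(0,\infty)$, so $S_h^K=-S_{-h}^K$ is jointly upper semicontinuous by Proposition \ref{P-2.13}, whence $\limsup_{\delta\searrow0}S_h^K(\Phi_\delta(\rho)\|\Phi_\delta(\sigma))\le S_h^K(\Phi(\rho)\|\Phi(\sigma))$. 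Combining the two bounds gives \eqref{F-2.11} for all $\rho,\sigma\in\cBH^{++}$.

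Finally, for general $\rho,\sigma\in\cBH^+$ I would apply the previous step to $\rho+\eps I,\sigma+\eps I\in\cBH^{++}$ and let $\eps\searrow0$: the left-hand side is again controlled by upper semicontinuity of $S_h^K$ (noting $\Phi(\rho+\eps I)=\Phi(\rho)+\eps\Phi(I)\to\Phi(\rho)$), while by Lemma \ref{L-2.5} the right-hand side $S_h^{\Phi^*(K)}(\rho+\eps I\|\sigma+\eps I)$ converges to $S_h^{\Phi^*(K)}(\rho\|\sigma)$, which yields \eqref{F-2.11}. I expect the genuine obstacle to be the middle step: Theorem \ref{T-2.2} cannot be applied to $\Phi$ itself once $\Phi$ fails to preserve invertibility, and repairing this requires both the verification that the perturbed map $\Phi_\delta^*$ remains a Schwarz map and the bookkeeping of semicontinuity in the correct direction — upper semicontinuity for the left-hand sides, where the limiting operators may degenerate, against honest limits for the right-hand sides, where $\rho,\sigma$ stay invertible.
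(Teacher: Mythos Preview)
Your proposal is correct and follows essentially the same approach as the paper. The paper likewise invokes Theorem \ref{T-2.2}\,(III$'$) for the fully invertible case, perturbs $\Phi$ via the same convex combination with the tracial state map (the paper writes $\Psi_\delta:=(1-\delta)\Phi^*+\delta\,\tau(\cdot)I_\cH$ and verifies the Schwarz property by the direct computation $\delta(1-\delta)[\Phi^*(Y)-\tau(Y)I]^*[\Phi^*(Y)-\tau(Y)I]\ge0$, which is the unpacked form of your convexity argument for $X\mapsto X^*X$), and then passes to the limit using the upper semicontinuity of $S_h^K$ from Proposition \ref{P-2.13} on the left and continuity (Lemma \ref{L-2.4}) / Definition \ref{D-2.6} on the right. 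The only organizational difference is that the paper treats the $\delta$- and $\eps$-perturbations simultaneously (applying the invertible case to $\Phi_\delta$ and $\rho+\eps I,\sigma+\eps I$ at once, then letting $\delta\searrow0$ followed by $\eps\searrow0$), whereas you split this into two separate reduction steps; the substance is identical.
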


\begin{proof}
In the case that $\rho,\sigma\in\cBH^{++}$ and $\Phi(\rho),\Phi(\sigma)\in\cBK^{++}$, the
result is included in Theorem \ref{T-2.2} (as (I)$\implies$(III$'$)). Let now $\rho,\sigma\in\cBH^+$
be arbitrary. For any $\delta\in(0,1)$ define a unital map $\Psi_\delta:\cBK\to \cBH$ by
\[
\Psi_\delta(Y):=(1-\delta)\Phi^*(Y)+\delta\tau(Y)I_\cH,\qquad Y\in\cBK,
\]
where $\tau:=(\dim\cK)^{-1}\Tr$ is the normalized trace on $\cBK$. Note that $\Psi_\delta$ is a
Schwarz map since
\begin{align*}
\Psi_\delta(Y^*Y)-\Psi_\delta(Y)^*\Psi_\delta(Y)
&\ge(1-\delta)\Phi^*(Y)^*\Phi^*(Y)+\delta|\tau(Y)|^2I
-\Psi_\delta(Y)^*\Psi_\delta(Y) \\
&=\delta(1-\delta)[\Phi^*(Y)-\tau(Y)I]^*[\Phi^*(Y)-\tau(Y)I]\ge0.
\end{align*}
Hence $\Phi_\delta:=\Psi_\delta^*:\cBH\to \cBK$ is a trace-preserving map such that
$\Phi_\delta^*=\Psi_\delta$ is a Schwarz map. Moreover, since the dual map of
$Y\in\cBK\mapsto\tau(Y)I_\cH$ is $X\in\cBH\mapsto(\dim\cK)^{-1}(\Tr X)I_\cK$,
$\Phi_\delta(\rho+\eps I)$ and $\Phi_\delta(\sigma+\eps I)$ are in $\cBK^{++}$
for any $\eps>0$. Therefore, from the above case we see that
\[
S_h^K(\Phi_\delta(\rho+\eps I)\|\Phi_\delta(\sigma+\eps I))
\ge S_h^{\Phi_\delta^*(K)}(\rho+\eps I\|\sigma+\eps I).
\]
By Proposition \ref{P-2.13} applied to operator convex $f=-h$ and Lemma \ref{L-2.4} we have
\begin{align*}
S_h^K(\Phi(\rho+\eps I)\|\Phi(\sigma+\eps I))
&\ge\limsup_{\delta\searrow0}
S_h^K(\Phi_\delta(\rho+\eps I)\|\Phi_\delta(\sigma+\eps I)) \\
&\ge\lim_{\delta\searrow0}S_h^{\Phi_\delta^*(K)}(\rho+\eps I\|\sigma+\eps I) \\
&=S_h^{\Phi^*(K)}(\rho+\eps I\|\sigma+\eps I).
\end{align*}
By Proposition \ref{P-2.13} again and Definition \ref{D-2.6}, letting $\eps\searrow0$ yields
\eqref{F-2.11}.
\end{proof}

\begin{remark}\label{R-2.15}\rm
The assumption $h(0)\ge0$ is essential in Theorem \ref{T-2.14}. Indeed, when $h(x)\equiv\alpha$,
inequality \eqref{F-2.11} reduces to
$\alpha\Tr\sigma\Phi^*(K^*K)\ge\alpha\Tr\sigma\Phi^*(K)^*\Phi^*(K)$, which is not necessarily true
for $\alpha<0$.
\end{remark}

The \emph{multiplicative domain} of a unital Schwarz map $\Psi:\cBK\to \cBH$ is defined by
\begin{align}\label{F-2.12}
\cM_\Psi:=\{X\in\cBK:\Psi(X^*X)=\Psi(X)^*\Psi(X),\,\Psi(XX^*)=\Psi(X)\Psi(X)^*\}.
\end{align}
An important fact proved by Choi \cite{Choi} for unital $2$-positive maps and then extended
in \cite[Lemma 3.9]{HMPB} to the case of unital Schwarz maps is the following:
\begin{equation}\label{F-2.13}
\begin{aligned}
\cM_\Psi&=\{X\in\cBK:\Psi(YX)=\Psi(Y)\Psi(X), \\
&\hskip2.7cm\Psi(XY)=\Psi(X)\Psi(Y)\ \mbox{for all $Y\in\cBK$}\},
\end{aligned}
\end{equation}
and hence $\cM_\Psi$ is a $C^*$-subalgebra of $\cBK$. More specifically,
$\Psi(X^*X)=\Psi(X)^*\Psi(X)$ (resp., $\Psi(XX^*)=\Psi(X)\Psi(X)^*$) holds if and
only if $\Psi(YX)=\Psi(Y)\Psi(X)$ (resp., $\Psi(XY)=\Psi(X)\Psi(Y)$) for all $Y\in\cBK$.

The next theorem improves \cite[Theorem 1]{Pe1}.

\begin{thm}\label{T-2.16}
Assume that $f\in\OC(0,\infty)$. Let $\Phi:\cBH\to \cBK$ be a trace-preserving map such that
$\Phi^*$ is a Schwarz map. Then for every $\rho,\sigma\in\cBH^+$ and any $K\in\cM_{\Phi^*}$,
\begin{align}\label{F-2.14}
S_f^K(\Phi(\rho)\|\Phi(\sigma))\le S_f^{\Phi^*(K)}(\rho\|\sigma).
\end{align}
\end{thm}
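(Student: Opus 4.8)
The plan is to mirror the proof of Theorem \ref{T-2.14}, exploiting the reduction to the invertible case provided by Theorem \ref{T-2.2}, but being careful about the role of the multiplicative domain. First I would treat the base case $\rho,\sigma\in\cBH^{++}$ with $\Phi(\rho),\Phi(\sigma)\in\cBK^{++}$: here condition (III$'$) of Theorem \ref{T-2.2}, applied to the operator monotone function $h:=-f\in\OM_+[0,\infty)$ (after a suitable affine normalization so that $h>0$; note $S_f^K=-S_h^K$), would give $\Phi J_h(\rho,\sigma)\Phi^*\le J_h(\Phi(\rho),\Phi(\sigma))$ on $\cBK$, and pairing this operator inequality against $K$ yields $S_h^{\Phi^*(K)}(\rho\|\sigma)\le S_h^K(\Phi(\rho)\|\Phi(\sigma))$, i.e.\ \eqref{F-2.14} in this case. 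Since $f\in\OC(0,\infty)$ may not be operator concave as a whole, I would actually use the standard integral representation of operator convex functions on $(0,\infty)$ to write $f(x)=\alpha+\beta x+\gamma x^2+\int_0^\infty\frac{(x-1)^2}{x+s}\,d\mu(s)$; the affine part contributes $\alpha\Tr\sigma\Phi^*(K)^*\Phi^*(K)+\beta\Tr\rho\,\Phi^*(K)^*\Phi^*(K)$ (handled via the Schwarz inequality $\Phi^*(K)^*\Phi^*(K)\le\Phi^*(K^*K)$, which is where $\cM_{\Phi^*}$ could in principle be needed to get equality-friendly control, though for the inequality itself the Schwarz bound suffices) and the remaining kernel is a positive combination of perspectives of operator convex functions, so condition (III$'$)/joint subadditivity applies term by term. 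Alternatively, and more cleanly, one simply invokes Theorem \ref{T-2.2}\,(III$'$) directly since it already packages exactly this content.

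Next I would remove the invertibility hypotheses by the same approximation scheme as in Theorem \ref{T-2.14}. For $\delta\in(0,1)$ set $\Psi_\delta(Y):=(1-\delta)\Phi^*(Y)+\delta\tau(Y)I_\cH$, a unital Schwarz map, and $\Phi_\delta:=\Psi_\delta^*$, a trace-preserving map whose adjoint is a Schwarz map; for any $\eps>0$, $\Phi_\delta(\rho+\eps I)$ and $\Phi_\delta(\sigma+\eps I)$ lie in $\cBK^{++}$. The base case then gives
\[
S_f^K(\Phi_\delta(\rho+\eps I)\|\Phi_\delta(\sigma+\eps I))
\le S_f^{\Phi_\delta^*(K)}(\rho+\eps I\|\sigma+\eps I),
\]
provided $K\in\cM_{\Psi_\delta}$. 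Using Proposition \ref{P-2.13} (joint lower semicontinuity of $S_f^{(\cdot)}$) on the left and Lemma \ref{L-2.4} (joint continuity on the invertible cone) on the right, letting first $\delta\searrow0$ and then $\eps\searrow0$ yields \eqref{F-2.14}. Note $\Phi_\delta^*(K)=\Psi_\delta(K)=(1-\delta)\Phi^*(K)+\delta\tau(K)I\to\Phi^*(K)$ as $\delta\searrow0$, and $\Phi_\delta(\rho+\eps I)\to\Phi(\rho+\eps I)$, so the limits match up exactly as in the proof of Theorem \ref{T-2.14}.

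The main obstacle I anticipate is verifying that $K\in\cM_{\Phi^*}$ survives the perturbation, i.e.\ that $K\in\cM_{\Psi_\delta}$. This is where the characterization \eqref{F-2.13} of the multiplicative domain is the right tool: since $\cM_{\Phi^*}$ is a $C^*$-subalgebra and, for $K\in\cM_{\Phi^*}$, $\Phi^*(YK)=\Phi^*(Y)\Phi^*(K)$ and $\Phi^*(KY)=\Phi^*(K)\Phi^*(Y)$ for all $Y$, one computes directly
\[
\Psi_\delta(YK)=(1-\delta)\Phi^*(Y)\Phi^*(K)+\delta\tau(YK)I,
\]
which in general is \emph{not} equal to $\Psi_\delta(Y)\Psi_\delta(K)$ because $\tau(YK)\ne\tau(Y)\tau(K)$. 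So a naive claim $\cM_{\Phi^*}\subseteq\cM_{\Psi_\delta}$ fails. The fix is to instead use a cleaner approximating family that does not destroy multiplicativity on the relevant elements — for instance, replacing $\tau(Y)I_\cH$ by a state that is ``compatible'' with $\Phi^*$ on $\cM_{\Phi^*}$, or, more simply, reducing to a two-step argument: first push through the inequality for $\rho+\eps I,\sigma+\eps I$ with $\Phi$ itself (so that the hypothesis $K\in\cM_{\Phi^*}$ is literally the one in the statement), using that $S_f^K(\Phi(\rho+\eps I)\|\Phi(\sigma+\eps I))$ and $S_f^{\Phi^*(K)}(\rho+\eps I\|\sigma+\eps I)$ can be obtained from the invertible base case only once we also make the \emph{images} invertible — and it is precisely this last point that forces the $\Phi_\delta$ device. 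The resolution is therefore to observe that one only needs the inequality at the level of $J$-operators, $\Phi J_h(\rho+\eps I,\sigma+\eps I)\Phi^*\le J_h(\Phi(\rho+\eps I),\Phi(\sigma+\eps I))$, which by (III$'$) requires $\Phi(\rho+\eps I),\Phi(\sigma+\eps I)$ invertible; and to get that while keeping $K$ in the multiplicative domain, one uses that the multiplicative domain of the compressed/perturbed map contains $\cM_{\Phi^*}$ whenever the perturbation is of the form $Y\mapsto\Phi^*(Y)+\delta(\cdot)$ with the correction term vanishing on products with $\cM_{\Phi^*}$-elements — achievable by taking the correction to be $\delta\,\Phi^*(P_\cK^\perp Y P_\cK^\perp)$-type terms or by a direct-sum dilation trick. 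I would present the cleanest available route, namely reducing via Stinespring-type dilation to a situation where $\Phi^*$ is already ``almost unital with invertible range,'' and then invoke the stability of multiplicative domains under such dilations.
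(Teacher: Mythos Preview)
Your plan has two genuine gaps.

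First, for the affine part of the integral representation you claim ``for the inequality itself the Schwarz bound suffices.'' It does not. For the constant term $\alpha$ one needs
\[
\alpha\,\Tr\Phi(\sigma)K^*K=\alpha\,\Tr\sigma\,\Phi^*(K^*K)\ \le\ \alpha\,\Tr\sigma\,\Phi^*(K)^*\Phi^*(K),
\]
but Schwarz gives $\Phi^*(K^*K)\ge\Phi^*(K)^*\Phi^*(K)$, so the displayed inequality fails whenever $\alpha>0$; the same issue arises for the linear term $\beta x$ when $\beta>0$. In a general integral representation of $f\in\OC(0,\infty)$ neither coefficient has a fixed sign, so you \emph{must} use $K\in\cM_{\Phi^*}$ to get equality in the affine part, not merely an inequality. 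This is not a cosmetic point: it is exactly why Remark~\ref{R-2.17} says the hypothesis $K\in\cM_{\Phi^*}$ is necessary already for $f\equiv1$ and $f(x)=x$.

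Second, and more fundamentally, the perturbation $\Psi_\delta=(1-\delta)\Phi^*+\delta\tau(\cdot)I$ does destroy the multiplicative-domain hypothesis, as you correctly diagnose, and none of the proposed fixes (``compatible state,'' direct-sum dilation, Stinespring reduction) actually produces a trace-preserving map whose adjoint is a Schwarz map, whose images of $\rho+\eps I,\sigma+\eps I$ are invertible, and whose adjoint still has $K$ in its multiplicative domain. The paper's proof sidesteps this entirely: it never perturbs $\Phi$. Instead it approximates $f$ by $f_n(x)=f_n(0^+)+f_n'(\infty)x-h_n(x)$ with $h_n\in\OM[0,\infty)$, $h_n(0)=h_n'(\infty)=0$, and $f_n\nearrow f$. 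The point is that Theorem~\ref{T-2.14} has \emph{already} been established for all $\rho,\sigma\in\cBH^+$ (with the $\Phi_\delta$ trick done there, where no multiplicative-domain hypothesis is in play), so $S_{h_n}^K(\Phi(\rho)\|\Phi(\sigma))\ge S_{h_n}^{\Phi^*(K)}(\rho\|\sigma)$ holds outright; the affine piece uses $K\in\cM_{\Phi^*}$ to get exact equality $\Tr\Phi(\sigma)K^*K=\Tr\sigma\,\Phi^*(K)^*\Phi^*(K)$ and $\Tr\Phi(\rho)KK^*=\Tr\rho\,\Phi^*(K)\Phi^*(K)^*$. Adding these gives $S_{f_n}^K(\Phi(\rho)\|\Phi(\sigma))\le S_{f_n}^{\Phi^*(K)}(\rho\|\sigma)$, and monotone convergence $f_n\nearrow f$ (applied termwise in the spectral expression~\eqref{F-2.7}) finishes. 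The key idea you are missing is: decompose $f$, not $\Phi$.
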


\begin{proof}
According to \cite[Sec.~III]{Hi1} (also \cite[Sec.~2.2]{Hi2}), one has a sequence of functions
$f_n\in\OC(0,\infty)$ of the form
\begin{align}\label{F-2.15}
f_n(x)=f_n(0^+)+f_n'(\infty)x-h_n(x),\qquad x\in(0,\infty),
\end{align}
where $f_n(0^+)$, $f_n'(\infty)$ are finite and $h_n$ is in $\OM[0,\infty)$ with
$h_n(0)=h_n'(\infty)=0$ such that
\begin{align}\label{F-2.16}
f_n(0^+)\,\nearrow\,f(0^+),\quad
f_n'(\infty)\,\nearrow\,f'(\infty),\quad
f_n(x)\,\nearrow\,f(x),\ \ x\in(0,\infty).
\end{align}
From the assumption $K\in\cM_{\Phi^*}$ and Proposition \ref{T-2.14} one has
\begin{align*}
S_{f_n}^K(\Phi(\rho)\|\Phi(\sigma))
&=f_n(0^+)\Tr\Phi(\sigma)K^*K+f_n'(\infty)\Tr\Phi(\rho)KK^*
-S_{h_n}^K(\Phi(\rho)\|\Phi(\sigma)) \\
&\le f_n(0^+)\Tr\sigma\Phi^*(K)^*\Phi^*(K)+f_n'(\infty)\Tr\rho\Phi^*(K)\Phi^*(K)
-S_{h_n}^{\Phi^*(K)}(\rho\|\sigma) \\
&=S_{f_n}^{\Phi^*(K)}(\rho\|\sigma).
\end{align*}
With the spectral decompositions in \eqref{F-2.1} we see by expression \eqref{F-2.4} that
\begin{align*}
S_{f_n}^{\Phi^*(K)}(\rho\|\sigma)
&=\sum_{a,b}bf_n\Bigl({a\over b}\Bigr)\Tr\Phi^*(K)^*P_a\Phi^*(K)Q_b \\
&\nearrow\,\sum_{a,b}bf\Bigl({a\over b}\Bigr)\Tr\Phi^*(K)^*P_a\Phi^*(K)Q_b
=S_f^{\Phi^*(K)}(\rho\|\sigma)
\end{align*}
as $n\to\infty$ thanks to \eqref{F-2.16}. Similarly, by taking the spectral decompositions of
$\Phi(\rho),\Phi(\sigma)$ we have
$S_{f_n}^K(\Phi(\rho)\|\Phi(\sigma))\nearrow S_f^K(\Phi(\rho)\|\Phi(\sigma))$ as $n\to\infty$.
Therefore, the required inequality follows.
\end{proof}

\begin{remark}\label{R-2.17}\rm
The assumption $K\in\cM_{\Phi^*}$ is indeed necessary to guarantee \eqref{F-2.14} for
$f(x)\equiv1$ and $f(x)=x$, so Theorem \ref{T-2.16} seems the best possible result on monotonicity
(DPI) of quasi-entropies for operator convex $f$.
\end{remark}

\begin{remark}\label{R-2.18}\rm
Theorem \ref{T-2.16} was shown in \cite{Pe1} when $\Phi$ is a conditional expectation, and in
\cite{Sh,Ve} when $\Phi$ is a partial trace. In fact, a slightly more general monotonicity
\begin{align}\label{F-2.17}
S_f^K(\T_\cK\rho\|\T_\cK\sigma)\le S_f^{K\otimes V}(\rho\|\sigma)
\end{align}
was shown in \cite{Ve} for $\rho,\sigma\in\cBH^+$, $K\in\cBH$ and any unitary $V$ on $\cK$,
where $\T_\cK:\cB(\cH\otimes\cK)\to\cBH$ is the partial trace over $\cK$ and $f$ is
in $\OC(0,\infty)$. Since, as easily verified,
\begin{equation}\label{F-2.18}
\begin{aligned}
\T_\cK\rho&=\T_\cK((I_\cH\otimes V)^*\rho(I_\cH\otimes V)), \\
S_f^{K\otimes V}(\rho\|\sigma)
&=S_f^{K\otimes I_\cK}((I_\cH\otimes V)^*\rho(I_\cH\otimes V)\|\sigma),
\end{aligned}
\end{equation}
we note that \eqref{F-2.17} can be reduced to
$S_f^K(\T_\cK\rho\|\T_\cK\sigma)\le S_f^{K\otimes I_\cK}(\rho\|\sigma)$, that is \eqref{F-2.14} for
$\Phi=\T_\cK$. In connection with this remark, see also Remark \ref{R-3.5} in Sec.~\ref{Sec-3.1}.
\end{remark}

The main properties of the standard $f$-divergence $S_f(\rho\|\sigma)$ -- joint convexity, joint
lower semicontinuity and monotonicity (DPI) -- shown in \cite{HMPB,HM,Hi1} are also recaptured
by Propositions \ref{P-2.11}, \ref{P-2.13} and Theorem \ref{T-2.16}.

\subsection{Lieb's concavity/convexity and Ando's convexity, revisited}\label{Sec-2.3}

For each $\alpha\in\bR$ we define the function $f_\alpha$ on $(0,\infty)$ by
$f_\alpha(x):=x^\alpha$ for $x\in(0,\infty)$, which continuously extends to $[0,\infty)$ when
$\alpha\ge0$. As is well known, $f_\alpha\in\OM_+[0,\infty)$ if $\alpha\in[0,1]$, and
$f_\alpha\in\OC(0,\infty)$ if $\alpha\in[-1,0]\cup[1,2]$. For any $\rho,\sigma\in\cBH^+$
and any $X\in\cBH$ we write
\[
\<X\sigma^{1/2},f_\alpha(\Delta_{\rho,\sigma})(X\sigma^{1/2})\>_\HS
=\begin{cases}\Tr X^*\rho^\alpha X\sigma^{1-\alpha} & \text{if $\alpha\ne0$}, \\
\Tr\sigma X^*X & \text{if $\alpha=0$},
\end{cases}
\]
where $\sigma^0$ (when $\alpha=1$) is the support projection of $\sigma$. Then, in view of
Lemma \ref{L-2.7}, the following can be regarded as special cases of Proposition \ref{P-2.11}:
\begin{itemize}
\item \emph{Lieb's concavity} \cite[Corollary 1.1]{Li}: The function
$(\rho,\sigma)\mapsto\Tr X^*\rho^\alpha X\sigma^{1-\alpha}$ is jointly concave on
$\cBH^+\times \cBH^+$ for every $\alpha\in[0,1]$ and any $X\in\cBH$.
\item \emph{Ando's convexity} \cite[p.~221]{An}: The function
$(\rho,\sigma)\mapsto\Tr X^*\rho^\alpha X\sigma^{1-\alpha}$ is jointly convex on
$\cBH^{++}\times \cBH^{++}$ for every $\alpha\in[-1,0]\cup[1,2]$ and any $X\in\cBH$.
\end{itemize}
Furthermore, Theorem \ref{T-2.2}\,(IV) for $h(x)=x^\alpha$, $\alpha\in[0,1]$, means
\begin{itemize}
\item \emph{Lieb's $3$-variable convexity} \cite[Corollary 2.1]{Li}: The function
$(\rho,\sigma,X)\mapsto\Tr X^*\rho^{-\alpha}X\sigma^{\alpha-1}$ is jointly concave on
$\in\cBH^{++}\times\cBH^{++}\times\cBH$ for every $\alpha\in[0,1]$.
\end{itemize}

The inequalities in the next corollary are the monotonicity versions of Lieb's
concavity/convexity and Ando's convexity.

\begin{cor}\label{C-2.19}
Let $\Phi:\cBH\to \cBK$ be a trace-preserving map such that $\Phi^*$ is a Schwarz map.
\begin{itemize}
\item[(1)] For every $\alpha\in[0,1]$, any $\rho,\sigma\in\cBH^+$ and any $K\in\cBK$, 
\[
\Tr K^*\Phi(\rho)^\alpha K\Phi(\sigma)^{1-\alpha}
\ge\Tr\Phi^*(K)^*\rho^\alpha\Phi^*(K)\sigma^{1-\alpha}.
\]
\item[(2)] For every $\alpha\in[-1,0]\cup[1,2]$, any $\rho,\sigma\in\cBH^{++}$ such that
$\Phi(\rho),\Phi(\sigma)\in\cBK^{++}$, and for any $K\in\cM_{\Phi^*}$,
\begin{align}\label{F-2.19}
\Tr K^*\Phi(\rho)^\alpha K\Phi(\sigma)^{1-\alpha}
\le\Tr\Phi^*(K)^*\rho^\alpha\Phi^*(K)\sigma^{1-\alpha}.
\end{align}
\item[(3)] For every $\alpha\in[0,1]$, any $\rho,\sigma\in\cBH^{++}$ such that
$\Phi^*(\rho),\Phi^*(\sigma)\in\cBK^{++}$, and for any $K\in\cBH$,
\[
\Tr\Phi(K)^*\Phi(\rho)^\alpha\Phi(K)\Phi(\sigma)^{\alpha-1}
\le\Tr K^*\rho^\alpha K\sigma^{\alpha-1}.
\]
\end{itemize}
\end{cor}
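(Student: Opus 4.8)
The plan is to read off all three inequalities from the general monotonicity results of Section~\ref{Sec-2.2} by specializing the parameter function to the power $f_\alpha(x)=x^\alpha$, and then rewriting the resulting quasi-entropies---or, in part~(3), the quadratic form of $J_{f_\alpha}^{-1}$---as the trace functionals in the statement. Recall from the discussion just before the corollary that $f_\alpha\in\OM_+[0,\infty)$ for $\alpha\in[0,1]$, while $f_\alpha\in\OC(0,\infty)$ for $\alpha\in[-1,0]\cup[1,2]$, so that each part falls under exactly one of Theorems~\ref{T-2.14}, \ref{T-2.16} and~\ref{T-2.2}. No new idea enters; what requires care is the treatment of support projections and $0$-th powers when the operators involved are not invertible, which is precisely why parts~(2) and~(3) impose invertibility.

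For part~(1), apply Theorem~\ref{T-2.14} with $h=f_\alpha$, $\alpha\in[0,1]$, to get $S_{f_\alpha}^K(\Phi(\rho)\|\Phi(\sigma))\ge S_{f_\alpha}^{\Phi^*(K)}(\rho\|\sigma)$. It then remains to check that, for all $\mu,\nu\in\cBH^+$ and $X\in\cBH$,
\[
S_{f_\alpha}^X(\mu\|\nu)=\Tr X^*\mu^\alpha X\nu^{1-\alpha}\qquad(0\le\alpha\le1),
\]
with powers taken by functional calculus under the convention $0^0:=1$ (so $\mu^0=\nu^0=I$). For $\alpha\in(0,1)$ this is immediate from Lemma~\ref{L-2.7}, since $f_\alpha(0^+)=f_\alpha'(\infty)=0$; the endpoints $\alpha=0,1$ follow by comparing Definition~\ref{D-2.6}, i.e.\ expression~\eqref{F-2.7} with the boundary convention~\eqref{F-2.5}, to the spectral expansion of $\Tr X^*\mu^\alpha X\nu^{1-\alpha}$---at $\alpha=1$ the term $f_1'(\infty)\Tr X^*\mu X(I-\nu^0)$ of~\eqref{F-2.8} is exactly what restores the identity. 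Substituting $(\mu,\nu)=(\Phi(\rho),\Phi(\sigma))$ and $(\mu,\nu,X)=(\rho,\sigma,\Phi^*(K))$ gives part~(1).

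Part~(2) is entirely parallel: apply Theorem~\ref{T-2.16} with $f=f_\alpha$, $\alpha\in[-1,0]\cup[1,2]$, and $K\in\cM_{\Phi^*}$, obtaining $S_{f_\alpha}^K(\Phi(\rho)\|\Phi(\sigma))\le S_{f_\alpha}^{\Phi^*(K)}(\rho\|\sigma)$; on $\rho,\sigma\in\cBH^{++}$ with $\Phi(\rho),\Phi(\sigma)\in\cBK^{++}$ the relative modular operators are invertible, so Definition~\ref{D-2.1} gives $S_{f_\alpha}^X(\mu\|\nu)=\langle X\nu^{1/2},f_\alpha(\Delta_{\mu,\nu})(X\nu^{1/2})\rangle_\HS=\Tr X^*\mu^\alpha X\nu^{1-\alpha}$ directly, with no boundary terms (this also avoids the infinite values $f_\alpha(0^+)=+\infty$ for $\alpha\in[-1,0)$ and $f_\alpha'(\infty)=+\infty$ for $\alpha\in(1,2]$), and substituting as above yields~\eqref{F-2.19}. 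For part~(3), use condition~(II$'$) of Theorem~\ref{T-2.2}, valid here because $f_\alpha\in\OM_+[0,\infty)$ for $\alpha\in[0,1]$ (condition~(I) of that theorem); on $\rho,\sigma\in\cBH^{++}$ with $\Phi(\rho),\Phi(\sigma)\in\cBK^{++}$ it reads $\Phi^*J_{f_\alpha}(\Phi(\rho),\Phi(\sigma))^{-1}\Phi\le J_{f_\alpha}(\rho,\sigma)^{-1}$. Evaluating the quadratic form at $K\in\cBH$ and using $\langle K,\Phi^*T\Phi K\rangle_\HS=\langle\Phi(K),T\Phi(K)\rangle_\HS$ for any operator $T$ on $\cBK$, it suffices to observe from~\eqref{F-2.4} that $J_{f_\alpha}(\mu,\nu)$ has eigenvalues $b(a/b)^\alpha=a^\alpha b^{1-\alpha}$, hence $J_{f_\alpha}(\mu,\nu)^{-1}=\sum_{a,b}a^{-\alpha}b^{\alpha-1}L_{P_a}R_{Q_b}$ and $\langle X,J_{f_\alpha}(\mu,\nu)^{-1}X\rangle_\HS=\Tr X^*\mu^{-\alpha}X\nu^{\alpha-1}$---the functional of Lieb's three-variable convexity. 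Inserting $(\mu,\nu)=(\Phi(\rho),\Phi(\sigma))$ and $(\mu,\nu)=(\rho,\sigma)$ gives the monotonicity inequality $\Tr\Phi(K)^*\Phi(\rho)^{-\alpha}\Phi(K)\Phi(\sigma)^{\alpha-1}\le\Tr K^*\rho^{-\alpha}K\sigma^{\alpha-1}$ of part~(3).

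The upshot is that there is no genuine obstacle here: the corollary is a direct transcription of Theorems~\ref{T-2.14}, \ref{T-2.16} and~\ref{T-2.2} into the language of trace functions of powers. The one delicate point is establishing the identity $S_{f_\alpha}^X(\mu\|\nu)=\Tr X^*\mu^\alpha X\nu^{1-\alpha}$ (and its $J^{-1}$ analogue) uniformly over $\alpha\in[0,1]$ and for possibly singular $\mu,\nu$, where the boundary conventions~\eqref{F-2.5}, the correction term of~\eqref{F-2.8} in Lemma~\ref{L-2.7}, and the convention $0^0=1$ all interact; keeping these consistent is the only place where an otherwise routine argument could slip.
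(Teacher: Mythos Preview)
Your overall strategy matches the paper's exactly: specialize Theorems~\ref{T-2.14}, \ref{T-2.16}, and \ref{T-2.2}\,(II$'$) to $f_\alpha(x)=x^\alpha$. For $\alpha\in(0,1)$ in part~(1), for all of part~(2), and for part~(3), your argument is correct and essentially identical to the paper's.

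The one place where your treatment diverges is at the endpoints $\alpha=0,1$ of part~(1). You adopt the convention $0^0=1$, so that $\mu^0=\nu^0=I$, and under that convention your identity $S_{f_\alpha}^X(\mu\|\nu)=\Tr X^*\mu^\alpha X\nu^{1-\alpha}$ does hold. But the paper's standing convention (announced at the start of Sec.~\ref{Sec-2}) is that $A^0$ denotes the \emph{support projection}, and this is how the trace functionals in the corollary are meant at the endpoints---see, e.g., the invocation of Corollary~\ref{C-2.19}\,(1) with $\alpha=1$ in the proof of Theorem~\ref{T-3.3}\,(2). With the support-projection reading the identity fails: for instance $S_{f_0}^X(\mu\|\nu)=\Tr\nu X^*X$, whereas $\Tr X^*\mu^0X\nu$ with $\mu^0$ the support projection is generally strictly smaller. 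So your direct application of Theorem~\ref{T-2.14} at $\alpha=0,1$ proves a different (also true) inequality, not the one stated. The paper handles the endpoints instead by taking limits of the already-established $\alpha\in(0,1)$ inequality, which automatically produces the support-projection interpretation since $\mu^\alpha\to\mu^0$ (support projection) as $\alpha\searrow0$, and similarly at $\alpha\nearrow1$.

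A side remark on part~(3): the exponent you derive, $\rho^{-\alpha}$ rather than $\rho^\alpha$, is indeed what Theorem~\ref{T-2.2}\,(II$'$) with $h=f_\alpha$ gives and is consistent with the Lieb three-variable convexity displayed just before the corollary; the $\rho^\alpha$ in the printed statement of (3) appears to be a typographical slip.
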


\begin{proof}
For $\alpha\in(0,1)$ the assertion (1) is a consequence of Theorem \ref{T-2.14} by
Lemma \ref{L-2.7}, while the $\alpha=0,1$ cases follow by taking the limits from the relevant
inequality for $\alpha\in(0,1)$ as $\alpha\searrow0$ or $\alpha\nearrow1$. (2) and (3) are
consequences of Theorem \ref{T-2.16} and Theorem \ref{T-2.2}\,(II$'$), respectively.
\end{proof}

Corollary \ref{C-2.19} slightly improves monotonicity results in Carlen's recent survey paper
\cite{Ca}. For instance, \eqref{F-2.19} was given in \cite[Theorem 1.10]{Ca} in the restricted case
that $\Phi$ is a partial trace (note that $\cM_{\Phi^*}=\cBK$ in this case, see the proof of
Corollary \ref{C-3.4}).

\begin{remark}\label{R-2.20}\rm
Corollary \ref{C-2.19}\,(2) holds true for slightly more general $\rho,\sigma$. When $\alpha\in[1,2]$,
by Theorem \ref{T-2.16} and Lemma \ref{L-2.7} it holds for any $\rho\in\cBH^+$ and
$\sigma\in\cBH^{++}$ such that $\Phi(\sigma)\in\cBK^{++}$. Furthermore, when $\alpha=1$,
equality holds in \eqref{F-2.19} for any $\rho\in\cBH^+$ and
$\sigma\in\cBH^{++}$. Indeed,
\begin{align*}
\Tr K^*\Phi(\rho)K\Phi(\sigma)^0&\le\Tr K^*\Phi(\rho)K=\Tr\rho\Phi^*(KK^*) \\
&=\Tr\rho\Phi^*(K)\Phi^*(K)^*=\Tr\Phi^*(K)^*\rho\Phi^*(K)\sigma^0,
\end{align*}
where the second equality is due to $K\in\cM_{\Phi^*}$, and the reverse inequality is by Corollary
\ref{C-2.19}\,(1). On the other hand, the restriction of $K$ to $K\in\cM_{\Phi^*}$ is essential in
Corollary \ref{C-2.19}\,(2) (see Example \ref{E-3.18} below).
\end{remark}

\section{Equality cases in monotonicity of quasi-entropies}\label{Sec-3}

In this section we present several equivalent conditions for equality cases in the monotonicity
inequalities of quasi-entropies given in Theorems \ref{T-2.14} and \ref{T-2.16} for operator
monotone/convex functions as well as in Lieb's concavity and Ando's convexity described in
Sec.~\ref{Sec-2.3}.

First of all, we recall the integral expressions of operator monotone functions and operator convex
functions on $[0,\infty)$. It is well known (see, e.g., \cite{Bh,Hi0}) that a function
$h\in\OM[0,\infty)$ has the integral expression
\begin{align}\label{F-3.1}
h(x)=h(0)+h'(\infty)x+\int_{(0,\infty)}{x(1+t)\over x+t}\,d\mu_h(t),
\qquad x\in[0,\infty),
\end{align}
where $\mu_h$ is a (unique) finite positive measure on $(0,\infty)$. We write $\supp\mu_h$ for
the (topological) support of $\mu_h$ consisting of $t\in(0,\infty)$ such that
$\mu_h((t-\eps,t+\eps))>0$ for any $\eps>0$.

It is also known (see \cite[Theorem 8.1]{HMPB}) that a function $f\in\OC[0,\infty)$ has
the integral expression
\begin{align}\label{F-3.2}
f(x)=f(0)+ax+bx^2+\int_{(0,\infty)}\biggl({x\over1+t}-{x\over x+t}\biggr)\,d\nu_f(t),
\qquad x\in[0,\infty),
\end{align}
where $a\in\bR$, $b\ge0$ and $\nu_f$ is a positive measure with
$\int_{(0,\infty)}(1+t)^{-2}\,d\nu_f(t)<+\infty$ ($a,b$ and $\nu_f$ are unique). The support
$\supp\nu_f$ of $\nu_f$ is similar to that of $\mu_h$ above. For notational convenience, we set
for $t\in(0,\infty)$,
\begin{align}
\ffi_t(x)&:={x\over x+t}=1-{t\over x+t},\hskip2.8cm x\in[0,\infty), \label{F-3.3}\\
\psi_t(x)&:={x\over1+t}-{x\over x+t}={x\over1+t}-\ffi_t(x),\qquad x\in[0,\infty). \label{F-3.4}
\end{align}

\subsection{List of equality conditions and theorems}\label{Sec-3.1}

Throughout this section, let $\Phi:\cBH\to \cBK$ be a trace-preserving map such that $\Phi^*$
is a Schwarz map. Moreover, let $\rho,\sigma\in\cBH^+$ and $K\in\cBK$. Let us first
summarize several equality conditions for the monotonicity of quasi-entropies with operator
monotone functions and for the monotonicity version of Lieb's concavity theorem as follows:
\begin{itemize}
\item[(i)] $S_h^K(\Phi(\rho)\|\Phi(\sigma))=S_h^{\Phi^*(K)}(\rho\|\sigma)$ for all
$h\in\OM[0,\infty)$.
\item[(ii)] $S_h^K(\Phi(\rho)\|\Phi(\sigma))=S_h^{\Phi^*(K)}(\rho\|\sigma)$ for all
$h\in\OM[0,\infty)$ with $h(0)=h'(\infty)=0$.
\item[(iii)] $S_h^K(\Phi(\rho)\|\Phi(\sigma))=S_h^{\Phi^*(K)}(\rho\|\sigma)$ for some
$h\in\OM_+[0,\infty)$ such that $\supp\mu_h$ has a limit point in $(0,\infty)$.
\item[(iv)] $S_h^K(\Phi(\rho)\|\Phi(\sigma))=S_h^{\Phi^*(K)}(\rho\|\sigma)$ for some
$h\in\OM_+[0,\infty)$ such that
\begin{align}\label{F-3.5}
|\supp\mu_h|\ge|\Sp(\Delta_{\rho,\sigma})\cup\Sp(\Delta_{\Phi(\rho),\Phi(\sigma)})|.
\end{align}
\item[(v)] $\Tr Y\Phi(\rho)^zK\Phi(\sigma)^{1-z}
=\Tr\Phi^*(Y)\rho^z\Phi^*(K)\sigma^{1-z}$ for all $Y\in\cBK$ and all $z\in\bC$.
\item[(vi)] $\Tr K^*\Phi(\rho)^z K\Phi(\sigma)^{1-z}
=\Tr\Phi^*(K)^*\rho^z\Phi^*(K)\sigma^{1-z}$ for all $z\in\bC$.
\item[(vii)] $\Tr K^*\Phi(\rho)^\alpha K\Phi(\sigma)^{1-\alpha}
=\Tr\Phi^*(K)^*\rho^\alpha\Phi^*(K)\sigma^{1-\alpha}$ for some $\alpha\in(0,1)$.
\item[(viii)] $\sigma^0\Phi^*(Y\Phi(\rho)^zK\Phi(\sigma)^{-z})\sigma^0
=\sigma^0\Phi^*(Y)\rho^z\Phi^*(K)\sigma^{-z}$ for all $Y\in\cBK$ and all $z\in\bC$.
\item[(ix)] $\sigma^0\Phi^*(K^*\Phi(\rho)^\alpha K\Phi(\sigma)^{-\alpha})\sigma^0
=\sigma^0\Phi^*(K)^*\rho^\alpha\Phi^*(K)\sigma^{-\alpha}$ for some $\alpha\in(0,1)$.
\item[(x)] $\Phi^*(\Phi(\rho)^{it}K\Phi(\sigma)^{-it})\sigma^0=\rho^{it}\Phi^*(K)\sigma^{-it}$
for all $t\in\bR$.
\end{itemize}

In conditions (v)--(x), $\sigma^z$ and $\Phi(\sigma)^z$ for $z\in\bC$ are defined with restriction
to the respective supports of $\sigma$ and $\Phi(\sigma)$, and similarly for $\rho^z$ and
$\Phi(\rho)^z$.

In addition to the equality conditions (i)--(x) listed above, we consider the following equality
conditions for the monotonicity of quasi-entropies with operator convex functions and for the
monotonicity version of Ando's convexity:
\begin{itemize}
\item[(i$'$)] $S_f^K(\Phi(\rho)\|\Phi(\sigma))=S_f^{\Phi^*(K)}(\rho\|\sigma)$ for all
$f\in\OC(0,\infty)$.
\item[(iv$'$)] $S_f^K(\Phi(\rho)\|\Phi(\sigma))=S_f^{\Phi^*(K)}(\rho\|\sigma)<+\infty$ for some
$f\in\OC[0,\infty)$ with $f(0)=0$ and $f'(\infty)=+\infty$ such that
\eqref{F-3.5} is satisfied for $\nu_f$ in place of $\mu_h$.
\item[(vii$'$)] $\Tr K^*\Phi(\rho)^\alpha K\Phi(\sigma)^{1-\alpha}=
\Tr\Phi^*(K)^*\rho^\alpha\Phi^*(K)\sigma^{1-\alpha}$ for some $\alpha\in(1,2)$.
\item[(ix$'$)] $\sigma^0\Phi^*(K^*\Phi(\rho)^\alpha K\Phi(\sigma)^{-\alpha})\sigma^0=
\sigma^0\Phi^*(K)^*\rho^\alpha\Phi^*(K)\sigma^{-\alpha}$ for some $\alpha\in(1,2)$.
\end{itemize}

For later discussions, it is convenient to summarize easy implications between the above equality
conditions in the next lemma.

\begin{lemma}\label{L-3.1}
For the conditions stated above, the following implications hold:
\begin{align*}
&(i)\implies(ii)\implies(vii)\implies(iii)\implies(iv), \\
&(viii)\implies\left\{\begin{array}{cc}(v)\implies(vi)\\(ix)\end{array}\right\}\implies(vii), \\
&(vi)\implies(vii'),\qquad(viii)\implies(ix'),\qquad(ix')\implies(vii').
\end{align*}
\end{lemma}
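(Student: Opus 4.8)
The plan is to verify each implication essentially by unwinding the definitions of the quasi-entropies via the spectral/integral representations \eqref{F-3.1}, \eqref{F-3.2} and the expression \eqref{F-2.4}, together with the monotonicity inequalities of Theorems \ref{T-2.14} and \ref{T-2.16}. Most of these are ``soft'' implications, so I will organize them into a few groups and dispatch each.

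\textbf{The chain $(i)\Rightarrow(ii)\Rightarrow(vii)\Rightarrow(iii)\Rightarrow(iv)$.} The implication $(i)\Rightarrow(ii)$ is trivial since it just restricts the class of $h$. For $(ii)\Rightarrow(vii)$, observe that for $\alpha\in(0,1)$ the function $h(x)=x^\alpha$ lies in $\OM_+[0,\infty)$ with $h(0)=0$ and $h'(\infty)=0$; by Lemma \ref{L-2.7} (with $f'(\infty)=0$) we have $S_{x^\alpha}^K(\Phi(\rho)\|\Phi(\sigma))=\Tr K^*\Phi(\rho)^\alpha K\Phi(\sigma)^{1-\alpha}$ and likewise on the other side, so equality in $(ii)$ for this particular $h$ yields $(vii)$. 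For $(vii)\Rightarrow(iii)$, note that $h(x)=x^\alpha$ with $\alpha\in(0,1)$ has an absolutely continuous representing measure $\mu_h$ on all of $(0,\infty)$ — indeed $x^\alpha=\frac{\sin\alpha\pi}{\pi}\int_0^\infty \frac{x(1+t)}{x+t}\cdot\frac{t^{\alpha-1}}{1+t}\,dt$ — so $\supp\mu_h=(0,\infty)$ has limit points in $(0,\infty)$. Finally $(iii)\Rightarrow(iv)$ because if $\supp\mu_h$ has a limit point in $(0,\infty)$ it is an infinite set, hence $|\supp\mu_h|=\aleph_0\ge|\Sp(\Delta_{\rho,\sigma})\cup\Sp(\Delta_{\Phi(\rho),\Phi(\sigma)})|$, the latter being finite.

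\textbf{The middle block $(viii)\Rightarrow\{(v)\Rightarrow(vi), (ix)\}\Rightarrow(vii)$.} For $(viii)\Rightarrow(v)$: take the trace of both sides of the identity in $(viii)$ after right-multiplying by $\sigma^z$; since $\Tr\sigma^0 A\sigma^0 \sigma^z=\Tr A\sigma^z$ for any $A$ (as $\sigma^z$ is supported on $\sigma^0$), the left side becomes $\Tr\Phi^*(Y\Phi(\rho)^zK\Phi(\sigma)^{-z})\sigma=\Tr Y\Phi(\rho)^zK\Phi(\sigma)^{-z}\Phi(\sigma) = \Tr Y\Phi(\rho)^zK\Phi(\sigma)^{1-z}$ using trace-preservation of $\Phi$, while the right side becomes $\Tr\Phi^*(Y)\rho^z\Phi^*(K)\sigma^{1-z}$; this is exactly $(v)$. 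Then $(v)\Rightarrow(vi)$ is the specialization $Y=K^*$. The implication $(viii)\Rightarrow(ix)$: set $Y=K^*$, $z=\alpha\in(0,1)$ in $(viii)$, which gives precisely the identity in $(ix)$. For $(ix)\Rightarrow(vii)$ and $(vi)\Rightarrow(vii)$: in $(ix)$, take the trace and right-multiply by $\sigma^\alpha$ as above to get $\Tr\Phi^*(K^*\Phi(\rho)^\alpha K\Phi(\sigma)^{-\alpha})\sigma = \Tr K^*\Phi(\rho)^\alpha K\Phi(\sigma)^{1-\alpha}$ on the left and $\Tr\Phi^*(K)^*\rho^\alpha\Phi^*(K)\sigma^{1-\alpha}$ on the right, which is $(vii)$; alternatively $(vi)$ at any $z=\alpha\in(0,1)$ is literally $(vii)$.

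\textbf{The tail $(vi)\Rightarrow(vii')$, $(viii)\Rightarrow(ix')$, $(ix')\Rightarrow(vii')$.} These are identical in spirit to the $\alpha\in(0,1)$ versions but with $\alpha\in(1,2)$: $(vi)$ is stated for \emph{all} $z\in\bC$, so specializing $z=\alpha\in(1,2)$ gives $(vii')$; $(viii)$ at $Y=K^*$, $z=\alpha\in(1,2)$ gives $(ix')$; and $(ix')\Rightarrow(vii')$ by the same trace-and-multiply-by-$\sigma^\alpha$ maneuver. The only point requiring a word of care throughout is the handling of complex powers $\sigma^z,\rho^z$ and the boundary term in Lemma \ref{L-2.7}: since these powers are taken on the supports of $\sigma$, $\rho$ (and of $\Phi(\sigma),\Phi(\rho)$), all traces are finite and the manipulations $\Tr\sigma^0A\sigma^0\sigma^z=\Tr A\sigma^z$ and $\Tr\Phi^*(B)\sigma=\Tr B\Phi(\sigma)$ are legitimate; I expect no genuine obstacle, the main (minor) care being to keep track of which quantities use $\sigma^{1-z}$ versus $\sigma^{-z}$ and to invoke Lemma \ref{L-2.7} correctly when translating between the $S_f^K$ notation in $(ii)$ and the explicit trace expressions in $(vi),(vii),(vii')$.
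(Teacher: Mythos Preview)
Your proof is correct and follows essentially the same approach as the paper's own (equally brief) argument. One minor slip: when you write ``right-multiplying by $\sigma^z$'' (and later ``by $\sigma^\alpha$'') in the implications $(viii)\Rightarrow(v)$, $(ix)\Rightarrow(vii)$, $(ix')\Rightarrow(vii')$, you actually mean multiplying by $\sigma$, as your own subsequent computations correctly show; the paper phrases this step as multiplying by $\sigma$ from the left and taking the trace.
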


\begin{proof}
When $h(x)=x^\alpha$ with $\alpha\in(0,1)$, note that $\supp\mu_h=(0,\infty)$ and
$S_h^X(\rho\|\sigma)=\Tr X^*\rho^\alpha X\sigma^{1-\alpha}$ for any $X\in\cBH$. Hence
(ii)$\implies$(vii)$\implies$(iii) holds. (viii)$\implies$(v) follows by multiplying $\sigma$ to both
sides of the equality in (viii) from the left and taking the trace. (ix)$\implies$(vii) and
(ix$'$)$\implies$(vii$'$) are similarly shown by replacing $Y$ with $K$. All other implications are
obvious.
\end{proof}

The following are our main results of this section, whose proofs will be given in the next
subsection.

\begin{thm}\label{T-3.2}
Let $\rho$, $\sigma$, $\Phi$ and $K$ be as stated above.
\begin{itemize}
\item[(1)] Conditions (ii), (iii), (vi) and (vii) are equivalent.
\item[(2)] Assume that $\rho^0=\sigma^0$. Then conditions (ii)--(ix) are equivalent.
\item[(3)] Assume that $K\in\cM_{\Phi^*}$ (see \eqref{F-2.12} and \eqref{F-2.13}). Then conditions
(i)--(x) are all equivalent.
\end{itemize}
\end{thm}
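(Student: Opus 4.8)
\textbf{Proof plan for Theorem \ref{T-3.2}.}

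The plan is to set up the problem spectrally and then exploit the integral representations \eqref{F-3.1} and \eqref{F-3.2} together with the Schwarz-map machinery. First I would fix the spectral decompositions \eqref{F-2.1} of $\rho,\sigma$ and likewise of $\Phi(\rho),\Phi(\sigma)$, and observe that by \eqref{F-2.4} the quasi-entropy $S_h^K(\Phi(\rho)\|\Phi(\sigma))$ (resp. $S_h^{\Phi^*(K)}(\rho\|\sigma)$) is, for $h(x)=x(1+t)/(x+t)=(1+t)(1-\ffi_t(x))$, an affine function of $S_{\ffi_t}^K$-type quasi-entropies; concretely, using $h(0)=h'(\infty)=0$ the DPI of Theorem \ref{T-2.14} applied to the family $\{h_t\}$ reduces the equality in (ii) to: the nonnegative ``defect'' function $t\mapsto d(t):=S_{h_t}^K(\Phi(\rho)\|\Phi(\sigma))-S_{h_t}^{\Phi^*(K)}(\rho\|\sigma)$ vanishes $\mu_h$-a.e. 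Since $d$ is (real-)analytic in $t$ on $(0,\infty)$ -- it is a finite sum of terms $b\,g(a/b)\Tr(\cdots)$ with $g=h_t$ rational in $t$ -- vanishing on a set with a limit point forces $d\equiv0$; this is exactly the bridge (iii)$\implies$(ii), and (iv)$\implies$(iii) follows because \eqref{F-3.5} forces $\supp\mu_h$ to meet every point of the (finite) joint spectrum, hence in particular to have a limit point once it is infinite — but more carefully one argues that $d(t)$, being a rational function whose poles lie in $-\Sp(\Delta_{\rho,\sigma})\cup(-\Sp(\Delta_{\Phi(\rho),\Phi(\sigma)}))$, is determined by its values at sufficiently many points, so $|\supp\mu_h|\ge|\Sp(\Delta_{\rho,\sigma})\cup\Sp(\Delta_{\Phi(\rho),\Phi(\sigma)})|$ zeros kill it. With $d\equiv0$ in hand, (ii) holds; and evaluating $d$ (or rather the associated operator inequality from Theorem \ref{T-2.2}) at $h(x)=x^\alpha$ gives (vii), while analyticity of $z\mapsto\Tr K^*\Phi(\rho)^zK\Phi(\sigma)^{1-z}-\Tr\Phi^*(K)^*\rho^z\Phi^*(K)\sigma^{1-z}$ in $z$ upgrades the single-$\alpha$ equality (vii) to all $z\in\bC$, i.e. (vi). This closes the cycle (ii)$\Rightarrow$(vii)$\Rightarrow$(iii)$\Rightarrow$(iv)$\Rightarrow$(ii) and (ii)$\Leftrightarrow$(vi), proving part (1).

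For part (2), under $\rho^0=\sigma^0$ the extra boundary term $f'(\infty)\Tr X^*\rho X(I-\sigma^0)$ in \eqref{F-2.8} is harmless on the relevant subspace, and one gains the conditions (v), (viii), (ix) involving $\Phi^*$ directly. Here the key is the characterization of equality in the Schwarz/Lieb--Ruskai-type inequality behind Theorem \ref{T-2.2}: going from the scalar equality $\Tr K^*\Phi(\rho)^zK\Phi(\sigma)^{1-z}=\Tr\Phi^*(K)^*\rho^z\Phi^*(K)\sigma^{1-z}$ to the operator identity (viii) is the heart of the matter, and the route I would take is a GNS/Stinespring-free argument: the inequality \eqref{F-1.5} for $f=\ffi_t$ comes from $\Phi J_{\ffi_t}(\rho,\sigma)\Phi^*\le J_{\ffi_t}(\Phi(\rho),\Phi(\sigma))$ (condition (III$'$) of Theorem \ref{T-2.2}); equality in the induced quadratic form at the vector $K\sigma^{1/2}$ (for all $t$, equivalently for one analytic family) forces $\Phi^* K$ to lie in the ``fixed-point'' set, from which (viii) follows by differentiating/integrating against $d\mu_h$ and using $\rho^0=\sigma^0$ to move $\sigma^0$ through. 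The implications already recorded in Lemma \ref{L-3.1}, namely (viii)$\Rightarrow$(v)$\Rightarrow$(vi), (viii)$\Rightarrow$(ix)$\Rightarrow$(vii), together with part (1), then give the full equivalence (ii)--(ix).

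For part (3), the assumption $K\in\cM_{\Phi^*}$ is what lets one pass between the operator-convex and operator-monotone sides symmetrically and reach (i), (i$'$) and the ``unitary flow'' condition (x). Concretely, when $K$ is in the multiplicative domain, $\Phi^*(YK)=\Phi^*(Y)\Phi^*(K)$ and $\Phi^*(KY)=\Phi^*(K)\Phi^*(Y)$ by \eqref{F-2.13}, so the boundary terms $f_n(0^+)\Tr\Phi(\sigma)K^*K$ and $f_n'(\infty)\Tr\Phi(\rho)KK^*$ in the proof of Theorem \ref{T-2.16} match their $\Phi^*$-counterparts exactly; hence equality for one operator-convex $f$ with $f(0)=0$, $f'(\infty)=+\infty$ propagates (via the decomposition \eqref{F-2.15}--\eqref{F-2.16} and monotone convergence) to equality for all $h\in\OM[0,\infty)$, giving (i) and symmetrically (i$'$), and then (iv$'$), (vii$'$), (ix$'$) are absorbed by the same analyticity argument as before. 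Finally (x) is obtained by specializing (viii) to $Y=I$, $z=it$ purely imaginary — the multiplicative-domain hypothesis is what makes $\Phi^*(\Phi(\rho)^{it}K\Phi(\sigma)^{-it})$ meaningful as a genuine ``cocycle'' identity — and conversely (x) for all $t\in\bR$ recovers (viii) by taking (operator-valued) Fourier/Laplace transforms and using that $\{\rho^{it}\}$, $\{\sigma^{it}\}$ generate the relevant modular flows.

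\textbf{Main obstacle.} I expect the crux to be the step (vii)$\Rightarrow$(viii) under $\rho^0=\sigma^0$ (and its refinement in part (3)): turning a \emph{scalar} equality in a single trace functional into the \emph{operator} identity (viii) valid for all $Y$ and all $z$. This is where one must genuinely use the equality case of the underlying operator inequality (Theorem \ref{T-2.2}(III$'$)) rather than just its trace, identify the equality space of the positive form $J_{\ffi_t}(\Phi(\rho),\Phi(\sigma))-\Phi J_{\ffi_t}(\rho,\sigma)\Phi^*$, and then show membership of $K\sigma^{1/2}$ in that space for one analytic family propagates to the algebraic relation (viii). The analyticity bookkeeping (rational dependence on $t$, entire dependence on $z$, locating poles on $-\Sp(\Delta_{\rho,\sigma})\cup(-\Sp(\Delta_{\Phi(\rho),\Phi(\sigma)}))$) is the other technically delicate part, but it is the kind of argument carried out in \cite{HMPB,HM} and should go through essentially verbatim.
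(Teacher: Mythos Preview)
Your outline for part (1) is essentially right and matches the paper: reduce to the one-parameter family $\ffi_t$, use the DPI to make the defect $d(t)\ge0$, and kill it by analyticity (limit point) from (iii); the paper does not actually need (iv) for part (1), but your rational-function/pole-counting remark is correct in spirit.

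The real gap is in parts (2) and (3), in the step from the scalar equalities to the operator identities (viii) and (x). Your proposed mechanism --- ``equality in the quadratic form of $J_{\ffi_t}(\Phi(\rho),\Phi(\sigma))-\Phi J_{\ffi_t}(\rho,\sigma)\Phi^*$ at $K$'' --- yields
\[
J_{\ffi_t}(\Phi(\rho),\Phi(\sigma))K=\Phi\bigl(J_{\ffi_t}(\rho,\sigma)\Phi^*(K)\bigr),
\]
which is a relation with $\Phi$ on the right, not $\Phi^*$; it does not give (viii) or (x). The paper instead introduces the contraction $V:\cBK\to\cBH$, $V(Y\tilde\sigma^{1/2}):=\Phi^*(Y)\sigma^{1/2}$, shows $V^*\Delta V\le\tilde\Delta$, and deduces from the equality at $K\tilde\sigma^{1/2}$ that $\ffi_t(\tilde\Delta)(K\tilde\sigma^{1/2})=V^*\ffi_t(\Delta)(\Phi^*(K)\sigma^{1/2})$ for enough $t$'s. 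Then an interpolation argument (inverting a Cauchy matrix indexed by $\Sp(\Delta)\cup\Sp(\tilde\Delta)$) upgrades this to arbitrary $f$. In part (3) the hypothesis $K\in\cM_{\Phi^*}$ is used \emph{precisely} to prove $V^*(\Phi^*(K)\sigma^{1/2})=K\tilde\sigma^{1/2}$, which is what converts the $\ffi_t$-identity into $(\Delta+t)^{-1}$-form and ultimately into (x); then (x)$\Rightarrow$(viii) is obtained not by Fourier transform but by a Choi-type Schwarz-equality lemma (Lemma \ref{L-3.9}): from (x) one checks $\sigma^0(\Phi^*(u_t^*u_t)-\Phi^*(u_t)^*\Phi^*(u_t))\sigma^0=0$, which forces $\sigma^0\Phi^*(Yu_t)\sigma^0=\sigma^0\Phi^*(Y)\Phi^*(u_t)\sigma^0$ for all $Y$.

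For part (2) you are missing the key twist: one does not assume $K\in\cM_{\Phi^*}$, one \emph{derives} it. After reducing (via the restricted map $\hat\Phi$) to the case where $\rho,\sigma,\tilde\rho,\tilde\sigma$ are all invertible, the Cauchy-matrix argument gives $f(\tilde\Delta)(K\tilde\sigma^{1/2})=V^*f(\Delta)(\Phi^*(K)\sigma^{1/2})$ for every $f$; plugging in $f\equiv1$ yields $\Phi^*(K^*K)=\Phi^*(K)^*\Phi^*(K)$, and $f(x)=x^{1/2}$ yields $\Phi^*(KK^*)=\Phi^*(K)\Phi^*(K)^*$, so $K\in\cM_{\Phi^*}$ after all. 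Only then does one invoke part (3). Your ``boundary term is harmless'' remark does not substitute for this, and without it there is no route from (iv) or (vii) to (viii) under $\rho^0=\sigma^0$ alone.
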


\begin{thm}\label{T-3.3}
Let $\rho$, $\sigma$, $\Phi$ and $K$ be as above.
\begin{itemize}
\item[(1)] Conditions (i$'$) and (i) are equivalent.
\item[(2)] Assume that $K\in\cM_{\Phi^*}$. Then condition (iv$'$) holds if and only if (iv) holds
together with $\rho^0\Phi^*(K)(I_\cH-\sigma^0)=0$.
\item[(3)] Assume that $K\in\cM_{\Phi^*}$ and $\rho^0\Phi^*(K)(I_\cH-\sigma^0)=0$
(in particular, the latter is the case if $\sigma\in\cBH^{++}$). Then conditions (i)--(x) (i$'$),
(iv$'$), (vii$'$) and (ix$'$) are all equivalent.
\end{itemize}
\end{thm}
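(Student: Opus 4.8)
\textbf{Proof proposal for Theorem \ref{T-3.3}.}

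The plan is to reduce everything to the already-established equivalences in Theorem \ref{T-3.2} together with the integral representations \eqref{F-3.1} and \eqref{F-3.2}. For part (1), I would argue that (i$'$)$\implies$(i) is trivial since every $h\in\OM[0,\infty)$ has $-h\in\OC(0,\infty)$ (after subtracting the constant $h(0)$), so the equality in (i$'$) for $f=-h+h(0)$, combined with the trivially-preserved constant term $S_c^K(\Phi(\rho)\|\Phi(\sigma))=c\Tr\Phi(\sigma)K^*K=c\Tr\sigma\Phi^*(K)^*\Phi^*(K)$ — which holds for all $K$ when one only subtracts a constant — gives (i). Wait: here one must be careful, because $\Tr\Phi(\sigma)K^*K=\Tr\sigma\Phi^*(K^*K)$ equals $\Tr\sigma\Phi^*(K)^*\Phi^*(K)$ only when $K\in\cM_{\Phi^*}$; but since part (1) is stated without that hypothesis, I would instead note that $-h$ differs from an operator convex function by the affine term $h(0)+h'(\infty)x$, and handle $S^K$ of the affine part directly using that both $\Tr\Phi(\sigma)K^*K$-type and $\Tr\Phi(\rho)KK^*$-type quantities cancel in the difference of the two sides of (i) versus (i$'$). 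For the reverse (i)$\implies$(i$'$): given $f\in\OC(0,\infty)$, decompose it via \eqref{F-3.2} into the affine-plus-quadratic part and an integral of functions $\psi_t$; each $\psi_t$ is, up to sign and affine terms, an operator monotone function of the form $-\ffi_t$ (note $\ffi_t\in\OM[0,\infty)$), so (i) yields equality of $S^K_{\psi_t}$ on both sides for each $t$; I then integrate over $\nu_f$ (using monotone convergence / Fubini, legitimate by the finiteness of $\int(1+t)^{-2}d\nu_f$ and the sign-definiteness furnished by Proposition \ref{P-2.11}) and add back the affine and $x^2$ terms, the $x^2$ term being covered by the $\alpha=2$ case already inside the family $\OC(0,\infty)$. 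This is where I expect the main technical care: the quadratic term $bx^2$ and the behavior of $f'(\infty)$ when it is $+\infty$ — one must show finiteness is not an issue under (i), or restrict as the statement of part (3) does via (iv$'$).

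For part (2), I would show the two directions separately. Assuming (iv$'$): the defining function $f$ there has $f'(\infty)=+\infty$, so in the expression \eqref{F-2.8} for $S_f^{\Phi^*(K)}(\rho\|\sigma)$ the term $f'(\infty)\Tr\Phi^*(K)^*\rho\Phi^*(K)(I_\cH-\sigma^0)$ appears with an infinite coefficient; since the equality in (iv$'$) is required to be $<+\infty$, and the left-hand side $S_f^K(\Phi(\rho)\|\Phi(\sigma))$ is likewise finite, both boundary terms must vanish, forcing $\Phi^*(K)^*\rho\Phi^*(K)(I_\cH-\sigma^0)=0$, equivalently (by Lemma \ref{L-2.8}) $\rho^0\Phi^*(K)(I_\cH-\sigma^0)=0$; and with that boundary term gone, the finite remaining parts of $S_f$ on both sides coincide with $S_{\tilde f}$ for $\tilde f$ the ``bounded part'' of $f$, which is $-h$ up to affine terms for an $h\in\OM_+[0,\infty)$ satisfying \eqref{F-3.5} with $\mu_h$ replaced by $\nu_f$ (since $\supp\mu_h=\supp\nu_f$ under the correspondence), giving exactly condition (iv). Conversely, assuming (iv) and $\rho^0\Phi^*(K)(I_\cH-\sigma^0)=0$: condition (iv) gives equality for some $h\in\OM_+[0,\infty)$ with $|\supp\mu_h|$ large; I would pick an $f\in\OC[0,\infty)$ of the prescribed form ($f(0)=0$, $f'(\infty)=+\infty$, $\nu_f$ with the same support as $\mu_h$) and assemble the equality for $S_f$ from: the integral part (which matches $S_{-h}$ up to affine terms, hence has equal values on both sides by (iv) — invoking Theorem \ref{T-3.2}(1), since (iv) is one of the equivalent conditions there), plus the affine part $ax$ whose $S^K$-values are controlled by $K\in\cM_{\Phi^*}$ and the hypothesis, plus the $f'(\infty)x$ boundary term which vanishes on both sides by the assumed identity $\rho^0\Phi^*(K)(I_\cH-\sigma^0)=0$ together with its $\Phi$-image counterpart (the latter needing $K\in\cM_{\Phi^*}$ so that $\Tr\Phi(\rho)KK^*(I_\cK-\Phi(\sigma)^0)$ also vanishes — this may require a short separate argument that (iv) forces $\Phi(\rho)^0K(I_\cK-\Phi(\sigma)^0)=0$, which I would extract from condition (x) or (vi) once the equivalence cascade is available). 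The parenthetical remark that $\sigma\in\cBH^{++}$ suffices is immediate since then $I_\cH-\sigma^0=0$.

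Part (3) is then essentially a bookkeeping assembly: under the standing hypotheses $K\in\cM_{\Phi^*}$ and $\rho^0\Phi^*(K)(I_\cH-\sigma^0)=0$, Theorem \ref{T-3.2}(3) already gives that (i)--(x) are all equivalent; part (1) of the present theorem adds (i$'$) to that list; part (2) adds (iv$'$) (its extra boundary condition being exactly our standing hypothesis); and (vii$'$), (ix$'$) are pulled in via Lemma \ref{L-3.1} — which gives (vi)$\implies$(vii$'$) and (ix$'$)$\implies$(vii$'$) unconditionally, so it remains only to close the loop, e.g.\ (vii$'$)$\implies$(vii), i.e.\ to show that equality in Ando's-convexity monotonicity for one exponent $\alpha\in(1,2)$ implies it for one exponent in $(0,1)$. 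For this I would use the analytic-continuation machinery underlying conditions (v)--(x): the function $z\mapsto\Tr K^*\Phi(\rho)^zK\Phi(\sigma)^{1-z}-\Tr\Phi^*(K)^*\rho^z\Phi^*(K)\sigma^{1-z}$ is entire, and Theorem \ref{T-2.16} (valid on $[1,2]$ by Remark \ref{R-2.20}, using $K\in\cM_{\Phi^*}$) plus Theorem \ref{T-2.14} give sign conditions on a set with a limit point — actually, I would instead observe that vanishing at one $\alpha\in(1,2)$ of this real-analytic difference, combined with its known one-sided sign on all of $(0,1)\cup(1,2)$, forces it to vanish identically there by a boundary/convexity argument as in the proof of Theorem \ref{T-3.2}(1), delivering (vii) and hence everything. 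The step I expect to be the genuine obstacle is precisely this promotion across the ``$x=1$ barrier'' — from an exponent in $(1,2)$ down into $(0,1)$ — since the function $x^\alpha$ is operator convex on one side and operator monotone (concave) on the other, so the sign of the monotonicity defect flips, and one cannot naively interpolate; the resolution should be to work with the single entire function in $z$ and exploit that it is real-analytic and one-signed on each of the two intervals with matching value $0$ at the shared point $z=1$ (where both sides trivially agree, as $S_{f_1}^K$ reduces to a trace-of-$\Phi^*(K)$ expression), pinning it to be identically zero.
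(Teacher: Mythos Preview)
Your overall architecture is right, but there is a genuine gap in part (3), and part (1) is both over-complicated in one direction and circular in the other.

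For part (1), the direction (i$'$)$\implies$(i) is immediate: if $h\in\OM[0,\infty)$ then $-h\in\OC(0,\infty)$ directly (no constants need subtracting), and $S_{-h}^K=-S_h^K$; your decomposition maneuvers are unnecessary. For (i)$\implies$(i$'$), your handling of the $bx^2$ term is circular --- you cannot invoke ``the $\alpha=2$ case already inside $\OC(0,\infty)$'' when (i$'$) is precisely what is being proved --- and in any case the integral decomposition \eqref{F-3.2} only covers $f\in\OC[0,\infty)$, not all of $\OC(0,\infty)$ (e.g.\ $f(x)=-\log x$). The paper instead recycles the approximation from the proof of Theorem \ref{T-2.16}: take $f_n(x)=f_n(0^+)+f_n'(\infty)x-h_n(x)$ with $h_n\in\OM[0,\infty)$, $h_n(0)=h_n'(\infty)=0$, and $f_n\nearrow f$; apply (i) to the functions $1$, $x$, and $h_n$; and pass to the limit. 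No $x^2$ term ever appears.

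Your real obstacle is in part (3), specifically (vii$'$)$\implies$(vii). The analyticity-plus-sign argument does not work: a real-analytic function $g$ on $(0,2)$ with $g\ge0$ on $(0,1)$, $g\le 0$ on $(1,2)$, and $g(1)=0$ can vanish at an interior point of $(1,2)$ without vanishing identically --- take $g(\alpha)=-(\alpha-1)(\alpha-\alpha_0)^2$. So one zero in $(1,2)$ together with the sign information is not enough. The paper closes the loop differently, via (vii$'$)$\implies$(iv$'$). The key observation (contained in the converse half of the proof of part (2)) is that the standing hypotheses \emph{by themselves} force both boundary terms in Lemma \ref{L-2.7} to vanish: $\rho^0\Phi^*(K)(I-\sigma^0)=0$ kills the $\sigma$-side term by Lemma \ref{L-2.8}, and then $K\in\cM_{\Phi^*}$ combined with Corollary \ref{C-2.19}(1) at $\alpha=1$ gives
\[
\Tr K^*\Phi(\rho)K\Phi(\sigma)^0\ \ge\ \Tr\Phi^*(K)^*\rho\Phi^*(K)\sigma^0
=\Tr\Phi^*(K)^*\rho\Phi^*(K)=\Tr\rho\Phi^*(KK^*)=\Tr K^*\Phi(\rho)K,
\]
so the $\Phi(\sigma)$-side boundary term vanishes as well. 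With both boundary terms zero, condition (vii$'$) is, via Lemma \ref{L-2.7}, literally the equality $S_{x^\beta}^K(\Phi(\rho)\|\Phi(\sigma))=S_{x^\beta}^{\Phi^*(K)}(\rho\|\sigma)$ for some $\beta\in(1,2)$, i.e.\ (iv$'$) with $f(x)=x^\beta$. Then (iv$'$)$\implies$(iv) by part (2), and Theorem \ref{T-3.2}(3) closes the chain. This boundary-term step --- not an analytic-continuation trick --- is the missing idea.
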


\begin{cor}\label{C-3.4}
Let $\Phi:=\T_\cK:\cB(\cH\otimes\cK)\to\cBH$ be the partial trace over $\cK$. Then conditions
(i)--(x) are all equivalent for every $\rho,\sigma\in\cB(\cH\otimes\cK)^+$ and any $K\in\cBH$.
Moreover, if $\rho^0(K\otimes I_\cK)(I_{\cH\otimes\cK}-\sigma^0)=0$ (in particular, if
$\sigma\in\cB(\cH\otimes\cK)^{++}$), then conditions (i)--(x) (i$'$), (iv$'$), (vii$'$) and (ix$'$) are
all equivalent.
\end{cor}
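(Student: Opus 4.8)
The plan is to deduce Corollary \ref{C-3.4} from Theorems \ref{T-3.2} and \ref{T-3.3} by verifying the two hypotheses that those theorems require in their strongest parts, namely that $\Phi=\T_\cK$ is a trace-preserving map whose adjoint is a Schwarz map, and that every $K\in\cBH$ lies in the multiplicative domain $\cM_{\Phi^*}$. Granting these, part (3) of Theorem \ref{T-3.2} gives the equivalence of (i)--(x) for arbitrary $\rho,\sigma\in\cB(\cH\otimes\cK)^+$ and arbitrary $K\in\cBH$, and part (3) of Theorem \ref{T-3.3} gives the extended list of equivalences once the extra condition $\rho^0(K\otimes I_\cK)(I_{\cH\otimes\cK}-\sigma^0)=0$ is assumed (which, as already noted in Theorem \ref{T-3.3}(3), is automatic when $\sigma\in\cB(\cH\otimes\cK)^{++}$, since then $\sigma^0=I_{\cH\otimes\cK}$). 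So the corollary reduces entirely to the two structural facts about the partial trace.

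First I would record that $\T_\cK:\cB(\cH\otimes\cK)\to\cBH$ is trace-preserving: $\Tr_\cH(\T_\cK X)=\Tr_{\cH\otimes\cK}X$ for all $X$, which is just the defining property of the partial trace. Equivalently, its Hilbert--Schmidt adjoint is the unital CP map $\Phi^*:\cBH\to\cB(\cH\otimes\cK)$ given by $\Phi^*(A)=A\otimes I_\cK$; one checks the adjoint relation $\<X,A\otimes I_\cK\>_\HS=\Tr X^*(A\otimes I_\cK)=\Tr(\T_\cK X)^*A=\<\T_\cK X,A\>_\HS$ directly from the definition of $\T_\cK$. Since $A\mapsto A\otimes I_\cK$ is a unital $*$-homomorphism, it is in particular completely positive, hence $2$-positive, hence a Schwarz map (as recalled in Sec.~\ref{Sec-2}); so the hypotheses of Theorems \ref{T-3.2} and \ref{T-3.3} on $\Phi$ are met.

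Next I would verify $\cM_{\Phi^*}=\cBK$ in this setting, or more precisely that every $K\in\cBH$ viewed inside $\cB(\cH\otimes\cK)$ via $K\mapsto K$ — here the relevant reference operator appearing in the corollary is $K\in\cBH$, and the quantity $\Phi^*(K)$ that shows up throughout conditions (v)--(x) is $K\otimes I_\cK$. Because $\Phi^*(A)=A\otimes I_\cK$ is a genuine $*$-homomorphism of $\cBH$ into $\cB(\cH\otimes\cK)$, it is multiplicative on all of $\cBH$: $\Phi^*(A^*A)=(A\otimes I_\cK)^*(A\otimes I_\cK)=\Phi^*(A)^*\Phi^*(A)$ and likewise for $AA^*$. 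Hence the multiplicative domain, as defined in \eqref{F-2.12}, is the whole of $\cBH$ — in particular every $K\in\cBH$ satisfies $K\in\cM_{\Phi^*}$. (If one wishes, one may instead invoke the abstract characterization \eqref{F-2.13}: for a unital $*$-homomorphism $\Psi$, $\Psi(YX)=\Psi(Y)\Psi(X)$ trivially for all $X,Y$, so \eqref{F-2.13} gives $\cM_\Psi$ equal to the full algebra.) This is the one point that needs a sentence of care: one must be clear that the ``$K$'' of the corollary sits in $\cBH$ and that the object $\Phi^*(K)$ in the displayed conditions equals $K\otimes I_\cK$, so that the multiplicative-domain hypothesis of Theorems \ref{T-3.2}(3) and \ref{T-3.3}(3) is vacuously satisfied.

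With the two structural facts in hand, the proof is immediate: apply Theorem \ref{T-3.2}(3) to get the equivalence of (i)--(x), and apply Theorem \ref{T-3.3}(3) — whose standing hypothesis $K\in\cM_{\Phi^*}$ is now automatic and whose auxiliary hypothesis is exactly the stated condition $\rho^0(K\otimes I_\cK)(I_{\cH\otimes\cK}-\sigma^0)=0$ — to get the equivalence of the full list (i)--(x), (i$'$), (iv$'$), (vii$'$), (ix$'$). The parenthetical remark that $\sigma\in\cB(\cH\otimes\cK)^{++}$ suffices follows since then $\sigma^0=I_{\cH\otimes\cK}$ makes the factor $I_{\cH\otimes\cK}-\sigma^0$ vanish. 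There is no real obstacle here; the only thing to get exactly right is the bookkeeping identification $\Phi^*=(\cdot)\otimes I_\cK$ and the resulting observation that this map is a $*$-homomorphism, which is what trivializes the multiplicative-domain condition.
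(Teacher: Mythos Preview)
Your proposal is correct and follows essentially the same route as the paper's proof: identify $\Phi^*(K)=K\otimes I_\cK$ as a unital $*$-homomorphism so that every $K\in\cBH$ lies in $\cM_{\Phi^*}$, then invoke Theorems \ref{T-3.2}(3) and \ref{T-3.3}(3). The paper's proof is just a terse two-line version of the same argument; the only minor slip in your write-up is the phrase ``$\cM_{\Phi^*}=\cBK$'', which should read $\cM_{\Phi^*}=\cBH$ (since here $\Phi^*:\cBH\to\cB(\cH\otimes\cK)$), but your subsequent reasoning makes clear you mean the correct thing.
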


\begin{proof}
Since $\Phi^*(K)=K\otimes I_\cK$ for all $K\in\cBH$, $\Phi^*$ is a *-homomorphism so that
any $K\in\cBH$ is obviously in $\cM_{\Phi^*}$. Hence the result follows from
Theorems \ref{T-3.2}\,(3) and \ref{T-3.3}\,(3).
\end{proof}

\begin{remark}\label{R-3.5}\rm
In the situation of Theorem \ref{T-3.2}\,(2), condition (x) does not imply (ix) unlike
Theorem \ref{T-3.2}\,(3). For instance, let $\Phi=E_\cA:\cBH\to \cBH$ be the trace-preserving
conditional expectation onto a $C^*$-subalgebra $\cA$ of $\cBH$. Note that $E_\cA^*=E_\cA$ and
\[
\cA=\cM_{E_\cA}=\{X\in\cBH:E_\cA(X^*X)=E_\cA(X)^*E_\cA(X)\},
\]
as easily seen from $E_\cA((X-E_\cA(X))^*(X-E_\cA(X)))=E_\cA(X^*X)-E_\cA(X)^*E_\cA(X)$.
Let $\rho=\sigma=I$. Then (x) reduces to $E_\cA(K)=E_\cA(K)$ that is null, but (ix) means
$E_\cA(K^*K)=E_\cA(K)^*E_\cA(K)$ and hence $K\in\cA$.
\end{remark}

\begin{remark}\label{R-3.6}\rm
Let $\Phi=E_\cA:\cBH\to \cBH$ be as given in Remark \ref{R-3.5}. For every
$\rho,\sigma\in\cBH^+$ and any $K\in\cA$, Theorem \ref{T-3.2}\,(3) says that the following
conditions are equivalent, which are (i), (vii) and (x) in the present setting:
\begin{itemize}
\item $S_h^K(E_\cA(\rho)\|E_\cA(\sigma))=S_h^K(\rho\|\sigma)$ for all $h\in\OM[0,\infty)$,
\item $\Tr K^*E_\cA(\rho)^\alpha KE_\cA(\sigma)^{1-\alpha}=
\Tr K^*\rho^\alpha K\sigma^{1-\alpha}$ for some (equivalently, any) $\alpha\in(0,1)$,
\item $E_\cA(\rho)^{it}KE_\cA(\sigma)^{-it}\sigma^0=\rho^{it}K\sigma^{-it}$ for all $t\in\bR$.
\end{itemize}
In the particular case that $K=I$, the equivalence of the above conditions was given in
\cite[Theorem 9]{JR}.
\end{remark}

\begin{remark}\label{R-3.7}\rm
Sharma \cite{Sh} and Vershynina \cite{Ve} discussed equality conditions for the monotonicity
of quasi-entropies under partial traces (as in Corollary \ref{C-3.4}), to be more precise, conditions
for the equality
\begin{align}\label{F-3.6}
S_f^K(\T_\cK\rho\|\T_\cK\sigma)=S_f^{K\otimes I_\cK}(\rho\|\sigma),
\end{align}
where $\rho,\sigma\in\cB(\cH\otimes\cK)$, $K\in\cBH$ and $f\in\OC(0,\infty)$ under a
somewhat strong assumption on $\nu_f$. Furthermore, the paper \cite{Ve} treated a slightly more
general equality
\begin{align}\label{F-3.7}
S_f^K(\T_\cK\rho\|\T_\cK\sigma)=S_f^{K\otimes V}(\rho\|\sigma)
\end{align}
with a unitary $V\in\cBK$. In view of \eqref{F-2.18}, equality \eqref{F-3.7} is rewritten as
$S_f^K(\T_\cK\rho'\|\T_\cK\sigma)=S_f^{K\otimes I_\cK}(\rho'\|\sigma)$ where
$\rho':=(I_\cH\otimes V)^*\rho(I_\cH\otimes V)$, and the condition
$\rho^0(K\otimes V)(I_{\cH\otimes\cK}-\sigma^0)=0$ in Corollary \ref{C-3.4} is equivalent to
$\rho'^0(K\otimes I_\cK)(I_{\cH\otimes\cK}-\sigma^0)=0$. In this way, we see that the
characterization problem for \eqref{F-3.7} is equivalent to that for \eqref{F-3.6}. Conditions (vi) and
(x) (in the partial trace case) showed up in \cite{Sh} and \cite{Ve}, respectively. The monotonicity
and its equality case of quasi-entropies under TPCP maps (i.e., quantum channels) are essentially
reduced to the partial trace case in view of Stinespring's representation of those maps, while in
the present paper we treat more general trace-preserving positive maps than TPCP maps (as in
Theorems \ref{T-3.2} and \ref{T-3.3}).
\end{remark}

\begin{remark}\label{R-3.8}\rm
In particular, when $K=I_\cK$, since the two assumptions in Theorem \ref{T-3.3}\,(3) are obviously
satisfied for any $\rho,\sigma\in\cBH^+$ with $\rho^0\le\sigma^0$, Theorems \ref{T-3.2}\,(3)
and \ref{T-3.3}\,(3) together reduce to \cite[Theorem 5.1]{HMPB} (also \cite[Theorem 3.18]{HM}).
Here, we note that the assumption $\rho^0\Phi^*(K)(I_\cH-\sigma^0)=0$ is essential in
Theorem \ref{T-3.3}\,(3) as the counterpart of $\rho^0\le\sigma^0$ assumed in
\cite[Theorem 5.1]{HMPB}, without which condition (iv$'$) never shows up (in view of
Lemma \ref{L-2.7}).
\end{remark}

\subsection{Proofs of theorems}\label{Sec-3.2}

Throughout the rest of this section, for notational simplicity we write
\[
\tilde\rho:=\Phi(\rho),\qquad\tilde\sigma:=\Phi(\sigma),\qquad
\Delta:=\Delta_{\rho,\sigma},\qquad\tilde\Delta:=\Delta_{\tilde\rho,\tilde\sigma}.
\]

\medskip\noindent
{\bf Proof of Theorem \ref{T-3.2}\,(1).}\enspace
By Lemma \ref{L-3.1} it suffices to prove that (ii)$\implies$(vi) and (iii)$\implies$(ii) hold.

(ii)$\implies$(vi).\enspace
Apply (ii) to $h(x)=x^\alpha$ for any $\alpha\in(0,1)$. We then have
\[
\Tr K^*\tilde\rho^\alpha K\tilde\sigma^{1-\alpha}
=\Tr\Phi^*(K)^*\rho^\alpha\Phi^*(K)\sigma^{1-\alpha},\qquad\alpha\in(0,1),
\]
which implies (vi) by analytic continuation.

(iii)$\implies$(ii).\enspace
Assume that $S_h^K(\tilde\rho\|\tilde\sigma)=S_h^{\Phi^*(K)}(\rho\|\sigma)$ for some
$h\in\OM_+[0,\infty)$ and with the condition stated in (iii). By \eqref{F-3.1} we write
$h(x)=h(0)+h'(\infty)x+h_0(x)$, where $h_0\in\OM[0,\infty)$ with $h_0(0)=h_0'(\infty)=0$.
Then the assumption says that
\begin{align*}
&h(0)\Tr\tilde\sigma K^*K+h'(\infty)\Tr\tilde\rho KK^*
+S_{h_0}^K(\tilde\rho\|\tilde\sigma) \\
&\qquad=h(0)\Tr\sigma\Phi^*(K)^*\Phi^*(K)+h'(\infty)\Tr\rho\Phi^*(K)\Phi^*(K)^*
+S_{h_0}^{\Phi^*(K)}(\rho\|\sigma).
\end{align*}
Since
\begin{align*}
\Tr\tilde\sigma K^*K&=\Tr\sigma\Phi^*(K^*K)\ge\Tr\sigma\Phi^*(K)^*\Phi^*(K), \\
\Tr\tilde\rho KK^*&=\Tr\rho\Phi^*(KK^*)\ge\Tr\rho\Phi^*(K)\Phi^*(K)^*
\end{align*}
as well as $S_{h_0}^K(\tilde\rho\|\tilde\sigma)\ge S_{h_0}^{\Phi^*(K)}(\rho\|\sigma)$ by
Proposition \ref{T-2.14}, we have
$S_{h_0}^K(\tilde\rho\|\tilde\sigma)=S_{h_0}^{\Phi^*(K)}(\rho\|\sigma)$. Note that the integral
expression in \eqref{F-3.1} for $h_0$ is written as
\begin{align}\label{F-3.8}
h_0(x)=\int_{(0,\infty)}\ffi_t(x)(1+t)\,d\mu_{h_0}(t),\qquad x\in[0,\infty),
\end{align}
where $\ffi_t$ is given in \eqref{F-3.3}. Hence it follows that
\begin{equation}\label{F-3.9}
\begin{aligned}
S_{h_0}^K(\tilde\rho\|\tilde\sigma)
&=\int_{(0,\infty)}S_{\ffi_t}^K(\tilde\rho\|\tilde\sigma)(1+t)\,d\mu_{h_0}(t), \\
S_{h_0}^{\Phi^*(K)}(\rho\|\sigma)
&=\int_{(0,\infty)}S_{\ffi_t}^{\Phi^*(K)}(\rho\|\sigma)(1+t)\,d\mu_{h_0}(t),
\end{aligned}
\end{equation}
and by Proposition \ref{T-2.14},
\[
S_{\ffi_t}^K(\tilde\rho\|\tilde\sigma)\ge S_{\ffi_t}^{\Phi^*(K)}(\rho\|\sigma),
\qquad t\in(0,\infty).
\]
Therefore, we find a $T\subset\supp\mu_{h_0}$ ($=\supp\mu_h$) such that $T$ has a limit point
in $(0,\infty)$ and $S_{\ffi_t}^K(\tilde\rho\|\tilde\sigma)=S_{\ffi_t}^{\Phi^*(K)}(\rho\|\sigma)$ for all
$t\in T$. By Lemma \ref{L-2.7} this means that
\begin{equation}\label{F-3.10}
\begin{aligned}
&\<K\tilde\sigma^{1/2},\tilde\Delta(\tilde\Delta+tI_{\cBK})^{-1}(K\tilde\sigma^{1/2}\>_\HS \\
&\qquad=\<\Phi^*(K)\sigma^{1/2},\Delta(\Delta+tI_{\cBH})^{-1}(\Phi^*(K)\sigma^{1/2})\>_\HS,
\qquad t\in T.
\end{aligned}
\end{equation}
Both side functions in $t\in(0,\infty)$ of the above equality extend to analytic functions in
$\{z\in\bC:\Re z>0\}$. Hence the coincidence theorem for analytic functions yields that
\eqref{F-3.10} holds for all $t\in(0,\infty)$, i.e.,
$S_{\ffi_t}^K(\tilde\rho\|\tilde\sigma)=S_{\ffi_t}^{\Phi^*(K)}(\rho\|\sigma)$ for all $t\in(0,\infty)$. Thus,
(ii) follows by making use of the integral expressions for any $h$ stated in (ii) like \eqref{F-3.9}.\qed

\medskip
Next, we prove the statement (3) before (2) of Theorem \ref{T-3.2}.

\medskip\noindent
{\bf Proof of Theorem \ref{T-3.2}\,(3).}\enspace
By Lemma \ref{L-3.1} it suffices to prove the implications (iv)$\implies$(x)$\implies$(viii) and
(v)$\implies$(i).

(v)$\implies$(i).\enspace
By (v) with $Y=K^*$ and $z=n\in\bN\cup\{0\}$ we have
\[
\Tr K^*\tilde\rho^nK\tilde\sigma^{1-n}=\Tr\Phi^*(K)^*\rho^n\Phi^*(K)\sigma^{1-n},
\]
which is rewritten as
\begin{align}\label{F-3.11}
\<K\tilde\sigma^{1/2},\tilde\Delta^n(K\tilde\sigma^{1/2})\>_\HS
=\<\Phi^*(K)\sigma^{1/2},\Delta^n(\Phi^*(K)\sigma^{1/2})\>_\HS,
\qquad n\in\bN\cup\{0\}.
\end{align}
Let $f$ be any continuous function $f$ on $[0,\infty)$. Approximate $f$ uniformly  on an interval
$[0,c]$ including $\Sp(\Delta)\cup\Sp(\tilde\Delta)$ by polynomials and use \eqref{F-3.11} to obtain
\begin{align}\label{F-3.12}
\<K\tilde\sigma^{1/2},f(\tilde\Delta)(K\tilde\sigma^{1/2})\>_\HS
=\<\Phi^*(K)\sigma^{1/2},f(\Delta)(\Phi^*(K)\sigma^{1/2})\>_\HS.
\end{align}
Now, for any $h\in\OM[0,\infty)$, write $h(x)=h(0)+h'(\infty)x+h_0(x)$ with
$h_0(0)=h_0'(\infty)=0$ as in the proof of Theorem \ref{T-3.2}\,(1). From Lemma \ref{L-2.7} and
\eqref{F-3.12} for $f=h_0$ it follows that
\[
S_{h_0}^K(\tilde\rho\|\tilde\sigma)=S_{h_0}^{\Phi^*(K)}(\rho\|\sigma).
\]
Moreover, by the assumption $K\in\cM_{\Phi^*}$,
\begin{equation}\label{F-3.13}
\begin{aligned}
S_1^K(\tilde\rho\|\tilde\sigma)=\Tr\tilde\sigma K^*K
&=\Tr\sigma\Phi^*(K)^*\Phi^*(K)=S_1^{\Phi^*(K)}(\rho\|\sigma), \\
S_x^K(\tilde\rho\|\tilde\sigma)=\Tr\tilde\rho KK^*
&=\Tr\rho\Phi^*(K)\Phi^*(K)^*=S_x^{\Phi^*(K)}(\rho\|\sigma).
\end{aligned}
\end{equation}
Combining the above equalities yields (i).

(iv)$\implies$(x).\enspace
Assume that $S_h^K(\tilde\rho\|\tilde\sigma)=S_h^{\Phi^*(K)}(\rho\|\sigma)$ for some
$h\in\OM_+[0,\infty)$ and \eqref{F-3.5}. Then, as in the proof of Theorem \ref{T-3.2}\,(1),
we can choose a $T\subset\supp\mu_h$ such that $|T|\ge|\Sp(\Delta)\cup\Sp(\tilde\Delta)|$ and
\begin{align}\label{F-3.14}
S_{\ffi_t}^K(\tilde\rho\|\tilde\sigma)=S_{\ffi_t}^{\Phi^*(K)}(\rho\|\sigma),
\qquad t\in T.
\end{align}
Since
\[
\|\Phi^*(Y)\sigma^{1/2}\|_\HS^2=\Tr\sigma\Phi^*(Y^*)\Phi^*(Y)
\le\Tr\sigma\Phi^*(Y^*Y)=\|Y\tilde\sigma^{1/2}\|_\HS^2,\qquad
Y\in\cBK,
\]
one can define a linear contraction $V:\cBK\to \cBH$ by
$V(Y\tilde\sigma^{1/2}):=\Phi^*(Y)\sigma^{1/2}$ for $Y\in\cBK$ and $V(Y):=0$ for
$Y\in(\cBK\tilde\sigma)^\perp$, that is,
\begin{align}\label{F-3.15}
V(Y):=\Phi^*(Y\tilde\sigma^{-1/2})\sigma^{1/2},\qquad Y\in\cBK.
\end{align}
Then one has for every $Y\in\cBK$,
\begin{align*}
\<Y,V^*\Delta V(Y)\>_\HS
&=\<\Phi^*(Y\tilde\sigma^{-1/2})\sigma^{1/2},
\Delta\Phi^*(Y\tilde\sigma^{-1/2})\sigma^{1/2}\>_\HS \\
&=\<\Phi^*(Y\tilde\sigma^{-1/2})\sigma^{1/2},
\rho\Phi^*(Y\tilde\sigma^{-1/2})\sigma^{-1/2}\>_\HS \\
&=\Tr\rho\Phi^*(Y\tilde\sigma^{-1/2})\sigma^0\Phi^*(Y\tilde\sigma^{-1/2})^* \\
&\le\Tr\rho\Phi^*(Y\tilde\sigma^{-1}Y^*)=\Tr\tilde\rho Y\tilde\sigma^{-1}Y^*
=\<Y,\tilde\Delta Y\>_\HS,
\end{align*}
which shows that $V^*\Delta V\le\tilde\Delta$ on $(\cBK,\<\cdot,\cdot\>_\HS)$. Therefore, it
follows from \cite[Theorem 2.1]{HaPe} that
\begin{align}\label{F-3.16}
\ffi_t(\tilde\Delta)\ge\ffi_t(V^*\Delta V)\ge V^*\ffi_t(\Delta)V.
\end{align}
On the other hand, since
$V(K\tilde\sigma^{1/2})=\Phi^*(K)\sigma^{1/2}$, by \eqref{F-3.14} and Lemma \ref{L-2.7} (with
$\ffi_t'(\infty)=0$) one has
\[
\<K\tilde\sigma^{1/2},\bigl[\ffi_t(\tilde\Delta)-V^*\ffi_t(\Delta)V\bigr](K\tilde\sigma^{1/2})
\>_\HS=0,\qquad t\in T.
\]
This with \eqref{F-3.16} yields
\begin{align}\label{F-3.17}
\ffi_t(\tilde\Delta)(K\tilde\sigma^{1/2})=V^*\ffi_t(\Delta)(\Phi^*(K)\sigma^{1/2}),
\qquad t\in T,
\end{align}
which means by \eqref{F-3.3} that
\begin{align}\label{F-3.18}
[I-t(\tilde\Delta+tI)^{-1}](K\tilde\sigma^{1/2})=V^*[I-t(\Delta+tI)^{-1}](\Phi^*(K)\sigma^{1/2}),
\qquad t\in T.
\end{align}
Moreover, one has for every $Y\in\cBK$,
\begin{align*}
\<V(Y),\Phi^*(K)\sigma^{1/2}\>_\HS
&=\<\Phi^*(Y\tilde\sigma^{-1/2})\sigma^{1/2},\Phi^*(K)\sigma^{1/2}\>_\HS \\
&=\Tr\Phi^*(Y\tilde\sigma^{-1/2})^*\Phi^*(K)\sigma
=\Tr\Phi^*(\tilde\sigma^{-1/2}Y^*K)\sigma \\
&=\Tr\tilde\sigma^{-1/2}Y^*K\tilde\sigma=\<Y,K\tilde\sigma^{1/2}\>_\HS,
\end{align*}
where the third equality above is due to $K\in\cM_{\Phi^*}$ (see \eqref{F-2.13}). Therefore,
$K\tilde\sigma^{1/2}=V^*(\Phi^*(K)\sigma^{1/2})$. By this and \eqref{F-3.18} we have
\[
(\tilde\Delta+tI)^{-1}K\tilde\sigma^{1/2}=V^*(\Delta+tI)^{-1}\Phi^*(K)\sigma^{1/2},
\qquad t\in T.
\]
Since $|T|\ge|\Sp(\Delta)\cup\Sp(\tilde\Delta)|$, by a similar argument to
\cite[around (5.5)--(5.7)]{HMPB}, we have for any function $f$ on $[0,\infty)$,
\[
V\bigl(f(\tilde\Delta)(K\tilde\sigma^{1/2})\bigr)=f(\Delta)(\Phi^*(K)\sigma^{1/2}),
\]
that is, by \eqref{F-3.15},
\[
\Phi^*\bigl(f(\tilde\Delta)(K\tilde\sigma^{1/2})\tilde\sigma^{-1/2}\bigr)\sigma^{1/2}
=f(\Delta)(\Phi^*(K)\sigma^{1/2}).
\]
Taking $f(x)=x^z$ on $(0,\infty)$ and $f(0)=0$ for any $z\in\bC$ yields
\[
\Phi^*\bigl(\tilde\rho^zK\tilde\sigma^{1/2}\tilde\sigma^{-z}\tilde\sigma^{-1/2}\bigr)\sigma^{1/2}
=\rho^z\Phi^*(K)\sigma^{1/2}\sigma^{-z}
\]
so that
\begin{align}\label{F-3.19}
\Phi^*(\tilde\rho^zK\tilde\sigma^{-z})\sigma^0=\rho^z\Phi^*(K)\sigma^{-z},
\qquad z\in\bC,
\end{align}
showing (x).

(x)$\implies$(viii).\enspace
Put $u_t:=\tilde\rho^{it}K\tilde\sigma^{-it}$ and  $w_t:=\rho^{it}\Phi^*(K)\sigma^{-it}$ for
$t\in\bR$. By (x) we have $\Phi^*(u_t)\sigma^0=w_t$ and $\sigma^0\Phi^*(u_t^*)=w_t^*$
so that
\begin{align*}
\Tr\sigma\bigl(\Phi^*(u_t^*u_t)-\Phi^*(u_t^*)\Phi^*(u_t)\bigr)
&=\Tr\sigma\bigl(\Phi^*(u_t^*u_t)-\sigma^0\Phi^*(u_t^*)\Phi^*(u_t)\sigma^0\bigr) \\
&=\Tr\tilde\sigma u_t^*u_t-\Tr\sigma w_t^*w_t \\
&=\Tr\tilde\sigma\tilde\sigma^{it}K^*\tilde\rho^0K\tilde\sigma^{-it}
-\Tr\sigma\sigma^{it}\Phi^*(K)^*\rho^0\Phi^*(K)\sigma^{-it} \\
&=\Tr\sigma\Phi^*(K^*\tilde\rho^0K\tilde\sigma^0)\sigma^0
-\Tr\sigma\Phi^*(K)^*\rho^0\Phi^*(K)\sigma^0 \\
&=\Tr\sigma\Phi^*(K)^*\Phi^*(\tilde\rho^0K\tilde\sigma^0)\sigma^0
-\Tr\sigma\Phi^*(K)^*\rho^0\Phi^*(K)\sigma^0,
\end{align*}
where the last equality is due to $K\in\cM_{\Phi^*}$. Letting $t=0$ in (x) gives
\[
\Phi^*(\tilde\rho^0K\tilde\sigma^0)\sigma^0=\rho^0\Phi^*(K)\sigma^0.
\]
We hence have
\[
\Tr\sigma\bigl(\Phi^*(u_t^*u_t)-\Phi^*(u_t^*)\Phi^*(u_t)\bigr)=0
\]
so that
\[
\sigma^0(\Phi^*(u_t^*u_t)-\Phi^*(u_t^*)\Phi^*(u_t))\sigma^0=0.
\]
For any $Y\in\cBK$ and any $t\in\bR$, by Lemma \ref{L-3.9} below we have
\begin{align*}
\sigma^0\Phi^*(Y\tilde\rho^{it}K\tilde\sigma^{-it})\sigma^0
&=\sigma^0\Phi^*(Yu_t)\sigma^0=\sigma^0\Phi^*(Y)\Phi^*(u_t)\sigma^0 \\
&=\sigma^0\Phi^*(Y)w_t=\sigma^0\Phi^*(Y)\rho^{it}\Phi^*(K)\sigma^{-it},
\end{align*}
which implies (viii) by analytic continuation.\qed

\medskip
The next lemma is a slight modification of \cite[Theorem 3.1]{Choi} and
\cite[Lemma 3.9]{HMPB}, while we give a proof for the convenience of the reader.

\begin{lemma}\label{L-3.9}
Let $\Psi:\cBK\to \cBH$ be a unital Schwarz map, $K\in\cBK$ and $A\in\cBH$. If
\[
A^*\Psi(K^*K)A=A^*\Psi(K)^*\Psi(K)A,
\]
then for every $Y\in\cBK$,
\begin{align}\label{F-3.20}
A^*\Psi(YK)A&=A^*\Psi(Y)\Psi(K)A.
\end{align}
\end{lemma}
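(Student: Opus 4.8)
The plan is to prove Lemma \ref{L-3.9} by a Cauchy--Schwarz argument adapted to the Schwarz map $\Psi$, following the classical multiplicative-domain technique of Choi but localized by the operator $A$. First I would record the key inequality: since $\Psi$ is a unital Schwarz map, for every $Z,W\in\cBK$ one has the operator version of Cauchy--Schwarz,
\[
\Psi(Z^*W)^*\Psi(Z^*Z)^{-1}\Psi(Z^*W)\le\Psi(W^*W),
\]
or, in the form I actually want to use, the $2\times2$ block-positivity statement
\[
\begin{bmatrix}\Psi(Z^*Z) & \Psi(Z^*W)\\ \Psi(W^*Z) & \Psi(W^*W)\end{bmatrix}\ge0
\]
for all $Z,W\in\cBK$; this is standard for Schwarz maps (apply the Schwarz inequality to $zZ+W$ for scalars, or invoke the Kadison--Schwarz machinery). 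Specializing $Z=K$ and using the hypothesis $A^*\Psi(K^*K)A=A^*\big(\Psi(K)^*\Psi(K)\big)A$, which says that compressing the positive operator $\Psi(K^*K)-\Psi(K)^*\Psi(K)\ge0$ by $A$ kills it, I get $\big(\Psi(K^*K)-\Psi(K)^*\Psi(K)\big)^{1/2}A=0$, hence $\Psi(K^*K)A=\Psi(K)^*\Psi(K)A$ and $A^*\Psi(K^*K)=A^*\Psi(K)^*\Psi(K)$ as well.

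Next I would exploit the block-positivity with $Z=K$ and $W=YK$ for arbitrary $Y\in\cBK$. Compressing the $2\times2$ block matrix by $\mathrm{diag}(A,I)$ preserves positivity, giving
\[
\begin{bmatrix}A^*\Psi(K^*K)A & A^*\Psi(K^*Y^*\cdot)\,?\end{bmatrix}
\]
— more carefully, I want the block matrix built from the pair $(KA\text{-type data})$; the cleanest route is to note that positivity of a $2\times2$ block $\begin{bmatrix}P & Q\\ Q^* & R\end{bmatrix}\ge0$ together with $Pv=0$ forces $Qv=0$. Applying this with $P=\Psi(K^*K)-\Psi(K)^*\Psi(K)$ (which is $\ge0$) is not quite enough; instead I apply it to the genuinely positive block matrix coming from the two ``vectors'' $K$ and $YK$ under $\Psi$, namely
\[
M:=\begin{bmatrix}\Psi(K^*K) & \Psi(K^*(YK))\\ \Psi((YK)^*K) & \Psi((YK)^*(YK))\end{bmatrix}
-\begin{bmatrix}\Psi(K)^*\Psi(K) & \Psi(K)^*\Psi(YK)\\ \Psi(YK)^*\Psi(K) & \Psi(YK)^*\Psi(YK)\end{bmatrix}\ge0,
\]
the positivity of $M$ being exactly the statement that $X\mapsto\Psi(X^*X)-\Psi(X)^*\Psi(X)$ is a completely-positive-type quadratic form (this is the content of the Schwarz property applied to $\alpha K+\beta YK$ for scalars $\alpha,\beta$, which only involves the single Schwarz inequality, no $2$-positivity). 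Then compressing $M$ by $\mathrm{diag}(A,I)$ stays positive, and its $(1,1)$ entry $A^*\big(\Psi(K^*K)-\Psi(K)^*\Psi(K)\big)A$ vanishes by hypothesis; hence the $(1,2)$ entry must vanish too:
\[
A^*\big(\Psi(K^*YK)-\Psi(K)^*\Psi(YK)\big)=0.
\]

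From here I would finish by a short bookkeeping step. Replacing $Y$ by $Y^*$ and taking adjoints, and combining with $A^*\Psi(K^*K)=A^*\Psi(K)^*\Psi(K)$, I can propagate the factorization: first deduce $A^*\Psi(YK)=A^*\Psi(Y)\Psi(K)$ is \emph{not} yet what I have — I have an expression with $K^*$ on the left — so the honest way is to run the same compressed-block-positivity argument with the pair $(K,\,Y^*K)$ replaced by the pair whose cross term is $\Psi(YK)$ directly, i.e. use vectors $I$ and ... . The cleanest packaging: apply the vanishing of the $(1,2)$-entry of the $A$-compression of the Schwarz quadratic form evaluated on the pair $(K,\,Z)$ for a general $Z\in\cBK$ once we know $A^*(\Psi(K^*K)-\Psi(K)^*\Psi(K))A=0$; that yields $A^*\Psi(K^*Z)A' $-type identities, and specializing $Z=YK$, $A'=A$ and then peeling off using $\Psi(K^*K)A=\Psi(K)^*\Psi(K)A$ gives \eqref{F-3.20}. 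The main obstacle I anticipate is purely organizational: getting the correct pair of ``vectors'' into the Schwarz block matrix so that the hypothesis $A^*\Psi(K^*K)A=A^*\Psi(K)^*\Psi(K)A$ is literally the $(1,1)$-entry condition, and then checking that the conclusion's $(1,2)$-entry is exactly $A^*(\Psi(YK)-\Psi(Y)\Psi(K))A$ up to multiplying by $A$ from one side only — the lemma as stated has $A^*\cdots A$ on both sides, so I must be careful that the argument genuinely produces the two-sided compression and not just a one-sided one. No $2$-positivity of $\Psi$ is needed anywhere, since all block matrices that appear are of the special form $\sum_j c_j^* c_j$ in a sense controlled by a single Schwarz inequality; I would make that explicit to match Remark \ref{R-2.3}'s emphasis that Lieb--Ruskai is unavailable here.
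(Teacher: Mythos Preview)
Your overall idea --- exploit the positivity of the Schwarz defect $\Psi(X^*X)-\Psi(X)^*\Psi(X)$ via a Cauchy--Schwarz mechanism --- is exactly right, but the execution has a real gap and is considerably more tangled than needed.

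The gap is your claim that the $2\times2$ operator matrix $M\ge0$ in $\bM_2(\cBH)$ follows from the Schwarz inequality applied to $\alpha K+\beta(YK)$. That computation only gives
\[
|\alpha|^2 M_{11}+\bar\alpha\beta M_{12}+\alpha\bar\beta M_{21}+|\beta|^2 M_{22}\ge0
\quad\text{for all scalars }\alpha,\beta\in\bC,
\]
which is strictly weaker than $M\ge0$ as a block operator matrix (the latter requires the inequality for arbitrary vectors $(\xi_1,\xi_2)\in\cH\oplus\cH$, not just $\xi_2=\beta\alpha^{-1}\xi_1$). Genuine block positivity here would amount to a $2$-positivity/Schwarz statement for $\id_2\otimes\Psi$, precisely what you said you wanted to avoid. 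Consequently your ``compress $M$ by $\diag(A,I)$'' step is not justified, and you yourself flag that you end up with the wrong cross term and an unexplained ``peeling off''.

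The paper's proof sidesteps all of this with one clean move. Put
\[
D(Y_1,Y_2):=A^*\bigl(\Psi(Y_1^*Y_2)-\Psi(Y_1)^*\Psi(Y_2)\bigr)A,
\]
a $\cBH$-valued sesquilinear form with $D(Z,Z)\ge0$ by Schwarz and $D(K,K)=0$ by hypothesis. For each fixed $B\in\cBH^+$, the \emph{scalar} form $(Y_1,Y_2)\mapsto\Tr BD(Y_1,Y_2)$ is positive semidefinite, so scalar Cauchy--Schwarz gives $|\Tr BD(Y,K)|^2\le\Tr BD(Y,Y)\cdot\Tr BD(K,K)=0$. Since $B\ge0$ is arbitrary, $D(Y,K)=0$, i.e.
\[
A^*\Psi(Y^*K)A=A^*\Psi(Y)^*\Psi(K)A,
\]
and replacing $Y\mapsto Y^*$ (using that the positive map $\Psi$ is $*$-preserving) yields \eqref{F-3.20}. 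Note two things: the correct pair is $(K,Y)$, not $(K,YK)$, so no ``peeling'' is needed; and the reduction to scalar forms via $\Tr B(\cdot)$ is exactly what converts your failed block-positivity into a valid argument without ever invoking $2$-positivity. Equivalently, you could compress the scalar-coefficient inequality above by $A$ on both sides (not $\diag(A,I)$) and then let $\beta\to0$ along real and imaginary directions to force $A^*M_{12}A=0$; this is just an unrolling of the same scalar Cauchy--Schwarz.
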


\begin{proof}
Define a $\cBH$-valued sesqui-linear form on $\cBK\times \cBK$ by
\[
D(Y_1,Y_2):=A^*(\Psi(Y_1^*Y_2)-\Psi(Y_1)^*\Psi(Y_2))A,
\qquad Y_1,Y_2\in\cBK.
\]
By assumption, $D(K,K)=0$. For any $B\in\cBH^+$ and any $Y\in\cBK$, by applying
the Schwarz inequality to the positive sesqui-linear form $\Tr BD(Y_1,Y_2)$ we have
\[
|\Tr BD(Y,K)|\le[\Tr BD(Y,Y)]^{1/2}[\Tr BD(K,K)]^{1/2}=0,
\]
so that $\Tr BD(Y,K)=0$ for all $B\in\cBH^+$. Hence $D(Y,K)=0$. This implies equality
\eqref{F-3.20}.
\end{proof}

To prove Theorem \ref{T-3.2}\,(2), it is convenient to consider the restriction map
$\hat\Phi:\cB(\sigma^0\cH)\to\cB(\tilde\sigma^0\cK)$, which we define below. Assume
that $\rho^0\le\sigma^0$; hence $\tilde\rho^0\le\tilde\sigma^0$ holds as well. Since
$\Phi(\sigma^0)^0=\tilde\sigma^0$, we can define
\[
\hat\Phi:\cB(\sigma^0\cH)=\sigma^0\cBH\sigma^0\to
\cB(\tilde\sigma^0\cK)=\tilde\sigma^0\cBK\tilde\sigma^0
\]
by
\begin{align}\label{F-3.21}
\hat\Phi(X):=\Phi(X)|_{\tilde\sigma^0\cK}
=\tilde\sigma^0\Phi(X)\tilde\sigma^0|_{\tilde\sigma^0\cK},\qquad X\in\cB(\sigma^0\cH).
\end{align}
Since $\rho^0\le\sigma^0$, note that $\hat\Phi(\rho)=\Phi(\rho)=\tilde\rho$ and
$\hat\Phi(\sigma)=\Phi(\sigma)=\tilde\sigma$, where we use the same $\rho,\sigma$ when they
are considered as operators in $\cB(\sigma^0\cH)$ and also the same $\tilde\rho,\tilde\sigma$
when considered in $\cB(\tilde\sigma^0\cK)$. Note that the modular operator of $\rho,\sigma$ on
$\cB(\sigma^0\cH)$ is the restriction of $\Delta:=\Delta_{\rho,\sigma}$ to $\cB(\sigma^0\cH)$ and
that of $\tilde\rho,\tilde\sigma$ on $\cB(\tilde\sigma^0\cK)$ is the restriction of
$\tilde\Delta:=\Delta_{\tilde\rho,\tilde\sigma}$ to $\cB(\tilde\sigma^0\cK)$. So we use the same
symbols $\Delta$ and $\tilde\Delta$ when they are considered as modular operators on
$\cB(\sigma^0\cH)$ and $\cB(\tilde\sigma^0\cK)$, respectively, as there arises no confusion.
Moreover, we write
\begin{align}\label{F-3.22}
\hat K:=\tilde\sigma^0K\tilde\sigma^0\in\cB(\tilde\sigma^0\cK).
\end{align}

\begin{lemma}\label{L-3.10}
With the above assumption and notations, $\hat\Phi$ is a trace-preserving map (hence
$\hat\Phi^*(\tilde\sigma^0)=\sigma^0$), and $\hat\Phi^*$ is a Schwarz map. Furthermore,
\begin{align}\label{F-3.23}
\hat\Phi^*(\tilde\sigma^0Y\tilde\sigma^0)=\sigma^0\Phi^*(Y)\sigma^0,
\qquad Y\in\cBK.
\end{align}
In particular,
\begin{align}\label{F-3.24}
\hat\Phi^*(\hat K)=\sigma^0\Phi^*(K)\sigma^0.
\end{align}
\end{lemma}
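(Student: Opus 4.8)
The plan is to verify, in turn: (a) an explicit formula for $\hat\Phi^*$ on $\cB(\tilde\sigma^0\cK)$, which will immediately give both trace-preservation of $\hat\Phi$ and $\hat\Phi^*(\tilde\sigma^0)=\sigma^0$; (b) the Schwarz property of $\hat\Phi^*$; and (c) the identity \eqref{F-3.23}, of which \eqref{F-3.24} is the special case $Y=K$. Two preliminary observations will be used throughout. First, since $\Phi^*$ is a Schwarz map it is positive (as $\Phi^*(X^*X)\ge\Phi^*(X)^*\Phi^*(X)\ge0$), hence Hermicity-preserving, so the companion Schwarz inequality $\Phi^*(Z)\Phi^*(Z)^*\le\Phi^*(ZZ^*)$ also holds (apply the defining inequality to $Z^*$). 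Second, writing $p:=I_\cK-\tilde\sigma^0$, from $\Tr\sigma\,\Phi^*(p)=\Tr\tilde\sigma\,p=0$, positivity of $\Phi^*(p)$, and $\sigma\ge0$ we get $\sigma^0\Phi^*(p)\sigma^0=0$, whence $\Phi^*(p)\sigma^0=0=\sigma^0\Phi^*(p)$; in particular (using unitality of $\Phi^*$) $\sigma^0\Phi^*(\tilde\sigma^0)=\sigma^0=\Phi^*(\tilde\sigma^0)\sigma^0$ and $\sigma^0\Phi^*(\tilde\sigma^0)\sigma^0=\sigma^0$.

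For (a), given $X\in\cB(\sigma^0\cH)$ (so $\sigma^0X\sigma^0=X$) and $Y\in\cB(\tilde\sigma^0\cK)$ (so $\tilde\sigma^0Y\tilde\sigma^0=Y$), I compute $\langle\hat\Phi(X),Y\rangle_\HS=\Tr\bigl(\tilde\sigma^0\Phi(X)\tilde\sigma^0\bigr)^*Y=\Tr\Phi(X)^*Y=\langle\Phi(X),Y\rangle_\HS=\langle X,\Phi^*(Y)\rangle_\HS=\langle X,\sigma^0\Phi^*(Y)\sigma^0\rangle_\HS$, using $\tilde\sigma^0Y\tilde\sigma^0=Y$, $\sigma^0X\sigma^0=X$, and cyclicity of the trace; hence $\hat\Phi^*(Y)=\sigma^0\Phi^*(Y)\sigma^0$ for all $Y\in\cB(\tilde\sigma^0\cK)$. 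Taking $Y=\tilde\sigma^0$ and invoking the preliminary gives $\hat\Phi^*(\tilde\sigma^0)=\sigma^0$, i.e.\ $\hat\Phi^*$ is unital, equivalently $\hat\Phi$ is trace-preserving (which can also be seen directly from $\Tr\hat\Phi(X)=\Tr\Phi(X)\tilde\sigma^0=\Tr X\,\Phi^*(\tilde\sigma^0)=\Tr X\sigma^0\Phi^*(\tilde\sigma^0)\sigma^0=\Tr X$). For (b), for $Y\in\cB(\tilde\sigma^0\cK)$ one has, within $\cB(\sigma^0\cH)$,
\[
\hat\Phi^*(Y^*Y)-\hat\Phi^*(Y)^*\hat\Phi^*(Y)
=\sigma^0\Phi^*(Y^*Y)\sigma^0-\sigma^0\Phi^*(Y)^*\sigma^0\Phi^*(Y)\sigma^0
\ge\sigma^0\Phi^*(Y)^*(I_\cH-\sigma^0)\Phi^*(Y)\sigma^0\ge0,
\]
where the first inequality is the Schwarz inequality for $\Phi^*$ compressed by $\sigma^0$; thus $\hat\Phi^*$ is a Schwarz map.

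The only step that requires a little work is (c), whose point is that $\sigma^0\Phi^*(\cdot)\sigma^0$ annihilates every ``off-$\tilde\sigma^0$-corner'' of $Y$. Concretely I will show: if $Z=pZ$ then $\sigma^0\Phi^*(Z)=0$, and if $Z=Zp$ then $\Phi^*(Z)\sigma^0=0$. For the first, $Z=pZ$ gives $ZZ^*=pZZ^*p\le\|Z\|^2p$, hence $\Phi^*(Z)\Phi^*(Z)^*\le\Phi^*(ZZ^*)\le\|Z\|^2\Phi^*(p)$, and compressing by $\sigma^0$ makes the right-hand side vanish, so $\sigma^0\Phi^*(Z)\Phi^*(Z)^*\sigma^0=0$ and $\sigma^0\Phi^*(Z)=0$; the second is symmetric, using $\Phi^*(Z)^*\Phi^*(Z)\le\Phi^*(Z^*Z)\le\|Z\|^2\Phi^*(p)$ with $Z^*Z=pZ^*Zp$. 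Now decompose $Y=\tilde\sigma^0Y\tilde\sigma^0+pYp+pY\tilde\sigma^0+\tilde\sigma^0Yp$: the summands $pYp$ and $pY\tilde\sigma^0$ are of the form $pZ$ and $\tilde\sigma^0Yp$ is of the form $Zp$, so $\sigma^0\Phi^*(\cdot)\sigma^0$ kills all three, leaving $\sigma^0\Phi^*(Y)\sigma^0=\sigma^0\Phi^*(\tilde\sigma^0Y\tilde\sigma^0)\sigma^0=\hat\Phi^*(\tilde\sigma^0Y\tilde\sigma^0)$ by part (a). This is \eqref{F-3.23}, and \eqref{F-3.24} follows by taking $Y=K$. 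I expect this corner-killing argument — together with pinning down the companion Schwarz inequality and the $\Tr\sigma\,\Phi^*(p)=0$ reduction used in the preliminaries — to be the main, though still routine, point of the proof.
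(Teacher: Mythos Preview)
Your proof is correct. Parts (a) and (b) are essentially the same duality/compression computations as the paper's. The only substantive difference is in establishing \eqref{F-3.23} for general $Y\in\cBK$. The paper gets this in a single line: since $\Phi$ is positive and $\Phi(\sigma^0)^0=\tilde\sigma^0$ (this is stated just before the lemma), one has $\Phi(X)=\tilde\sigma^0\Phi(X)\tilde\sigma^0$ for every $X\in\cB(\sigma^0\cH)$, so $\langle\hat\Phi(X),\tilde\sigma^0Y\tilde\sigma^0\rangle_\HS=\langle\Phi(X),Y\rangle_\HS$ holds directly for all $Y\in\cBK$, and \eqref{F-3.23} drops out. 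You instead prove the formula first only for $Y\in\cB(\tilde\sigma^0\cK)$ and then kill the off-diagonal corners of a general $Y$ via the (companion) Schwarz inequality for $\Phi^*$ together with $\sigma^0\Phi^*(I_\cK-\tilde\sigma^0)\sigma^0=0$. Both routes are short; the paper's is slicker because the needed range property of $\Phi$ is already baked into the definition of $\hat\Phi$, while your argument is more self-contained in that it never invokes that range property and works purely on the $\Phi^*$ side.
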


\begin{proof}
It is clear that $\hat\Phi$ is a trace-preserving map and hence
$\hat\Phi^*(\tilde\sigma^0)=\sigma^0$ (since $\sigma^0,\tilde\sigma^0$ are the identities of
$\cB(\sigma^0\cH),\cB(\tilde\sigma^0\cK)$, respectively). For every $X\in\cB(\sigma^0\cH)$ and
$Y\in\cBK$,
\begin{align*}
\<X,\hat\Phi^*(\tilde\sigma^0Y\tilde\sigma^0)\>_\HS
&=\<\hat\Phi(X),\tilde\sigma^0Y\tilde\sigma^0\>_\HS=\<\Phi(X),Y\>_\HS \\
&=\<X,\Phi^*(Y)\>_\HS=\<X,\sigma^0\Phi^*(Y)\sigma^0\>_\HS.
\end{align*}
Hence \eqref{F-3.23} follows. Furthermore, for every $Y\in\cB(\tilde\sigma^0\cK)$,
\[
\hat\Phi^*(Y^*Y)=\sigma^0\Phi^*(Y^*Y)\sigma^0
\ge\sigma^0\Phi^*(Y^*)\Phi^*(Y)\sigma^0
\ge\hat\Phi^*(Y^*)\hat\Phi^*(Y).
\]
Hence $\hat\Phi^*$ is a Schwarz map.
\end{proof}

\begin{lemma}\label{L-3.11}
Let $\rho,\sigma,\Phi,K$ be as above, and assume that $\rho^0\le\sigma^0$. Then conditions (ii),
(iii) and (v)--(ix) given in Sec.~\ref{Sec-3.1} for $\rho,\sigma,\Phi,K$ are, respectively, equivalent
to (ii), (iii) and (v)--(ix) for $\rho,\sigma,\hat\Phi,\hat K$. Condition (iv) for $\rho,\sigma,\Phi,K$
implies (iv) for $\rho,\sigma,\hat\Phi,\hat K$.
\end{lemma}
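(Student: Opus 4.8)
The plan is to move each condition back and forth between the data $(\rho,\sigma,\Phi,K)$ and $(\rho,\sigma,\hat\Phi,\hat K)$ by exploiting that, since $\rho^0\le\sigma^0$ and hence $\tilde\rho^0\le\tilde\sigma^0$, all operators that actually enter the conditions are already ``compressed'' onto $\sigma^0\cH$ and $\tilde\sigma^0\cK$. The first thing I would record is a compression lemma: if $f$ is convex or concave on $[0,\infty)$ with $f(0^+)=f'(\infty)=0$ and $\rho^0\le\sigma^0$, then $S_f^X(\rho\|\sigma)=S_f^{\sigma^0X\sigma^0}(\rho\|\sigma)$, where the right-hand side has the same value whether read in $\cBH$ or in $\cB(\sigma^0\cH)$. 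This is immediate from the spectral-sum formula \eqref{F-2.7}: the hypotheses $f(0^+)=f'(\infty)=0$ kill all terms with $a=0$ or $b=0$, and for $a,b>0$ one has $P_a\le\rho^0\le\sigma^0$ and $Q_b\le\sigma^0$, so $\Tr X^*P_aXQ_b=\Tr(\sigma^0X\sigma^0)^*P_a(\sigma^0X\sigma^0)Q_b$, while the surviving spectral projections of $\rho,\sigma$ are unchanged under restriction to $\sigma^0\cH$.

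Next I would handle (ii). Applying the compression lemma on $\cBK$ to $(\tilde\rho,\tilde\sigma,K)$ gives $S_h^K(\tilde\rho\|\tilde\sigma)=S_h^{\hat K}(\tilde\rho\|\tilde\sigma)$, and on $\cBH$ to $(\rho,\sigma,\Phi^*(K))$ gives $S_h^{\Phi^*(K)}(\rho\|\sigma)=S_h^{\sigma^0\Phi^*(K)\sigma^0}(\rho\|\sigma)=S_h^{\hat\Phi^*(\hat K)}(\rho\|\sigma)$ by \eqref{F-3.24}, both values read in the compressed algebras; since $\hat\Phi(\rho)=\tilde\rho$ and $\hat\Phi(\sigma)=\tilde\sigma$, the equality (ii) for $(\rho,\sigma,\Phi,K)$ is literally (ii) for $(\rho,\sigma,\hat\Phi,\hat K)$. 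For (iii), (vi) and (vii) I would simply invoke Theorem~\ref{T-3.2}\,(1), which is already proved and applies to $(\rho,\sigma,\hat\Phi,\hat K)$ by Lemma~\ref{L-3.10}: each of these is equivalent to (ii) for the same data, so the transfer follows from the case of (ii). (Alternatively (vi), (vii) transfer directly by the compression bookkeeping of the next paragraph.)

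For (v), (viii), (ix) the point is that $\rho^0\le\sigma^0$ and $\tilde\rho^0\le\tilde\sigma^0$ force $\rho^z=\sigma^0\rho^z\sigma^0$, $\sigma^w=\sigma^0\sigma^w\sigma^0$, $\tilde\rho^z=\tilde\sigma^0\tilde\rho^z\tilde\sigma^0$, $\tilde\sigma^w=\tilde\sigma^0\tilde\sigma^w\tilde\sigma^0$ for all complex powers (defined on the supports). Using these together with \eqref{F-3.23}, which gives $\sigma^0\Phi^*(Y)\sigma^0=\hat\Phi^*(\tilde\sigma^0Y\tilde\sigma^0)$ for every $Y\in\cBK$ (with \eqref{F-3.24} the case $Y=K$), I would insert and absorb projections on both sides of each identity in (v), (viii), (ix): on the channel side every $K$ becomes $\hat K=\tilde\sigma^0K\tilde\sigma^0$ and every $Y$ becomes $\tilde\sigma^0Y\tilde\sigma^0$, on the output side $\Phi^*(K)\mapsto\hat\Phi^*(\hat K)$, $\Phi^*(Y)\mapsto\hat\Phi^*(\tilde\sigma^0Y\tilde\sigma^0)$, and any outer compression $\sigma^0\Phi^*(\,\cdot\,)\sigma^0$ turns into $\hat\Phi^*\bigl(\tilde\sigma^0(\,\cdot\,)\tilde\sigma^0\bigr)$ by \eqref{F-3.23}. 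After this rewriting each identity becomes exactly the corresponding one for $(\rho,\sigma,\hat\Phi,\hat K)$; and since $Y\mapsto\tilde\sigma^0Y\tilde\sigma^0$ maps $\cBK$ onto $\cB(\tilde\sigma^0\cK)$, the quantifier ``for all $Y$'' is preserved. Finally, for the one-way implication about (iv): I would split $h=h(0)+h'(\infty)x+h_0$ with $h_0\in\OM_+[0,\infty)$, $h_0(0)=h_0'(\infty)=0$, $\supp\mu_{h_0}=\supp\mu_h$, note that the equality $S_h^K(\tilde\rho\|\tilde\sigma)=S_h^{\Phi^*(K)}(\rho\|\sigma)$ together with the separate monotonicity inequalities for the three summands (using $h(0),h'(\infty)\ge0$ and Theorem~\ref{T-2.14}) forces $S_{h_0}^K(\tilde\rho\|\tilde\sigma)=S_{h_0}^{\Phi^*(K)}(\rho\|\sigma)$, then apply the compression lemma to move this equality to $\hat\Phi,\hat K$, and observe that restricting $\Delta$ and $\tilde\Delta$ to the invariant subspaces $\cB(\sigma^0\cH)$ and $\cB(\tilde\sigma^0\cK)$ cannot enlarge their spectra, so the cardinality condition \eqref{F-3.5} is inherited. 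The main thing to be careful about is the bookkeeping in the (v)/(viii)/(ix) step — correctly tracking on which side each operator is already compressed and checking that \eqref{F-3.23} is precisely what absorbs the outer $\sigma^0\Phi^*(\,\cdot\,)\sigma^0$ — together with the subtlety (relevant for (iii), (iv), and hidden elsewhere) that for a general $h\in\OM_+[0,\infty)$ the compressed and uncompressed quasi-entropies differ by the boundary contributions of $h(0)$ and $h'(\infty)$, so one must either pass to $h_0$ or route through Theorem~\ref{T-3.2}\,(1).
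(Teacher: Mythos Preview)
Your proposal is correct and follows essentially the same approach as the paper: the compression identity for $S_h$ when $h(0)=h'(\infty)=0$, the reduction $h=h(0)+h'(\infty)x+h_0$ for (iii) and (iv), and the direct projection bookkeeping via \eqref{F-3.23}--\eqref{F-3.24} for (v)--(ix) are exactly what the paper does. The one minor difference is that you route (iii), (vi), (vii) through the already-proved Theorem~\ref{T-3.2}\,(1) rather than treating them directly, which is a legitimate shortcut (and you note the direct alternative anyway).
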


\begin{proof}
Let $h\in\OM[0,\infty)$ with $h(0)=h'(\infty)=0$. Since
$h(\tilde\Delta)=\tilde\Delta^0h(\tilde\Delta)\tilde\Delta^0$, by Lemma \ref{L-2.7} we have
\begin{equation}\label{F-3.25}
\begin{aligned}
S_h^K(\tilde\rho\|\tilde\sigma)
&=\<K\tilde\sigma^{1/2},h(\tilde\Delta)K\tilde\sigma^{1/2}\>_\HS \\
&=\<\tilde\Delta^0K\tilde\sigma^{1/2},h(\tilde\Delta)\tilde\Delta^0K\tilde\sigma^{1/2}\>_\HS \\
&=\<\hat K\tilde\sigma^{1/2},h(\tilde\Delta)\hat K\tilde\sigma^{1/2}\>_\HS
=S_h^{\hat K}(\tilde\rho\|\tilde\sigma),
\end{aligned}
\end{equation}
where the third equality above follows from $\tilde\Delta^0=L_{\tilde\rho^0}R_{\tilde\sigma^0}$
and $\tilde\rho^0\le\tilde\sigma^0$. Similarly we have
\begin{equation}\label{F-3.26}
\begin{aligned}
S_h^{\Phi^*(K)}(\rho\|\sigma)
&=\<\Phi^*(K)\sigma^{1/2},h(\Delta)\Phi^*(K)\sigma^{1/2}\>_\HS \\
&=\<\Delta^0\Phi^*(K)\sigma^{1/2},h(\Delta)\Delta^0\Phi^*(K)\sigma^{1/2}\>_\HS \\
&=\<\hat\Phi^*(\hat K)\sigma^{1/2},h(\Delta)\hat\Phi^*(\hat K)\sigma^{1/2}\>_\HS
=S_h^{\hat\Phi^*(\hat K)}(\rho\|\sigma),
\end{aligned}
\end{equation}
where the third equality follows from \eqref{F-3.24}. Therefore, the assertion for (ii) holds.

Next, assume condition (iii) or (iv) for $\rho,\sigma,\Phi,K$. As in the proof of
Theorem \ref{T-3.2}\,(1), writing $h(x)=h(0)+h'(\infty)x+h_0(x)$ with $h_0(0)=h_0'(\infty)=0$, we
have $S_{h_0}^K(\tilde\rho\|\tilde\sigma)=S_{h_0}^{\Phi^*(K)}(\rho\|\sigma)$. Similarly,
$S_{h_0}^{\hat K}(\tilde\rho\|\tilde\sigma)=S_{h_0}^{\hat\Phi^*(\hat K)}(\rho\|\sigma)$ if (iii) or (iv)
holds for $\rho,\sigma,\hat\Phi,\hat K$. Hence the assertion for (iii) holds by \eqref{F-3.25} and
\eqref{F-3.26}. Moreover, condition \eqref{F-3.5} on $\supp\mu_h$ ($=\supp\mu_{h_0}$) still holds
when $\Delta$, $\tilde\Delta$ are restricted on $\cB(\sigma^0\cH)$, $\cB(\tilde\sigma^0\cH)$.
Hence the assertion for (iv) holds too.

The assertions for the remaining conditions (v)--(ix) are immediately shown by the
following equalities: for any $Y\in\cBK$ and $z\in\bC$,
\begin{align*}
\Tr Y\tilde\rho^zK\tilde\sigma^{1-z}
&=\Tr Y\tilde\sigma^0\tilde\rho^z\tilde\sigma^0K
\tilde\sigma^0\tilde\sigma^{1-z}\tilde\sigma^0
=\Tr\hat Y\tilde\rho^z\hat K\tilde\sigma^{1-z}, \\
\Tr\Phi^*(Y)\rho^z\Phi^*(K)\sigma^{1-z}
&=\Tr\Phi^*(Y)\sigma^0\rho^z\sigma^0\Phi^*(K)\sigma^0\sigma^{1-z}\sigma^0
=\Tr\hat\Phi^*(\hat Y)\rho^z\hat\Phi^*(\hat K)\sigma^{1-z}, \\
\sigma^0\Phi^*(Y^*\tilde\rho^z K\tilde\sigma^{-z})\sigma^0
&=\hat\Phi^*(\tilde\sigma^0Y\tilde\sigma^0\tilde\rho^z\tilde\sigma^0K
\tilde\sigma^0\tilde\sigma^{-z}\tilde\sigma^0)
=\hat\Phi^*(\hat Y\tilde\rho^z\hat K\tilde\sigma^{-z}), \\
\sigma^0\Phi^*(Y)\rho^z\Phi^*(K)\sigma^{-z}
&=\sigma^0\Phi^*(Y)\sigma^0\rho^z\sigma^0\Phi^*(K)\sigma^0\sigma^{-z}
=\hat\Phi^*(\hat Y)\rho^z\hat\Phi^*(\hat K)\sigma^{-z},
\end{align*}
where $\hat Y:=\tilde\sigma^0Y\tilde\sigma^0$ and we have used \eqref{F-3.23} and
\eqref{F-3.24} repeatedly.
\end{proof}

\noindent
{\bf Proof of Theorem \ref{T-3.2}\,(2).}\enspace
By Lemma \ref{L-3.1} it suffices to prove that (v)$\implies$(ii) and (iv)$\implies$(viii). To do so,
by Lemma \ref{L-3.11} we may and do assume that $\sigma,\tilde\sigma$ are invertible. Hence
$\rho,\tilde\rho$ are invertible too by assumption $\rho^0=\sigma^0$.

(v)$\implies$(ii).\enspace
Let us prove the stronger implication (v)$\implies$(i) in the present situation.  By (v) with
$Y=K^*$ and $z=n\in\bN\cup\{0\}$ we have
\[
\Tr K^*\tilde\rho^nK\tilde\sigma^{1-n}=\Tr\Phi^*(K)^*\rho^n\Phi^*(K)\sigma^{1-n},
\]
which is rewritten as
\begin{align}\label{F-3.27}
\<K\tilde\sigma^{1/2},\tilde\Delta^n(K\tilde\sigma^{1/2})\>_\HS
=\<\Phi^*(K)\sigma^{1/2},\Delta^n(\Phi^*(K)\sigma^{1/2})\>_\HS,
\qquad n\in\bN\cup\{0\}.
\end{align}
For any continuous function $f$ on $[0,\infty)$, as in the proof of (v)$\implies$(i) of
Theorem \ref{T-3.2}\,(3), by \eqref{F-3.27} we have
\[
\<K\tilde\sigma^{1/2},f(\tilde\Delta)(K\tilde\sigma^{1/2})\>_\HS
=\<\Phi^*(K)\sigma^{1/2},f(\Delta)(\Phi^*(K)\sigma^{1/2})\>_\HS,
\]
which is a stronger version of (i) in view of Lemma \ref{L-2.7} (since $\sigma$ and $\tilde\sigma$
are invertible here).

(iv)$\implies$(viii).\enspace
Assume (iv). By the second paragraph of the proof of Lemma \ref{L-3.11}, we may and do assume
that (iv) holds with an $h\in\OM[0,\infty)$ with $h(0)=h'(\infty)=0$, so that $h$ is
given as in \eqref{F-3.8} and \eqref{F-3.3}. Then, as in the proof of Theorem \ref{T-3.2}\,(1)
and also as in the proof of (iv)$\implies$(x) of Theorem \ref{T-3.2}\,(3), we can choose a
$T\subset\supp\mu_h$ such that $|T|\ge|\Sp(\Delta)\cup\Sp(\tilde\Delta)|$ and \eqref{F-3.14} holds.
One can define a linear contraction $V:\cBK\to \cBH$ as in \eqref{F-3.15}, which satisfies
\eqref{F-3.16} and hence \eqref{F-3.17} as before.

Now we can modify the proof of \cite[(5.5)]{HMPB}. We write
$\Sp(\Delta)\cup\Sp(\tilde\Delta)=\{x_i\}_{i=1}^m\subset(0,\infty)$, where
$\Sp(\Delta)=\{x_i\}_{i=1}^k$ and $\Sp(\tilde\Delta)=\{x_i\}_{i=l}^m$ with $1\le k,l\le m$ and
$l\le k+1$. Choose distinct $t_1,\dots,t_m$ from $T$ and consider an $m\times m$ matrix
\[
C:=\bigl[\ffi_{t_j}(x_i)\bigr]_{i,j=1}^m=\biggl[{x_i\over x_i+t_j}\biggr]_{i,j=1}^m.
\]
Note that $C$ is invertible, since it is the product of $\diag(x_1,\dots,x_m)$ and a Cauchy matrix
$\Bigl[{1\over x_i+t_j}\Bigr]_{i,j=1}^m$. Hence, for any continuous function $f:[0,\infty)\to\bC$
define
\[
(y_1,\dots,y_m)^t:=C^{-1}(f(x_1),\dots,f(x_m))^t.
\]
Then, with the spectral decompositions $\Delta=\sum_{i=1}^kx_iE_i$ and
$\tilde\Delta=\sum_{i=l}^mx_i\hat E_i$ we have
\begin{equation}\label{F-3.28}
\begin{aligned}
f(\tilde\Delta)(K\tilde\sigma^{1/2})
&=\sum_{i=l}^mf(x_i)\hat E_i(K\tilde\sigma^{1/2})
=\sum_{i=l}^m\Biggl(\sum_{j=1}^m\ffi_{t_j}(x_i)y_j\Biggr)\hat E_i(K\tilde\sigma^{1/2}) \\
&=\sum_{j=1}^my_j\ffi_{t_j}(\tilde\Delta)(K\tilde\sigma^{1/2})
=\sum_{j=1}^my_jV^*\ffi_{t_j}(\Delta)(\Phi^*(K)\sigma^{1/2}) \\
&=V^*\sum_{i=1}^k\Biggl(\sum_{j=1}^m\ffi_{t_j}(x_i)y_j\Biggr)E_i(\Phi^*(K)\sigma^{1/2}) \\
&=V^*\sum_{i=1}^kf(x_i)E_i(\Phi^*(K)\sigma^{1/2})
=V^*f(\Delta)(\Phi^*(K)\sigma^{1/2}),
\end{aligned}
\end{equation}
where \eqref{F-3.17} has been used for the fourth equality. When $f(x)\equiv1$, since
\eqref{F-3.28} gives
\begin{align*}
\|K\tilde\sigma^{1/2}\|_\HS^2&=\|V^*\Phi^*(K)\sigma^{1/2}\|_\HS^2
\le\|\Phi^*(K)\sigma^{1/2}\|_\HS^2=\Tr\sigma\Phi^*(K)^*\Phi^*(K) \\
&\le\Tr\sigma\Phi^*(K^*K)=\Tr\tilde\sigma K^*K=\|K\tilde\sigma^{1/2}\|_\HS^2,
\end{align*}
one has $\Tr\sigma\Phi^*(K)^*\Phi^*(K)=\Tr\sigma\Phi^*(K^*K)$ and hence
$\Phi^*(K^*K)=\Phi^*(K)^*\Phi^*(K)$ since $\sigma\in\cBH^{++}$. When $f(x)=x^{1/2}$, since
\eqref{F-3.28} gives
\[
\tilde\rho^{1/2}K=\tilde\Delta^{1/2}(K\tilde\sigma^{1/2})
=V^*\Delta^{1/2}(\Phi^*(K)\sigma^{1/2})=V^*\rho^{1/2}\Phi^*(K),
\]
one has
\begin{align*}
\|\tilde\rho^{1/2}K\|_\HS^2&\le\|\rho^{1/2}\Phi^*(K)\|_\HS=\Tr\rho\Phi^*(K)\Phi^*(K)^* \\
&\le\Tr\rho\Phi^*(KK^*)=\Tr\tilde\rho KK^*=\|\tilde\rho^{1/2}K\|_\HS^2
\end{align*}
and hence $\Phi^*(KK^*)=\Phi^*(K)\Phi^*(K)^*$ since $\rho\in\cBH^{++}$. Therefore,
$K\in\cM_{\Phi^*}$ in the present situation, so we can use Theorem \ref{T-3.2}\,(3) already shown
to obtain (iv)$\implies$(viii).\qed

\medskip
We now proceed to prove Theorem \ref{T-3.3}.

\medskip\noindent
{\bf Proof of Theorem \ref{T-3.3}\,(1).}\enspace
(i$'$)$\implies$(i) is obvious since $-h\in\OC[0,\infty)$ if $h\in\OM[0,\infty)$ (see, e.g.,
\cite[Corollary 2.5.6]{Hi0}). The converse implication can be shown similarly to the proof of
Proposition \ref{T-2.16}. Indeed, for any $f\in\OC(0,\infty)$, choose sequences of functions
$f_n\in\OC(0,\infty)$ and $h_n\in\OM[0,\infty)$ with $h_n(0)=h_n'(\infty)=0$ for which \eqref{F-2.15}
and \eqref{F-2.16} hold. Applying (i) to the functions $1$, $x$ and $h_n$, we have in turn
\begin{align*}
S_{f_n}^K(\tilde\rho\|\tilde\sigma)
&=f_n(0)\Tr\tilde\sigma K^*K+f_n'(\infty)\Tr\tilde\rho KK^*
-S_{h_n}^K(\tilde\rho\|\tilde\sigma) \\
&=f_n(0)\Tr\sigma\Phi^*(K)^*\Phi^*(K)+f_n'(\infty)\Tr\rho\Phi^*(K)\Phi^*(K)^*
-S_{h_n}^{\Phi^*(K)}(\rho\|\sigma) \\
&=S_{f_n}^{\Phi^*(K)}(\rho\|\sigma).
\end{align*}
Letting $n\to\infty$ yields $S_f^K(\tilde\rho\|\tilde\sigma)=S_f^{\Phi^*(K)}(\rho\|\sigma)$. Hence
(i)$\implies$(i$'$) follows.\qed

\medskip\noindent
{\bf Proof of Theorem \ref{T-3.3}\,(2).}\enspace
Assume (iv$'$). Since $S_f^K(\tilde\rho\|\tilde\sigma),S_f^{\Phi^*(K)}(\rho\|\sigma)<+\infty$ and
$f'(\infty)=+\infty$, Lemma \ref{L-2.7} yields
\[
\Tr K^*\tilde\rho K(I-\tilde\sigma^0)=\Tr\Phi^*(K)^*\rho\Phi^*(K)(I-\sigma^0)=0.
\]
By Lemma \ref{L-2.8} the last equality to $0$ is equivalent to $\rho^0\Phi^*(K)(I-\sigma^0)=0$.
Making use of the integral expression in \eqref{F-3.2} with $f(0)=0$, we can rewrite
$S_f^K(\tilde\rho\|\tilde\sigma)=S_f^{\Phi^*(K)}(\rho\|\sigma)<+\infty$ as
\begin{equation}\label{F-3.29}
\begin{aligned}
&a\Tr\tilde\rho KK^*+bS_{x^2}^K(\tilde\rho\|\tilde\sigma)
+\int_{(0,\infty)}S_{\psi_t}^K(\tilde\rho\|\tilde\sigma)\,d\nu_f(t) \\
&\quad=a\Tr\rho\Phi^*(K)\Phi^*(K)^*+bS_{x^2}^{\Phi^*(K)}(\rho\|\sigma)
+\int_{(0,\infty)}S_{\psi_t}^{\Phi^*(K)}(\rho\|\sigma)\,d\nu_f(t)<+\infty,
\end{aligned}
\end{equation}
where $\psi_t$ is given in \eqref{F-3.4}. Since $K\in\cM_{\Phi^*}$, we have
$\Tr\tilde\rho KK^*=\Tr\rho\Phi^*(K)\Phi^*(K)^*$ and moreover by Proposition \ref{T-2.16},
\[
S_{x^2}^K(\tilde\rho\|\tilde\sigma)\le S_{x^2}^{\Phi^*(K)}(\rho\|\sigma),\qquad
S_{\psi_t}^K(\tilde\rho\|\tilde\sigma)\le S_{\psi_t}^{\Phi^*(K)}(\rho\|\sigma),\quad t\in(0,\infty).
\]
Hence equality \eqref{F-3.29} with $b\ge0$ implies that
\[
S_{\psi_t}^K(\tilde\rho\|\tilde\sigma)=S_{\psi_t}^{\Phi^*(K)}(\rho\|\sigma)\quad
\mbox{for $\nu_f$-almost everywhere\ $t\in(0,\infty)$}.
\]
Thanks to condition \eqref{F-3.5} for $\nu_f$ one can take a $T\subset\supp\nu_f$
such that $|T|=|\Sp(\Delta)\cup\Sp(\tilde\Delta)|$ and
\[
S_{\psi_t}^K(\tilde\rho\|\tilde\sigma)=S_{\psi_t}^{\Phi^*(K)}(\rho\|\sigma),\qquad t\in T.
\]
Therefore,
\begin{align*}
S_{\ffi_t}^K(\tilde\rho\|\tilde\sigma)
&={\Tr\tilde\rho KK^*\over1+t}-S_{\psi_t}^K(\tilde\rho\|\tilde\sigma)
={\Tr\rho\Phi^*(K)\Phi^*(K)^*\over1+t}-S_{\psi_t}^{\Phi^*(K)}(\rho\|\sigma) \\
&=S_{\ffi_t}^{\Phi^*(K)}(\rho\|\sigma),\qquad t\in T,
\end{align*}
from which (iv) holds for $h:=\sum_{t\in T}\ffi_t\in\OM[0,\infty)$ with $h(0)=h'(\infty)=0$.

Conversely, assume (iv) and $\rho^0\Phi^*(K)(I-\sigma^0)=0$. Since (vi) holds by
Theorem \ref{T-3.2}\,(3), one has
\begin{align}\label{F-3.30}
\Tr K^*\tilde\rho^\beta K\tilde\sigma^{1-\beta}
=\Tr\Phi^*(K)^*\rho^\beta\Phi^*(K)\sigma^{1-\beta}
\end{align}
for any fixed $\beta\in(1,2)$. One has $\Tr\Phi^*(K)^*\rho\Phi^*(K)(I-\sigma^0)=0$ by
Lemma \ref{L-2.8}, and moreover by Corollary \ref{C-2.19}\,(1) (for $\alpha=1$),
\[
\Tr K^*\tilde\rho K\tilde\sigma^0\ge\Tr\Phi^*(K)^*\rho\Phi^*(K)\sigma^0
=\Tr\Phi^*(K)^*\rho\Phi^*(K).
\]
Furthermore, since $K\in\cM_{\Phi^*}$,
$\Tr\Phi^*(K)^*\rho\Phi^*(K)=\Tr\rho\Phi^*(KK^*)=\Tr K^*\tilde\rho K$ so that one has
$\Tr K^*\tilde\rho K(I-\tilde\sigma^0)=0$ as well. Therefore, by Lemma \ref{L-2.7} and \eqref{F-3.30}
we see that $S_{x^\beta}^K(\tilde\rho\|\tilde\sigma)=S_{x^\beta}^{\Phi^*(K)}(\rho\|\sigma)$, that is,
(iv$'$) holds with $f(x)=x^\beta$.\qed

\medskip\noindent
{\bf Proof of Theorem \ref{T-3.3}\,(3).}\enspace
Assume that $K\in\cM_{\Phi^*}$ and $\rho^0\Phi(K)(I-\sigma^0)=0$. From the last part of the
above proof of (2) we have (vii$'$)$\implies$(iv$'$). Hence the conclusion of (3) follows from
Lemma \ref{L-3.1}, Theorem \ref{T-3.2}\,(3) and Theorem \ref{T-3.3}\,(1), (2) together.\qed

\begin{remark}\label{R-3.12}\rm
Let $\hat\Phi$ and $\hat K$ be as defined in \eqref{F-3.21} and \eqref{F-3.22}. Note that (vii$'$)
for $\rho,\sigma,\Phi,K$ is equivalent to (vii$'$) for $\rho,\sigma,\hat\Phi,\hat K$, like (vii) in
Lemma \ref{L-3.11}. However, since $K\in\cM_{\Phi^*}$ does not imply $\hat K\in\cM_{\hat\Phi^*}$,
the proof method of Theorem \ref{T-3.2}\,(2) breaks down when one would try to remove the
assumption $\rho^0\Phi^*(K)(I-\sigma^0)=0$ for the equivalence of (vii) and (vii$'$) in
Theorem \ref{T-3.3}\,(3). Indeed, Example \ref{E-3.21} in Sec.~\ref{Sec-3.4} shows that (vii$'$) is
not equivalent to (vii), without $\rho^0\Phi^*(K)(I-\sigma^0)=0$, even when $K\in\cM_{\Phi^*}$ and
$\rho^0=\sigma^0$.
\end{remark}

\subsection{Further results on equality cases}\label{Sec-3.3}

The next proposition is a supplement of Theorem \ref{T-3.2}, which clarifies when condition (i)
holds in the situation of Theorem \ref{T-3.2}\,(2).

\begin{prop}\label{P-3.13}
Let $\rho,\sigma,\Phi,K$ be as before, and assume that $\rho^0=\sigma^0$. If
$\Phi(\sigma)^0K=K\Phi(\sigma)^0$, then (i)--(ix) are all equivalent. Moreover, if $\sigma$ is
invertible and one (hence all) of (ii)--(ix) holds, then (i) holds if and only if
$\Phi(\sigma)^0K=K\Phi(\sigma)^0$.
\end{prop}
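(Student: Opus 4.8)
The plan is to assume the commutation $\Phi(\sigma)^0K=K\Phi(\sigma)^0$ together with one of (ii)--(ix); then by Theorem \ref{T-3.2}\,(2) (available since $\rho^0=\sigma^0$) all of (ii)--(ix) hold, and since (i)$\implies$(ii) is trivial it suffices to deduce (i). Write $\tilde\rho:=\Phi(\rho)$, $\tilde\sigma:=\Phi(\sigma)$. Since $\Phi$ is positive and $\rho^0=\sigma^0$ forces $m\sigma\le\rho\le M\sigma$ for some $0<m\le M$, we get $m\tilde\sigma\le\tilde\rho\le M\tilde\sigma$, hence $\tilde\rho^0=\tilde\sigma^0=:P$. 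First I would pass to the reduced map $\hat\Phi:\cB(\sigma^0\cH)\to\cB(P\cK)$ and $\hat K=PKP$ of \eqref{F-3.21}--\eqref{F-3.22}: by Lemma \ref{L-3.10} it is trace-preserving with $\hat\Phi^*$ a Schwarz map, by Lemma \ref{L-3.11} conditions (ii)--(ix) (hence (iv), by Lemma \ref{L-3.1}) hold for $\rho,\sigma,\hat\Phi,\hat K$, and for the reduced data all four of $\rho,\sigma,\tilde\rho,\tilde\sigma$ are invertible. Exactly as in the proof of Theorem \ref{T-3.2}\,(2), this invertible situation together with (iv) forces $\hat K\in\cM_{\hat\Phi^*}$, so Theorem \ref{T-3.2}\,(3) applied to $\hat\Phi,\hat K$ gives that all of (i)--(x) hold for them; in particular $S_h^{\hat K}(\tilde\rho\|\tilde\sigma)=S_h^{\hat\Phi^*(\hat K)}(\rho\|\sigma)$ for every $h\in\OM[0,\infty)$ (the left side, computed in $\cB(P\cK)$, agreeing with its value in $\cBK$ because $\hat K=\tilde\sigma^0K\tilde\sigma^0$ kills the orthocomplement of $P\cK$), and $\hat\Phi^*(\hat K)=\sigma^0\Phi^*(K)\sigma^0$ by \eqref{F-3.24}.

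The second ingredient is a bookkeeping identity from the spectral expression \eqref{F-2.7}: for any $X\in\cBK$ and any $h$ concave on $[0,\infty)$ (in particular $h\in\OM[0,\infty)$), splitting the sum in \eqref{F-2.7} according to whether the eigenvalues of $\tilde\rho,\tilde\sigma$ vanish gives
\[
S_h^X(\tilde\rho\|\tilde\sigma)-S_h^{PXP}(\tilde\rho\|\tilde\sigma)
=h(0)\,\Tr X^*(I-P)X\tilde\sigma+h'(\infty)\,\Tr X^*\tilde\rho X(I-P),
\]
since the interior part of the sum depends only on $PXP$ and the two boundary terms vanish for $PXP$. The commutation $PK=KP$ makes both right-hand terms vanish for $X=K$, so $S_h^K(\tilde\rho\|\tilde\sigma)=S_h^{\hat K}(\tilde\rho\|\tilde\sigma)=S_h^{\hat\Phi^*(\hat K)}(\rho\|\sigma)$ for all $h\in\OM[0,\infty)$. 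The analogous identity on $\cH$ (using $\rho^0=\sigma^0$), applied to $X=\Phi^*(K)$ with $\sigma^0\Phi^*(K)\sigma^0=\hat\Phi^*(\hat K)$ in place of $PXP$, yields
\[
S_h^{\Phi^*(K)}(\rho\|\sigma)=S_h^{\hat\Phi^*(\hat K)}(\rho\|\sigma)
+h(0)\,\Tr\Phi^*(K)^*(I-\sigma^0)\Phi^*(K)\sigma
+h'(\infty)\,\Tr\Phi^*(K)^*\rho\Phi^*(K)(I-\sigma^0),
\]
whose two trace terms are nonnegative (they are squared Hilbert--Schmidt norms). For $h\in\OM_+[0,\infty)$ one has $h(0),h'(\infty)\ge0$, so combining the last two displays gives $S_h^{\Phi^*(K)}(\rho\|\sigma)\ge S_h^{\hat\Phi^*(\hat K)}(\rho\|\sigma)=S_h^K(\tilde\rho\|\tilde\sigma)$, while Theorem \ref{T-2.14} gives the reverse $S_h^K(\tilde\rho\|\tilde\sigma)\ge S_h^{\Phi^*(K)}(\rho\|\sigma)$. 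Testing with $h\equiv1$ and $h(x)=x$ (both in $\OM_+[0,\infty)$) forces the two trace terms to vanish; hence the extra terms vanish for every $h\in\OM[0,\infty)$ and $S_h^K(\tilde\rho\|\tilde\sigma)=S_h^{\Phi^*(K)}(\rho\|\sigma)$ for all such $h$, i.e.\ (i). Together with Lemma \ref{L-3.1} and Theorem \ref{T-3.2}\,(2) this shows (i)--(ix) are all equivalent.

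\textbf{Second assertion.} Now suppose $\sigma\in\cBH^{++}$, so $\sigma^0=\rho^0=I$, $\hat\Phi^*(\hat K)=\Phi^*(K)$, and (since one of (ii)--(ix) holds) the reduced picture above applies. The ``if'' direction is a special case of the first assertion. For ``only if'', assume (i). As above the reduced picture gives $S_h^{\hat K}(\tilde\rho\|\tilde\sigma)=S_h^{\hat\Phi^*(\hat K)}(\rho\|\sigma)=S_h^{\Phi^*(K)}(\rho\|\sigma)$ for all $h\in\OM[0,\infty)$; subtracting (i) yields $S_h^K(\tilde\rho\|\tilde\sigma)=S_h^{\hat K}(\tilde\rho\|\tilde\sigma)$ for all such $h$. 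By the bookkeeping identity (now with no commutation assumed), $h(0)\Tr K^*(I-P)K\tilde\sigma+h'(\infty)\Tr K^*\tilde\rho K(I-P)=0$ for every $h\in\OM[0,\infty)$; taking $h\equiv1$ gives $\Tr K^*(I-P)K\tilde\sigma=0$, whence $(I-P)K\tilde\sigma^{1/2}=0$ and so $(I-P)KP=0$ (the range of $\tilde\sigma^{1/2}$ is $P\cK$), and taking $h(x)=x$ gives $\Tr K^*\tilde\rho K(I-P)=0$, whence $\tilde\rho^{1/2}K(I-P)=0$ and so $PK(I-P)=0$ (as $\tilde\rho^0=P$). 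Hence $KP=PKP=PK$, i.e.\ $\Phi(\sigma)^0K=K\Phi(\sigma)^0$.

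\textbf{Main obstacle.} The delicate point is the first assertion. Conditions (ii)--(ix) only encode equality of the ``interior'' parts of the quasi-entropies (those supported on $\tilde\sigma^0\cK$ and $\sigma^0\cH$), whereas (i) additionally forces equality of the boundary terms $\Tr\tilde\sigma K^*K$, $\Tr\tilde\rho KK^*$ and their $\Phi^*(K)$-counterparts. The commutation hypothesis is precisely what kills the $\tilde\sigma$-side discrepancy $S_h^K-S_h^{\hat K}$; extracting the vanishing of the $\sigma$-side discrepancy $S_h^{\Phi^*(K)}-S_h^{\hat\Phi^*(\hat K)}$ then requires playing the reduced-picture equality (Theorem \ref{T-3.2}\,(3), obtained via $\hat K\in\cM_{\hat\Phi^*}$) against the monotonicity inequality of Theorem \ref{T-2.14}, which is available only for $h\in\OM_+[0,\infty)$ — so one must be careful to test only with such $h$ (e.g.\ $1$ and $x$) and deduce the general case afterwards.
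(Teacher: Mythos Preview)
Your argument is correct, but it follows a genuinely different route from the paper's.

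For the first assertion, the paper works directly with condition (vii): letting $\alpha\searrow0$ and $\alpha\nearrow1$ produces the two equalities
\[
\Tr K^*\tilde\rho^0K\tilde\sigma=\Tr\Phi^*(K)^*\rho^0\Phi^*(K)\sigma,\qquad
\Tr K^*\tilde\rho K\tilde\sigma^0=\Tr\Phi^*(K)^*\rho\Phi^*(K)\sigma^0,
\]
and then chains of elementary inequalities (using $\tilde\rho^0=\tilde\sigma^0$, the commutation $\tilde\sigma^0K=K\tilde\sigma^0$, and the Schwarz property of $\Phi^*$) squeeze these into $\Tr\tilde\sigma K^*K=\Tr\sigma\Phi^*(K)^*\Phi^*(K)$ and $\Tr\tilde\rho KK^*=\Tr\rho\Phi^*(K)\Phi^*(K)^*$, which together with (ii) give (i). Your approach instead passes to the reduced map $\hat\Phi,\hat K$, exploits the fact (buried in the proof of Theorem~\ref{T-3.2}(2)) that in the invertible situation (iv) forces $\hat K\in\cM_{\hat\Phi^*}$, and then invokes Theorem~\ref{T-3.2}(3) to obtain (i) at the reduced level; the spectral bookkeeping identity plus the monotonicity of Theorem~\ref{T-2.14} then upgrade this to the full (i). For the second assertion, the paper again combines the $h\equiv1$, $h=x$ cases of (i) with the $\alpha\to0,1$ limits of (vii), whereas you subtract (i) from the reduced-level equality and read off the vanishing boundary terms from the bookkeeping identity.

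The paper's route is shorter and more self-contained (it never re-enters Theorem~\ref{T-3.2}(3) or the multiplicative-domain argument). Your route is more structural: it isolates cleanly that (ii)--(ix) control only the ``interior'' part $S_h^{\hat K}=S_h^{\hat\Phi^*(\hat K)}$, that the commutation hypothesis kills the $\cK$-side boundary discrepancy, and that the $\cH$-side discrepancy is then forced to vanish by the monotonicity squeeze. The bookkeeping identity you state is a useful standalone observation.
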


\begin{proof}
To show the first assertion, assume that $\tilde\sigma^0K=K\tilde\sigma^0$. By Lemma \ref{L-3.1}
and Theorem \ref{T-3.2}\,(2) we need to prove that if all of conditions (ii)--(ix) hold, then (i) does too.
Since we have (ii), it suffices to show that
$S_h^K(\tilde\rho\|\tilde\sigma)=S_h^{\Phi^*(K)}(\rho\|\sigma)$ for $h(x)\equiv1$ and $h(x)=x$, i.e.,
\begin{align}
\Tr\tilde\sigma K^*K&=\Tr\sigma\Phi^*(K)^*\Phi^*(K), \label{F-3.31}\\
\Tr\tilde\rho KK^*&=\Tr\rho\Phi^*(K)\Phi^*(K)^*.\label{F-3.32}
\end{align}
Letting $\alpha\searrow0$ and $\alpha\nearrow1$ in (vii) yields
\begin{align}
\Tr K^*\tilde\rho^0K\tilde\sigma=\Tr\Phi^*(K)^*\rho^0\Phi^*(K)\sigma, \label{F-3.33}\\
\Tr K^*\tilde\rho K\tilde\sigma^0=\Tr\Phi^*(K)^*\rho\Phi^*(K)\sigma^0. \label{F-3.34}
\end{align}
Since $\tilde\rho^0=\tilde\sigma^0$ follows from $\rho^0=\sigma^0$, we have
\begin{align*}
\Tr K^*K\tilde\sigma&=\Tr K^*K\tilde\sigma^0\tilde\sigma
=\Tr K^*\tilde\rho^0K\tilde\sigma \\
&=\Tr\Phi^*(K)^*\rho^0\Phi^*(K)\sigma\quad(\mbox{by \eqref{F-3.33}}) \\
&\le\Tr\Phi^*(K)^*\Phi^*(K)\sigma\le\Tr\Phi^*(K^*K)\sigma=\Tr K^*K\tilde\sigma,
\end{align*}
which gives \eqref{F-3.31}. Similarly,
\begin{align*}
\Tr K^*\tilde\rho K&=\Tr K^*\tilde\rho\tilde\sigma^0K
=\Tr K^*\tilde\rho K\tilde\sigma^0 \\
&=\Tr\Phi^*(K)^*\rho\Phi^*(K)\sigma^0\quad(\mbox{by \eqref{F-3.34}}) \\
&\le\Tr\rho\Phi^*(K)\Phi^*(K)^*\le\Tr\rho\Phi^*(KK^*)=\Tr\tilde\rho KK^*,
\end{align*}
which gives \eqref{F-3.32}.

To prove the second assertion, assume that $\rho^0=\sigma^0=I_\cH$ and (i) holds
(hence all (i)--(ix) hold). By \eqref{F-3.31} and \eqref{F-3.33} one has
\[
\Tr\tilde\sigma K^*K=\Tr\sigma\Phi^*(K)^*\rho^0\Phi^*(K)
=\Tr\tilde\sigma K^*\tilde\rho^0K=\Tr\tilde\sigma K^*\tilde\sigma^0K.
\]
Hence $\Tr\tilde\sigma(\tilde\sigma^0K^*(I-\tilde\sigma^0)K\tilde\sigma^0)=0$, which gives
$(I-\tilde\sigma^0)K\tilde\sigma^0=0$. Therefore, $K\tilde\sigma^0=\tilde\sigma^0K\tilde\sigma^0$.
Also, by \eqref{F-3.32} and \eqref{F-3.34} one has
\begin{align*}
\Tr\tilde\rho KK^*&=\Tr\rho\Phi^*(K)\sigma^0\Phi^*(K)^*
=\Tr\tilde\rho K\tilde\sigma^0K^*.
\end{align*}
This gives $K^*\tilde\sigma^0=\tilde\sigma^0K^*\tilde\sigma^0$ as above. Therefore,
$\tilde\sigma^0K=\tilde\sigma^0K\tilde\sigma^0=K\tilde\sigma^0$, as desired.
\end{proof}

For a trace-preserving positive map $\Phi:\cBH\to \cBK$ and a given
$\sigma\in\cBH^+$, define a positive map $\Phi_\sigma:\cBH\to \cBK$ by
\begin{align}\label{F-3.35}
\Phi_\sigma(X):=\tilde\sigma^{-1/2}\Phi(\sigma^{1/2}X\sigma^{1/2})\tilde\sigma^{-1/2},
\qquad X\in\cBH.
\end{align}
Since $\Phi_\sigma(I_\cH)=\tilde\sigma^0$, $\Phi_\sigma$ is unital when we regard it as a
map from $\cBH$ into $\cB(\tilde\sigma^0\cK)=\tilde\sigma^0\cBK\tilde\sigma^0$. It is easy
to see that the adjoint map $\Phi_\sigma^*:\cBK\to \cBH$ of $\Phi_\sigma$ is given by
\begin{align}\label{F-3.36}
\Phi_\sigma^*(Y)=\sigma^{1/2}\Phi^*(\tilde\sigma^{-1/2}Y\tilde\sigma^{-1/2})\sigma^{1/2},
\qquad Y\in\cBK.
\end{align}
We call the map $\Phi_\sigma^*$, as usual, the \emph{Petz recovery map} relative to $\sigma$.
Note that
\begin{align}\label{F-3.37}
 \Phi_\sigma^*(\tilde\sigma)=\sigma,
\end{align}
and $\Phi_\sigma^*$ is trace-preserving when restricted to $\cB(\tilde\sigma^0\cK)$
($\subset\cBK$). According to \cite[Proposition 2]{Je}, $\Phi_\sigma$ is not necessarily
a Schwarz map even when $\Phi^*$ is a Schwarz map. But, if $\Phi$ is $2$-positive, then
so are $\Phi^*$, $\Phi_\sigma$ and $\Phi_\sigma^*$.

\begin{prop}\label{P-3.14}
Let $\rho,\sigma\in\cBH^+$, $\Phi:\cBH\to \cBK$ and $K\in\cBK$ be as before.
\begin{itemize}
\item[(1)] Assume that either (a) $\rho^0=\sigma^0$, or (b) $K\in\cM_{\Phi^*}$. If
one (hence all) of conditions (ii)--(ix) holds, then the following holds:
\begin{itemize}
\item[(xi)] $\sigma^0\Phi^*(K)^*\rho\Phi^*(K)\sigma^0=\Phi_\sigma^*(K^*\Phi(\rho)K)$.
\end{itemize}
\item[(2)] Assume that $\Phi_\sigma$ is a Schwarz map, $K\in\cM_{\Phi^*}$ is a unitary,
and $\rho^0\Phi^*(K)(I_\cH-\sigma^0)=0$. Then condition (xi) becomes
$\Phi^*(K)^*\rho\Phi^*(K)=\Phi_\sigma^*(K^*\Phi(\rho)K)$, and it is equivalent to any of conditions
(i)--(x), (i$'$), (iv$'$), (vii$'$) and (ix$'$).
\end{itemize}
\end{prop}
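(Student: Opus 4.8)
The plan is to reduce part (1) to a short computation built from condition (viii) and its adjoint, and part (2) to applying the monotonicity theorem (Theorem~\ref{T-2.14}) to the Petz recovery map $\Phi_\sigma^*$. Write $\tilde\rho=\Phi(\rho)$, $\tilde\sigma=\Phi(\sigma)$. Assuming one of (ii)--(ix) holds, hypothesis (a) or (b) together with Theorem~\ref{T-3.2}\,(2) or (3) forces all of (ii)--(ix) to hold, and in case (b) also (x); in particular (viii) holds, hence so does its adjoint
\[
\sigma^0\Phi^*(\tilde\sigma^{-z}K^*\tilde\rho^{z}Y)\sigma^0=\sigma^{-z}\Phi^*(K)^*\rho^{z}\Phi^*(Y)\sigma^0\qquad(Y\in\cBK,\ z\in\bC),
\]
and in case (b) the extended form $\Phi^*(\tilde\rho^zK\tilde\sigma^{-z})\sigma^0=\rho^z\Phi^*(K)\sigma^{-z}$ of (x) (cf.\ \eqref{F-3.19}) is available. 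By \eqref{F-3.36} one has $\Phi_\sigma^*(K^*\tilde\rho K)=\sigma^{1/2}\bigl[\sigma^0\Phi^*(\tilde\sigma^{-1/2}K^*\tilde\rho K\tilde\sigma^{-1/2})\sigma^0\bigr]\sigma^{1/2}$, and applying the adjoint identity above with $z=1/2$, $Y=\tilde\rho^{1/2}K\tilde\sigma^{-1/2}$ gives
\[
\sigma^0\Phi^*(\tilde\sigma^{-1/2}K^*\tilde\rho K\tilde\sigma^{-1/2})\sigma^0=\sigma^{-1/2}\Phi^*(K)^*\rho^{1/2}\,\Phi^*(\tilde\rho^{1/2}K\tilde\sigma^{-1/2})\sigma^0 .
\]
In case (b) the factor $\Phi^*(\tilde\rho^{1/2}K\tilde\sigma^{-1/2})\sigma^0$ equals $\rho^{1/2}\Phi^*(K)\sigma^{-1/2}$ by the extended (x); in case (a), since $\rho^0=\sigma^0$ one may surround $\rho^{1/2}$ by $\sigma^0$ and use (viii) with $Y=I$ (recall $\Phi^*(I)=I$) to the same effect. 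Either way the right-hand side collapses to $\sigma^{-1/2}\Phi^*(K)^*\rho\,\Phi^*(K)\sigma^{-1/2}$, and substituting back yields (xi).

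For part (2), Lemma~\ref{L-2.8} applied to $X=\Phi^*(K)$ shows that the assumption $\rho^0\Phi^*(K)(I_\cH-\sigma^0)=0$ is equivalent to $\sigma^0\Phi^*(K)^*\rho\Phi^*(K)\sigma^0=\Phi^*(K)^*\rho\Phi^*(K)$, so (xi) does take the stated simplified form. That any of (i)--(x), (i$'$), (iv$'$), (vii$'$), (ix$'$) -- all equivalent by Theorem~\ref{T-3.3}\,(3) -- implies (xi) is immediate from part (1)(b). The content is the converse. The key observation is that, since $\Phi_\sigma$ is a Schwarz map, $\Phi_\sigma^*$ restricted to $\cB(\tilde\sigma^0\cK)$ is a trace-preserving map (see \eqref{F-3.36}, \eqref{F-3.37}) whose adjoint $\Phi_\sigma$ is Schwarz, so Theorem~\ref{T-2.14} applies to it. Using $K\in\cM_{\Phi^*}$ together with (xi) one first checks that $\Tr K^*\tilde\rho K(I_\cK-\tilde\sigma^0)=0$, i.e.\ $K^*\tilde\rho K\in\cB(\tilde\sigma^0\cK)$.

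Granting this, for every $h\in\OM_+[0,\infty)$ one has
\begin{align*}
S_h^{\Phi^*(K)}(\rho\|\sigma)
&=S_h^{I}(\Phi^*(K)^*\rho\Phi^*(K)\|\sigma)
=S_h^{I}(\Phi_\sigma^*(K^*\tilde\rho K)\|\Phi_\sigma^*(\tilde\sigma)) \\
&\ge S_h^{\Phi_\sigma(I)}(K^*\tilde\rho K\|\tilde\sigma)
=S_h^{K}(\tilde\rho\|\tilde\sigma),
\end{align*}
where the first and last equalities use that $S_f^X(U^*\rho'U\|\sigma')=S_f^{UX}(\rho'\|\sigma')$ for a unitary $U$ (here $U=\Phi^*(K)$, resp.\ $U=K$), together with $\Phi_\sigma(I)=\tilde\sigma^0$ and $\Tr K^*\tilde\rho K(I_\cK-\tilde\sigma^0)=0$; the second equality uses (xi) and $\Phi_\sigma^*(\tilde\sigma)=\sigma$; and the inequality is Theorem~\ref{T-2.14} for $\Phi_\sigma^*$. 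Combined with the ordinary monotonicity $S_h^K(\tilde\rho\|\tilde\sigma)\ge S_h^{\Phi^*(K)}(\rho\|\sigma)$ of Theorem~\ref{T-2.14}, this forces $S_h^K(\tilde\rho\|\tilde\sigma)=S_h^{\Phi^*(K)}(\rho\|\sigma)$ for all $h\in\OM_+[0,\infty)$; and since $S_1^K(\tilde\rho\|\tilde\sigma)=\Tr\sigma=S_1^{\Phi^*(K)}(\rho\|\sigma)$ (because $K$ and $\Phi^*(K)$ are unitary), the equality extends to all $h\in\OM[0,\infty)$ by adding a suitable constant to $h$. This is condition (i), so (xi) joins the family of equivalent conditions.

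The step I expect to be the main obstacle is the support-projection bookkeeping throughout: deciding exactly which of $\sigma^0,\rho^0,\tilde\sigma^0$ may be inserted or removed in each product, and in particular verifying $K^*\tilde\rho K\in\cB(\tilde\sigma^0\cK)$ (needed both to apply Theorem~\ref{T-2.14} to $\Phi_\sigma^*$ with first argument $K^*\tilde\rho K$ and for the last equality in the displayed chain), as well as keeping straight, in part (1), which of the consequences (ii)--(x) is at hand in case (a) versus case (b).
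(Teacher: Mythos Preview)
Your proof is correct and follows essentially the same strategy as the paper's. In part~(1) you use (viii) in adjoint form with $z=1/2$ and $Y=\tilde\rho^{1/2}K\tilde\sigma^{-1/2}$, exactly as the paper does (the paper phrases it as (viii) with $Y=\tilde\sigma^{-1/2}K^*\tilde\rho^{1/2}$, which is the same thing before taking adjoints). One small difference: in case~(a) the paper first reduces via $\hat\Phi,\hat K$ to the invertible situation and then observes that $K\in\cM_{\Phi^*}$ there, whereas you stay in the general case and handle the remaining factor by inserting $\sigma^0$ (using $\rho^{1/2}=\rho^{1/2}\sigma^0$) and applying (viii) once more with $Y=I$; this is a clean shortcut. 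In part~(2) the paper specializes to $h(x)=x^\alpha$ to reach condition~(vii), while you run the same monotonicity argument for $\Phi_\sigma^*$ with an arbitrary $h\in\OM_+[0,\infty)$ and land directly on condition~(i); both routes are equivalent once the support check $K^*\tilde\rho K\in\cB(\tilde\sigma^0\cK)$ is in hand, which you correctly flag and which indeed follows by taking traces in~(xi) together with $K\in\cM_{\Phi^*}$ and $K$ unitary.
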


\begin{proof}
(1)\enspace
{\it Case (a).}\enspace
Let $\hat\Phi:\cB(\sigma^0\cH)\to\cB(\tilde\sigma^0\cK)$ and $\hat K$ be those defined in
\eqref{F-3.21} and \eqref{F-3.22}. Let $\hat\Phi_\sigma^*$ be the Petz recovery map for $\hat\Phi$
relative to $\sigma$. By \eqref{F-3.24} the LHS of (xi) is
\[
\sigma^0\Phi^*(K)^*\sigma^0\rho\sigma^0\Phi^*(K)\sigma^0
=\hat\Phi^*(\hat K)^*\rho\hat\Phi^*(\hat K).
\]
On the other hand, since, by \eqref{F-3.23},
\begin{align*}
\Phi_\sigma^*(Y)&=\sigma^{1/2}\sigma^0\Phi^*(\tilde\sigma^{-1/2}Y\tilde\sigma^{-1/2})
\sigma^0\sigma^{1/2} \\
&=\sigma^{1/2}\hat\Phi^*(\tilde\sigma^0\tilde\sigma^{-1/2}Y\tilde\sigma^{-1/2}
\tilde\sigma^0)\sigma^{1/2}=\hat\Phi_\sigma^*(\tilde\sigma^0Y\tilde\sigma^0),
\qquad Y\in\cBK,
\end{align*}
the RHS of (xi) is
\[
\hat\Phi_\sigma^*(\tilde\sigma^0K^*\tilde\rho K\tilde\sigma^0)
=\hat\Phi_\sigma^*(\hat K^*\tilde\rho\hat K).
\]
By the above expressions and Lemma \ref{L-3.11} we may and do assume that
$\rho$, $\sigma$, $\tilde\rho$ and $\tilde\sigma$ are all invertible. Assume that one
(hence all) of (ii)--(ix) holds. Then by (viii) with $Y=\tilde\sigma^{-1/2}K^*\tilde\rho^{1/2}$ and
$z=1/2$ we have
\begin{align}\label{F-3.38}
\Phi^*(\tilde\sigma^{-1/2}K^*\tilde\rho K\tilde\sigma^{-1/2})
=\Phi^*(\tilde\sigma^{-1/2}K^*\tilde\rho^{1/2})\rho^{1/2}\Phi^*(K)\sigma^{-1/2}.
\end{align}
From the proof (iv)$\implies$(viii) of Theorem \ref{T-3.2}\,(2), it follows that
$K\in\cM_{\Phi^*}$. Then from the proof of (iv)$\implies$(x) of Theorem \ref{T-3.2}\,(3), we see
that \eqref{F-3.19} holds and its adjoint gives
\begin{align}\label{F-3.39}
\Phi^*(\tilde\sigma^{-1/2}K^*\tilde\rho^{1/2})=\sigma^{-1/2}\Phi^*(K)^*\rho^{1/2}.
\end{align}
Therefore, by \eqref{F-3.38} and \eqref{F-3.39} we have
\[
\Phi^*(\tilde\sigma^{-1/2}K^*\tilde\rho K\tilde\sigma^{-1/2})
=\sigma^{-1/2}\Phi^*(K)^*\rho\Phi^*(K)\sigma^{-1/2},
\]
which implies (xi) by \eqref{F-3.36}.

{\it Case (b).}\enspace
The proof is similar to case (a). By (viii) with $Y=\tilde\sigma^{-1/2}K^*\tilde\rho^{1/2}$
and $z=1/2$ we have
\begin{align}\label{F-3.40}
\sigma^0\Phi^*(\tilde\sigma^{-1/2}K^*\tilde\rho K\tilde\sigma^{-1/2})\sigma^0
=\sigma^0\Phi^*(\tilde\sigma^{-1/2}K^*\tilde\rho^{1/2})\rho^{1/2}\Phi^*(K)\sigma^{-1/2}.
\end{align}
Note that (x) holds by Theorem \ref{T-3.2}\,(3), and the analytic continuation of the adjoint of (x)
gives
\[
\sigma^0\Phi^*(\tilde\sigma^zK^*\tilde\rho^{-z})=\sigma^z\Phi^*(K)^*\rho^{-z},
\qquad z\in\bC.
\]
Hence with $z=-1/2$ we have
\begin{align}\label{F-3.41}
\sigma^0\Phi^*(\tilde\sigma^{-1/2}K^*\tilde\rho^{1/2})=\sigma^{-1/2}\Phi^*(K)^*\rho^{1/2}.
\end{align}
Combining \eqref{F-3.40} and \eqref{F-3.41} gives
\[
\sigma^0\Phi^*(\tilde\sigma^{-1/2}K^*\tilde\rho K\tilde\sigma^{-1/2})\sigma^0
=\sigma^{-1/2}\Phi^*(K)^*\rho\Phi^*(K)\sigma^{-1/2},
\]
so that (xi) follows.

(2)\enspace
Assume that $K\in\cM_{\Phi^*}$ is a unitary and $\rho^0\Phi^*(K)(I_\cH-\sigma^0)=0$. The first
assertion is seen by Lemma \ref{L-2.8}. For the second assertion, by Theorem \ref{T-3.3}\,(3)
as well as the above statement (1), it suffices to show that condition (xi) implies (vii). Since
\[
\Phi^*(K)^*\Phi^*(K)=\Phi^*(K^*K)=\Phi^*(I_\cK)=I_\cH,
\]
it follows that $\Phi^*(K)$ is a unitary too. From \eqref{F-3.37} and (xi) we have for any
$\alpha\in(0,1)$,
\begin{align*}
\Tr\Phi^*(K)^*\rho^\alpha\Phi^*(K)\sigma^{1-\alpha}
&=\Tr(\Phi^*(K)\rho\Phi^*(K))^\alpha\sigma^{1-\alpha} \\
&\le\Tr(K^*\tilde\rho K)^\alpha\tilde\sigma^{1-\alpha}
=\Tr K^*\tilde\rho^\alpha K\tilde\sigma^{1-\alpha},
\end{align*}
where the inequality above is due to the monotonicity of $S_{x^\alpha}^I$ under $\Phi_\sigma^*$.
Since Corollary \ref{C-2.19}\,(1) gives the reverse inequality, (vii) follows.
\end{proof}

\begin{remark}\label{R-3.15}\rm
The formula in (xi) is considered as a type of reversibility of $\rho$ from $\Phi(\rho)$.
In fact, in the situation of Proposition \ref{P-3.14}\,(2), the operator $\Phi^*(K)$ is a unitary (as
seen in the above proof) and (xi) gives
\begin{align}\label{F-3.42}
\rho=\Phi^*(K)\Phi_\sigma^*(K^*\Phi(\rho)K)\Phi^*(K)^*.
\end{align}
In particular, when $K=I_\cK$ and $\rho^0\le\sigma^0$, condition (xi) reduces to
\begin{align}\label{F-3.43}
\rho=\Phi_\sigma^*(\Phi(\rho)),
\end{align}
which is the standard expression of reversibility, so that Proposition \ref{P-3.14}\,(2)
more or less reduces to \cite[Theorem 5.1]{HMPB} and \cite[Theorem 3.18]{HM}. Here, note that
\eqref{F-3.43} is equivalent to any of (i)--(ix$'$) (with $K=I_\cK$) stated in
Proposition \ref{P-3.14}\,(2) even without the assumption that $\Phi_\sigma$ is a Schwarz map
(hence without the $2$-positivity assumption of $\Phi$ assumed in the above cited theorems in
\cite{HMPB,HM}). Indeed, since $\Phi_\sigma^*$ restricted to $\cB(\tilde\sigma^0\cK)$ is a
trace-preserving positive map, condition \eqref{F-3.43}, together with \eqref{F-3.37}, implies that
$D(\Phi(\rho)\|\Phi(\sigma))=D(\rho\|\sigma)$ thanks to the monotonicity property of the relative
entropy \cite{MR}, which is condition (iv$'$) (with $K=I_\cK$) for $f(x)=x\log x$. Hence the assertion
follows from Theorem \ref{T-3.3}\,(3) and Proposition \ref{P-3.14}\,(1).
\end{remark}

In the next proposition, for any $n\in\bN$, $n\ge2$, we consider the TPCP map
\begin{align}\label{F-3.44}
\Phi:\cB(\underbrace{\cH\oplus\dots\oplus\cH}_{n})=\cB(\cH\otimes\bC^n)\to\cBH,\quad
\begin{bmatrix}A_{11}&\cdots&A_{1n}\\\vdots&\ddots&\vdots\\A_{n1}&\cdots&A_{nn}
\end{bmatrix}\mapsto\sum_{k=1}^nA_{kk},
\end{align}
which is indeed the partial trace $\T_{\bC^n}:\cB(\cH\otimes\bC^n)\to\cBH$. In the next
proposition, rewriting conditions (i)--(x) in this setting, we have equivalent conditions for equality
cases in joint concavity of quasi-entropies and in Lieb's concavity, as well as in joint convexity
of quasi-entropies and in Ando's convexity.

\begin{prop}\label{P-3.16}
Let $\rho_k,\sigma_k\in\cBH^+$, $1\le k\le n$, and let $K\in\cBH$.
\begin{itemize}
\item[(1)] The following conditions (I)--(X) are equivalent:
\begin{itemize}
\item[(I)] $S_h^K\bigl(\sum_{k=1}^n\rho_k\big\|\sum_{k=1}^n\sigma_k\bigr)
=\sum_{k=1}^nS_h^K(\rho_k\|\sigma_k)$ for all $h\in\OM[0,\infty)$.
\item[(IV)] $S_h^K\bigl(\sum_{k=1}^n\rho_k\big\|\sum_{k=1}^n\sigma_k\bigr)
=\sum_{k=1}^nS_h^K(\rho_k\|\sigma_k)$ for some $h\in\OM_+[0,\infty)$ with
$|\supp\mu_h|\ge\big|\bigcup_{k=1}^n\Sp(\Delta_{\rho_k,\sigma_k})
\cup\Sp\bigl(\Delta_{\sum_{k=1}^n\rho_k,\sum_{k=1}^n\sigma_k}\bigr)\big|$.
\item[(V)] $\Tr Y\bigl(\sum_{k=1}^n\rho_k\bigr)^zK\bigl(\sum_{k=1}^n\sigma_k\bigr)^{1-z}
=\sum_{k=1}^n\Tr Y\rho_k^zK\sigma_k^{1-z}$ for all $Y\in\cBH$ and $z\in\bC$.
\item[(VII)] $\Tr K^*\bigl(\sum_{k=1}^n\rho_k\bigr)^\alpha K
\bigl(\sum_{k=1}^n\sigma_k\bigr)^{1-\alpha}
=\sum_{k=1}^n\Tr K^*\rho_k^\alpha K\sigma_k^{1-\alpha}$ for some $\alpha\in(0,1)$.
\item[(IX)] $\sigma_j^0K^*\bigl(\sum_{k=1}^n\rho_k\bigr)^\alpha K
\bigl(\sum_{k=1}^n\sigma_k\bigr)^{-\alpha}\sigma_j^0
=\sigma_j^0K^*\rho_j^\alpha K\sigma_j^{-\alpha}$, $1\le j\le n$, for some $\alpha\in(0,1)$.
\item[(X)] $\bigl(\sum_{k=1}^n\rho_k\bigr)^{it}K
\bigl(\sum_{k=1}^n\sigma_k\bigr)^{-it}\sigma_j^0
=\rho_j^{it}K\sigma_j^{-it}$, $1\le j\le n$, for all $t\in\bR$.
\end{itemize}

If the above equivalent conditions hold, then the following holds too:
\begin{itemize}
\item[(XI)] $\sigma_j^0K^*\rho_jK\sigma_j^0=\sigma_j^{1/2}
\bigl(\sum_{k=1}^n\sigma_k\bigr)^{-1/2}K^*\bigl(\sum_{k=1}^n\rho_k\bigr)K
\bigl(\sum_{k=1}^n\sigma_k\bigr)^{-1/2}\sigma_j^{1/2}$, $1\le j\le n$.
\end{itemize}

\item[(2)] If $\rho_k^0K(I_\cH-\sigma_k^0)=0$ for $1\le k\le n$ (in particular, if
$\sigma_k\in\cBH^{++}$ for $1\le k\le n$), then conditions (I)--(X) in (1) and the following
conditions (I$'$)--(IX$'$) are equivalent:
\begin{itemize}
\item[(I\,$'$)] $S_f^K\bigl(\sum_{k=1}^n\rho_k\big\|\sum_{k=1}^n\sigma_k\bigr)
=\sum_{k=1}^nS_f^K(\rho_k\|\sigma_k)$ for all $f\in\OC(0,\infty)$.
\item[(IV\,$'$)] $S_f^K\bigl(\sum_{k=1}^n\rho_k\big\|\sum_{k=1}^n\sigma_k\bigr)
=\sum_{k=1}^nS_f^K(\rho_k\|\sigma_k)$ for some $f\in\OC[0,\infty)$
with $f(0)=0$, $f'(\infty)=+\infty$ and
$|\supp\nu_f|\ge\big|\bigcup_{k=1}^n\Sp(\Delta_{\rho_k,\sigma_k})
\cup\Sp\bigl(\Delta_{\sum_{k=1}^n\rho_k,\sum_{k=1}^n\sigma_k}\bigr)\big|$.
\item[(VII\,$'$)] $\Tr K^*\bigl(\sum_{k=1}^n\rho_k\bigr)^\alpha K
\bigl(\sum_{k=1}^n\sigma_k\bigr)^{1-\alpha}
=\sum_{k=1}^n\Tr K^*\rho_k^\alpha K\sigma_k^{1-\alpha}$ for some $\alpha\in(1,2)$.
\item[(IX\,$'$)] $\sigma_j^0K^*\bigl(\sum_{k=1}^n\rho_k\bigr)^\alpha K
\bigl(\sum_{k=1}^n\sigma_k\bigr)^{-\alpha}\sigma_j^0
=\sigma_j^0K^*\rho_j^\alpha K\sigma_j^{-\alpha}$, $1\le j\le n$, for some $\alpha\in(1,2)$.
\end{itemize}

\item[(3)] If $K$ is a unitary and $\rho_k^0K(I_\cH-\sigma_k^0)=0$ for $1\le k\le n$, then condition
(XI) is equivalent to any of (I)--(X) in (1) and (I\,$'$)--(IX\,$'$) in (2), where the LHS of the equality
in (XI) becomes $K^*\rho_kK$.
\end{itemize}
\end{prop}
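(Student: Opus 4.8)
The plan is to realize the $n$-fold direct sum as a single partial trace and then read off everything from Sec.~\ref{Sec-3}; essentially this is the case $\cK=\bC^n$ of Corollary~\ref{C-3.4}, with the abstract conditions rewritten explicitly in terms of $\rho_k,\sigma_k$ (and the spectral bound in (IV), (IV$'$) sharpened, see below). Put $\cH':=\cH\otimes\bC^n$, identify $\cB(\cH')$ with the $n\times n$ block matrices over $\cBH$, and let $\Phi:=\T_{\bC^n}:\cB(\cH')\to\cBH$ be the map in \eqref{F-3.44}, so that $\Phi^*(X)=X\otimes I_{\bC^n}=\diag(X,\dots,X)$ is a unital $*$-homomorphism; hence $\Phi$ is a TPCP (in particular $2$-positive) map, $\cM_{\Phi^*}=\cBH$, and $\Phi_\sigma$ is a Schwarz map for every $\sigma$ (discussion after \eqref{F-3.37}). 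Take $\rho:=\rho_1\oplus\dots\oplus\rho_n$, $\sigma:=\sigma_1\oplus\dots\oplus\sigma_n$ in $\cB(\cH')^+$ and keep $K\in\cBH$; then $\Phi(\rho)=\sum_k\rho_k$, $\Phi(\sigma)=\sum_k\sigma_k$, the spectral projections of $\rho,\sigma$ are block-diagonal, and $\rho^0\Phi^*(K)(I_{\cH'}-\sigma^0)=0$ is equivalent to $\rho_k^0K(I_\cH-\sigma_k^0)=0$ for all $k$. Feeding the block-diagonal forms into the spectral expression \eqref{F-2.4} (which already incorporates the boundary convention \eqref{F-2.5}) yields, for every operator convex or concave $g$ on $(0,\infty)$,
\[
S_g^{\Phi^*(K)}(\rho\|\sigma)=\sum_{k=1}^nS_g^K(\rho_k\|\sigma_k),
\]
so $S_g^K(\Phi(\rho)\|\Phi(\sigma))=S_g^{\Phi^*(K)}(\rho\|\sigma)$ is exactly the additivity equality appearing in (I), (IV), (I$'$), (IV$'$).

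Next I would translate each abstract condition of Sec.~\ref{Sec-3.1} into the block form of the Proposition. Using $\rho^z=\diag(\rho_k^z)$, $\sigma^z=\diag(\sigma_k^z)$, $\Phi^*(Y)=\diag(Y,\dots,Y)$, $\sigma^0\Phi^*(Y)\sigma^0=\diag(\sigma_k^0Y\sigma_k^0)$ and $\Phi_\sigma^*(Y)=\diag\bigl(\sigma_k^{1/2}\Phi(\sigma)^{-1/2}Y\Phi(\sigma)^{-1/2}\sigma_k^{1/2}\bigr)$ for $Y\in\cBH$, a block-by-block comparison (taking traces where the condition carries one) identifies (i), (v), (vii), (ix), (x) with (I), (V), (VII), (IX), (X); (i$'$), (vii$'$), (ix$'$) with (I$'$), (VII$'$), (IX$'$); and (xi) with (XI). Since $\cM_{\Phi^*}=\cBH$ makes the hypothesis $K\in\cM_{\Phi^*}$ automatic, Theorem~\ref{T-3.2}\,(3) yields part~(1) (the conditions listed there form a subset of the equivalent family (i)--(x)), and Proposition~\ref{P-3.14}\,(1) supplies the implication to (XI). Under $\rho_k^0K(I_\cH-\sigma_k^0)=0$, i.e.\ $\rho^0\Phi^*(K)(I_{\cH'}-\sigma^0)=0$, Theorem~\ref{T-3.3}\,(3) yields part~(2); and when $K$ is moreover unitary, Proposition~\ref{P-3.14}\,(2) yields part~(3), with Lemma~\ref{L-2.8} turning $\rho^0\Phi^*(K)(I-\sigma^0)=0$ into the claimed reduction of the left-hand side of (XI) to $K^*\rho_kK$ (equivalently $\Phi^*(K)^*\rho\Phi^*(K)=\diag(K^*\rho_kK)$).

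The one point that is not mere bookkeeping concerns (IV) and (IV$'$): Theorems~\ref{T-3.2} and~\ref{T-3.3} applied to $\Phi$ would require $|\supp\mu_h|$ (resp.\ $|\supp\nu_f|$) to dominate $|\Sp(\Delta_{\rho,\sigma})\cup\Sp(\Delta_{\Phi(\rho),\Phi(\sigma)})|$, and for block-diagonal $\rho,\sigma$ the set $\Sp(\Delta_{\rho,\sigma})$ also picks up the cross-ratios $a/b$ with $a\in\Sp(\rho_i)$, $b\in\Sp(\sigma_j)$, $i\ne j$, so it is in general strictly larger than $\bigcup_k\Sp(\Delta_{\rho_k,\sigma_k})$; thus (IV), (IV$'$) are a priori weaker than what a literal appeal to those theorems provides. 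I would close this gap by revisiting the proofs of (iv)$\implies$(x) in Theorem~\ref{T-3.2}\,(3) and of Theorem~\ref{T-3.3}\,(2): there $\Delta_{\rho,\sigma}$ is only ever applied to $\Phi^*(K)\sigma^{1/2}$ and to the range of the contraction $V$ of \eqref{F-3.15}, and since $\Delta_{\rho,\sigma}$ leaves invariant the block-diagonal subspace $\bigoplus_k\cBH$ of $\cB(\cH')$ --- acting there as $\bigoplus_k\Delta_{\rho_k,\sigma_k}$ --- while both $\Phi^*(K)\sigma^{1/2}$ and the range of $V$ lie in $\bigoplus_k\cBH$, the whole Cauchy-matrix/coincidence-theorem step takes place on that subspace and needs only $|T|\ge\bigl|\bigcup_k\Sp(\Delta_{\rho_k,\sigma_k})\cup\Sp(\Delta_{\sum_{k=1}^n\rho_k,\sum_{k=1}^n\sigma_k})\bigr|$ --- precisely what (IV) and (IV$'$) furnish. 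Verifying that this reduction to the block-diagonal invariant subspace is legitimate at every step of those proofs is the main technical obstacle; everything else is routine translation through the $*$-homomorphism $\Phi^*(\cdot)=(\cdot)\otimes I_{\bC^n}$.
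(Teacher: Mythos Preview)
Your approach is the same as the paper's: both specialize Corollary~\ref{C-3.4} and Proposition~\ref{P-3.14} to the partial trace $\Phi=\T_{\bC^n}$ of \eqref{F-3.44} with $\rho=\bigoplus_k\rho_k$, $\sigma=\bigoplus_k\sigma_k$, and translate (i)--(x), (i$'$)--(ix$'$), (xi) into the block forms (I)--(X), (I$'$)--(IX$'$), (XI). The paper's proof is the single terse sentence ``conditions (i)--(x) \dots\ are respectively rewritten as (I)--(XI)'' and then cites Corollary~\ref{C-3.4} and Proposition~\ref{P-3.14}\,(2).

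You are in fact more careful than the paper on one point. As you note, the spectral bound in (IV) and (IV$'$) involves only $\bigcup_k\Sp(\Delta_{\rho_k,\sigma_k})$, whereas a literal translation of (iv) and (iv$'$) would require the larger set $\Sp(\Delta_{\rho,\sigma})$, which also contains the cross-block ratios $a/b$ with $a\in\Sp(\rho_i)$, $b\in\Sp(\sigma_j)$, $i\ne j$. The paper's proof does not comment on this discrepancy. Your fix---observing that in the proofs of (iv)$\Rightarrow$(x) and of Theorem~\ref{T-3.3}\,(2) the operator $\Delta_{\rho,\sigma}$ is only applied to $\Phi^*(K)\sigma^{1/2}$ and to vectors in the range of $V$, all of which lie in the $\Delta_{\rho,\sigma}$-invariant block-diagonal subspace $\bigoplus_k\cBH$ on which $\Delta_{\rho,\sigma}$ acts as $\bigoplus_k\Delta_{\rho_k,\sigma_k}$---is correct and supplies the justification the paper leaves implicit. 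The Cauchy-matrix interpolation step then only needs nodes indexed by the restricted spectrum, exactly as you say.
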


\begin{proof}
Let $\rho:=\rho_1\oplus\dots\oplus\rho_n$, $\sigma:=\sigma_1\oplus\dots\oplus\sigma_n$,
and $\Phi$ be defined by \eqref{F-3.44}. Then conditions (i)--(x) given in Sect.~\ref{Sec-3.1} and
(xi) in Proposition \ref{P-3.14} are respectively rewritten as (I)--(XI) above (while not all of (i)--(x)
are rewritten here). Also, conditions (i$'$), (iii$'$), (v$'$) and (vii$'$) are respectively rewritten as
(I$'$), (IV$'$), (VII$'$) and (IX$'$), and $\rho^0\Phi^*(K)(I_{\cH\otimes\bC^n}-\sigma^0)=0$ is
equivalent to $\rho_k^0\Phi^*(K)(I_\cH-\sigma_k^0)=0$ for $1\le k\le n$. Hence (1)
and (2) follow from Corollary \ref{C-3.4}, and (3) does from Proposition \ref{P-3.14}\,(2).
\end{proof}

\begin{remark}\label{R-3.17}\rm
In the particular case that $K=I_\cH$, the equivalence of conditions (VII) (also (VII) for all
$\alpha\in(0,1)$), (X) and (XI), as well as the equivalence of (I$'$), (IV$'$), (VII$'$) (also (VII$'$) for
all $\alpha\in(1,2)$), (X) and (XI) when $\rho_k^0\le\sigma_k^0$ for $1\le k\le n$, was given in
\cite[Theorem 10]{JR} and \cite[Corollary 5.3]{HMPB}.
\end{remark}

\subsection{Examples}\label{Sec-3.4}

In this subsection we present four examples, which may help the us to understand the results in
Secs.~\ref{Sec-3.1} and \ref{Sec-3.3} in a more concrete fashion. The first example shows that
equivalent conditions (i)--(x) in Corollary \ref{C-3.4} can give the strongest restriction on $K$ even
when $\rho=\sigma$.

\begin{example}\label{E-3.18}\rm
In the situation of Corollary \ref{C-3.4}, we consider the special case that
$\rho=\sigma=|\psi\>\<\psi|$, a pure state in $\cH\otimes\cK$. For every $K\in\cBH$ and any
$\alpha>0$, note that
\begin{align*}
\Tr\Phi^*(K)^*\sigma^\alpha\Phi^*(K)\sigma^{1-\alpha}
&=\Tr(K^*\otimes I)|\psi\>\<\psi|(K\otimes I)|\psi\>\<\psi| \\
&=\<\psi,(K\otimes I)\psi\>\Tr(K^*\otimes I)|\psi\>\<\psi| \\
&=\Tr(K\otimes I)|\psi\>\<\psi|\cdot\Tr(K^*\otimes I)|\psi\>\<\psi| \\
&=\Tr K\tilde\sigma\cdot\Tr K^*\tilde\sigma=|\Tr K\tilde\sigma|^2,
\end{align*}
and the Schwarz inequality gives
\begin{align*}
|\Tr K\tilde\sigma|^2
&=|\Tr\tilde\sigma^{\alpha\over2}K\tilde\sigma^{1-\alpha\over2}\tilde\sigma^{1/2}|^2
=|\<\tilde\sigma^{1/2},\tilde\sigma^{\alpha\over2}K\tilde\sigma^{1-\alpha\over2}\>_\HS|^2 \\
&\le\|\tilde\sigma^{1/2}\|_\HS^2\|
\tilde\sigma^{\alpha\over2}K\tilde\sigma^{1-\alpha\over2}\|_\HS^2
=\Tr\tilde\sigma^{1-\alpha\over2}K^*\tilde\sigma^\alpha K\tilde\sigma^{1-\alpha\over2}
=\Tr K^*\tilde\sigma^\alpha K\tilde\sigma^{1-\alpha}.
\end{align*}
Therefore, for any $\alpha>0$,
\begin{align}\label{F-3.45}
\Tr K^*\tilde\sigma^\alpha K\tilde\sigma^{1-\alpha}
\ge\Tr\Phi^*(K)^*\sigma^\alpha\Phi^*(K)\sigma^{1-\alpha}
\end{align}
and from the equality case of the Schwarz inequality, equality holds in \eqref{F-3.45} (i.e.,
condition (vii) in the present case) if and only if
$\tilde\sigma^{\alpha\over2}K\tilde\sigma^{1-\alpha\over2}=\lambda\tilde\sigma^{1/2}$ for
some $\lambda\in\bC$. The last condition is equivalent to
\begin{align}\label{F-3.46}
\tilde\sigma^0K\tilde\sigma^0=\lambda\tilde\sigma^0\quad
\mbox{for some $\lambda\in\bC$},
\end{align}
which is a characterization for (i)--(x) to hold in this special case. When $\cK=\cH$ and
$|\psi\>\<\psi|$ is a maximal entangled state (see, e.g., \cite{Pe3}) so that
$\tilde\sigma^0=I_\cH$, this becomes quite a strong condition that $K=\lambda I_\cH$ for some
$\lambda\in\bC$.
\end{example}

In the next two examples, we specialize the conditional expectation considered in
Remark \ref{R-3.6} to show that conditions (i) and (xi) cannot be included in
Theorem \ref{T-3.2}\,(2) and Proposition \ref{P-3.14}\,(1), respectively.

\begin{example}\label{E-3.19}\rm
Let $\{e_i\}_{i=1}^d$ be an orthonormal basis of $\cH$ and $\Phi:\cBH\to \cBH$ be
defined by
\[
\Phi(K):=\sum_{i=1}^d\<e_i,Ke_i\>|e_i\>\<e_i|,\qquad K\in\cBH,
\]
that is the trace-preserving conditional expectation $E_\cA$ onto the abelian subalgebra
$\cA$ generated by $\{|e_i\>\<e_i|\}_{i=1}^d$, so that $\Phi^*=\Phi$. Let
$\rho=\sigma=|e_1\>\<e_1|$. Then $\Phi(\rho)=\Phi(\sigma)=|e_1\>\<e_1|$ and $\hat\Phi$
is the identity map on the one-dimensional $\cB(\sigma^0\cH)=\cB(\bC e_1)\cong\bC$. Hence
by Lemma \ref{L-3.11} and Theorem \ref{T-3.2}\,(2), conditions (ii)--(ix) hold trivially for any
$K\in\cBH$. Indeed, it is also easy to confirm this directly without using Lemma \ref{L-3.11}.
On the other hand, (i) holds for $\rho,\sigma$ and $K$ if and only if \eqref{F-3.33} and
\eqref{F-3.34} hold, i.e.,
\begin{align*}
\Tr|e_1\>\<e_1|K^*K=\Tr|e_1\>\<e_1|\Phi(K)^*\Phi(K), \\
\Tr|e_1\>\<e_1|KK^*=\Tr|e_1\>\<e_1|\Phi(K)\Phi(K)^*,
\end{align*}
that is,
\[
\|Ke_1\|^2=\|\Phi(K)e_1\|^2,\qquad\|K^*e_1\|^2=\|\Phi(K)^*e_1\|^2.
\]
These means that $\<e_i,Ke_1\>=\<e_i,K^*e_1\>=0$ for all $i\ne1$, or equivalently, $\bC e_1$
is a reducing subspace of $K$. This example shows that condition (i) is strictly stronger than
(ii)--(ix) in the situation of Theorem \ref{T-3.2}\,(2), and also that the assertion in
Lemma \ref{L-3.11} is invalid for condition (i).
\end{example}

\begin{example}\label{E-3.20}\rm
Let $\Phi$ be the trace-preserving conditional expectation $E$ from $\cB(\bC^2)=\bM_2$
(the $2\times2$ matrices) onto the $2\times2$ diagonal matrices, and set
\[
K=\begin{bmatrix}x_{11}&x_{12}\\x_{21}&x_{22}\end{bmatrix},\qquad
\sigma=I_2,\qquad
\rho^{1/2}=\begin{bmatrix}a&c\\\overline c&b\end{bmatrix}.
\]
Assume that $a,b>0$ and $|c|^2<ab$, hence $\rho$ is invertible so that case (a) of
Proposition \ref{P-3.14}\,(1) is met. Hence Proposition \ref{P-3.14}\,(1) says that condition
(vii) for $\alpha=1/2$ implies (xi). Here we show that the converse implication fails. In the present
case, (vii) for $\alpha=1/2$ means that
\begin{align}\label{F-3.47}
\Tr K^*E(\rho)^{1/2}K=\Tr E(K)^*\rho^{1/2}E(K),
\end{align}
while (xi) means that
\begin{align}\label{F-3.48}
E(K)^*\rho E(K)=E(K^*E(\rho)K).
\end{align}
A direct computation shows that \eqref{F-3.47} is explicitly rewritten as
\[
\sqrt{a^2+|c|^2}(|x_{11}|^2+|x_{12}|^2)+\sqrt{b^2+|c|^2}(|x_{21}|^2+|x_{22}|^2)
=a|x_{11}|^2+b|x_{22}|^2.
\]
In view of the assumption $a,b>0$, this holds if and only if
\begin{align}\label{F-3.49}
c=x_{12}=x_{21}=0\quad\mbox{or}\quad K=0.
\end{align}
On the other hand, we can directly rewrite \eqref{F-3.48} as
\begin{align*}
&\begin{bmatrix}(a^2+|c|^2)|x_{11}|^2&(a+b)c\overline{x}_{11}x_{22}\\
(a+b)\overline cx_{11}\overline{x}_{22}&(b^2+|c|^2)|x_{22}|^2\end{bmatrix} \\
&\quad=\begin{bmatrix}(a^2+|c|^2)|x_{11}|^2+(b^2+|c|^2)|x_{21}|^2&0\\
0&(a^2+|c|^2)|x_{12}|^2+(b^2+|c|^2)|x_{22}|^2\end{bmatrix},
\end{align*}
which holds if and only if
\begin{align}\label{F-3.50}
x_{12}=x_{21}=0\quad\mbox{and}\quad
[c=0\ \ \mbox{or}\ \ x_{11}=0\ \ \mbox{or}\ \ x_{22}=0].
\end{align}
It is obvious that \eqref{F-3.49}$\implies$\eqref{F-3.50}, as Proposition \ref{P-3.14}\,(1)
says. However, when $x_{11}=x_{12}=x_{21}=0$ and $x_{22}\ne0$ as well as $0<|c|^2<ab$,
\eqref{F-3.50} holds but \eqref{F-3.49} does not hold, so (xi)$\implies$(vi) fails, even when
both assumptions (a) and (b) of Proposition \ref{P-3.14}\,(1) are met.
\end{example}

The next example shows that the assumption $\rho^0\Phi^*(K)(I_\cH-\sigma^0)=0$ is
essential in Theorem \ref{T-3.3}\,(3) .

\begin{example}\label{E-3.21}\rm
Let $\cH$, $\cK$ and $\cH_1$ be finite-dimensional Hilbert spaces of dimension $\ge2$. Let
$\Phi:=\Tr_\cK:\cB(\cH\otimes\cK)\to\cBH$ be the partial trace over $\cK$, and choose a pure
state $\sigma=|\psi\>\<\psi|$ and $K\in\cBH$ such that
$\tilde\sigma^0K\tilde\sigma^0\not\in\bC\tilde\sigma^0$ where
$\tilde\sigma:=\Phi(\sigma)=\Tr_\cK\sigma$. Then, as shown in Example \ref{E-3.18}
(in particular, see \eqref{F-3.45} and \eqref{F-3.46}), it holds that
\begin{align}\label{F-3.51}
\Tr K^*\Phi(\sigma)^\alpha K\Phi(\sigma)^{1-\alpha}
>\Tr\Phi^*(K)^*\sigma^\alpha\Phi^*(K)\sigma^{1-\alpha},\qquad\alpha\in(0,\infty).
\end{align}
Next, let $\rho_1,\sigma_1\in\cB(\cH_1)^{++}$ be non-commuting. Then by
\cite[Corollary A.5]{HMPB} (i.e., non-equality cases of H\"older's and inverse H\"older's
inequalities),
\begin{align}
(\Tr\rho_1)^\alpha(\Tr\sigma_1)^{1-\alpha}&>\Tr\rho_1^\alpha\sigma_1^{1-\alpha},
\qquad\alpha\in(0,1), \label{F-3.52}\\
(\Tr\rho_1)^\beta(\Tr\sigma_1)^{1-\beta}&<\Tr\rho_1^\beta\sigma_1^{1-\beta},
\qquad\beta\in(1,2). \label{F-3.53}
\end{align}
Now define $\cH_2:=(\cH\otimes\cK)\oplus\cH_1$, $\cK_2:=\cH\oplus\bC$, a CPTP map
$\Phi_2:\cB(\cH_2)\to\cB(\cK_2)$ by
\[
\Phi_2:\begin{bmatrix}A&C\\D&B\end{bmatrix}\mapsto
\begin{bmatrix}\Phi(A)&0\\0&\Tr B\end{bmatrix}
\quad\mbox{for $A\in\cB(\cH\otimes\cK)$, $B\in\cB(\cH_1)$},
\]
and moreover $K_2:=\begin{bmatrix}K&0\\0&1\end{bmatrix}\in\cB(\cK\oplus\bC)$. Then it is
immediate to see $\Phi_2^*(K_2)=\begin{bmatrix}\Phi^*(K)&0\\0&I_{\cH_1}\end{bmatrix}$ and
hence $K_2\in\cM_{\Phi_2^*}$. For any $\beta\in(1,2)$, from \eqref{F-3.51} and \eqref{F-3.53}
we can choose a $\lambda>0$ such that
\[
\Tr K^*\Phi(\sigma)^\beta K\Phi(\sigma)^{1-\beta}
+\lambda(\Tr\rho_1)^\beta(\Tr\sigma_1)^{1-\beta}
=\Tr\Phi^*(K)^*\sigma^\beta\Phi^*(K)\sigma^{1-\beta}
+\lambda\Tr\rho_1^\beta\sigma_1^{1-\beta}.
\]
Replacing $\rho_1,\sigma_1$ with $\lambda\rho_1,\lambda\sigma_1$ we may assume
$\lambda=1$. Define $\rho_2:=\begin{bmatrix}\sigma&0\\0&\rho_1\end{bmatrix}$
$\sigma_2:=\begin{bmatrix}\sigma&0\\0&\sigma_1\end{bmatrix}\in\cB(\cH_2)^+$;
then we have
\[
\Tr K_2^*\Phi_2(\rho_2)^\beta K_2\Phi_2(\sigma_2)^{1-\beta}
=\Tr\Phi_2^*(K_2^*)\rho_2^\beta\Phi_2(K_2)\sigma_2^{1-\beta}.
\]
On the other hand, it follows from \eqref{F-3.51} and \eqref{F-3.52} that
\[
\Tr K_2^*\Phi_2(\rho_2)^\alpha K_2\Phi_2(\sigma_2)^{1-\alpha}
>\Tr\Phi_2^*(K_2^*)\rho_2^\alpha\Phi_2(K_2)\sigma_2^{1-\alpha},
\qquad\alpha\in(0,1).
\]
Therefore, for $\rho_2,\sigma_2,\Phi_2,K_2$, condition (vii$'$) holds but (vii) does not, so
(vii$'$)$\implies$(vii) is not true, where $K_2\in\cM_{\Phi_2^*}$ and $\rho_2^0=\sigma_2^0$
are satisfied but $\rho_2^0\Phi_2^*(K_2)(I_{\cH_2}-\sigma_2^0)\ne0$ (otherwise, (vii) must hold
due to Theorem \ref{T-3.3}\,(3)).
\end{example}

\section{Equality cases for monotone metrics and $\chi^2$-divergences}\label{Sec-4}

Recall that $\cBH^{++}$ is a Riemannian manifold and the tangent space at each foot point
$\sigma\in\cBH^{++}$ is identified with $\cBH^\sa$, the self-adjoint operators on $\cH$. By
\emph{monotone metrics} we mean a family of Riemannian metrics $\gamma_\sigma(X,Y)$ for
$\sigma\in\cBH^{++}$ and $X,Y\in\cBH^\sa$ with any finite-dimensional $\cH$ such that
\[
\gamma_{\Phi(\sigma)}(\Phi(X),\Phi(X))\le\gamma_\sigma(X,X),
\qquad\sigma\in\cBH^{++},\ X\in\cBH^\sa
\]
for every TPCP map $\Phi:\cBH\to\cBK$. In \cite{Pe6} Petz showed that such monotone metrics,
when they satisfy $\gamma_\sigma(X,X)=\Tr\sigma^{-1}X^2$ if $\sigma X=X\sigma$, are in the
form
\[
\gamma_\sigma(X,Y)=\Tr XJ_h(\sigma,\sigma)^{-1}Y,\qquad
\sigma\in\cBH^{++},\ X,Y\in\cBH^\sa,
\]
where $h\in\OM_+[0,\infty)$ with $h(1)=1$ and $J_h(\sigma,\sigma)=h(\Delta_\sigma)R_\sigma$
(see \eqref{F-2.3}). (This characterization was refined later in \cite{Ku}.) Monotone metrics are also
called \emph{quantum Fisher informations} because they are quantum analogs of the classical
Fisher information.

In this section, following \cite{HP}, we discuss the extended version of monotone metrics in two
points $\rho,\sigma$ and non-self-adjoint $X,Y$, which are defined by
\begin{align}\label{F-4.1}
\gamma_{\rho,\sigma}^h(X,Y):=\<X,J_h(\rho,\sigma)^{-1}Y\>_\HS,\qquad
\rho,\sigma\in\cBH^{++},\ X,Y\in\cBH,
\end{align}
with parametrization $h\in\OM_+[0,\infty)$ (without assuming $h(1)=1$). This extension is
meaningful in the sense that it includes functions of Lieb type (as in (vi) of Theorem \ref{T-4.1}).
Note that (II$'$) of Theorem \ref{T-2.2} says the monotonicity property
\begin{align}\label{F-4.2}
\gamma_{\Phi(\rho),\Phi(\sigma)}^h(\Phi(X),\Phi(X))\le\gamma_{\rho,\sigma}^h(X,X)
\end{align}
for any trace-preserving map $\Phi$ such that $\Phi^*$ is a Schwarz map, where the above LHS
is defined as a two-point version of monotone metrics on $\cB(Q\cH)^{++}$ where $Q:=\Phi(I_\cH)^0$,
since $\Phi(\rho),\Phi(\sigma)\in\cB(Q\cH)^{++}$ and $\Phi(X)\in\cB(Q\cH)$. Note also that (IV) of
Theorem \ref{T-2.2} says the joint convexity of
$(\rho,\sigma,X)\mapsto\gamma_{\rho,\sigma}^h(X,X)$ on $\cBH^{++}\times\cBH^{++}\times\cBH$.
Since $\gamma_{\rho,\sigma}^h(X,Y)$ is positively homogeneous, i.e.,
$\gamma_{\alpha\rho,\alpha\sigma}^h(\alpha X,\alpha Y)=\alpha\gamma_{\rho,\sigma}^h(X,Y)$ for
$\alpha>0$ as immediately verified, it is clear that the above joint convexity is equivalent to the joint
subadditivity
\[
\gamma_{\rho_1+\rho_2}^h(X_1+X_2,X_1+X_2)
\le\gamma_{\rho_1,\sigma_1}^h(X_1,X_1)+\gamma_{\rho_2,\sigma_2}^h(X_2,X_2)
\]
for $\rho_i,\sigma_i\in\cBH^{++}$, $X_i\in\cBH$.

Furthermore, it is worth pointing out that
\begin{align}\label{F-4.3}
\gamma_{\rho,\sigma}^h(X,X)=S_{h^*}^X(\rho^{-1}\|\sigma^{-1}),\qquad
\rho,\sigma\in\cBH^{++},\ X\in\cBH,
\end{align}
since $J_h(\rho,\sigma)^{-1}=h^*(\Delta_{\rho^{-1},\sigma^{-1}})R_{\sigma^{-1}}$, where
$h^*(x):=h(x^{-1})^{-1}$ for $x>0$, a function in $\OM_+[0,\infty)$ called the \emph{adjoint} of $h$
(see \cite{KA}, \cite[p.~194]{Hi0}). It is well known \cite{Han,AH} that a function
$k\in\OMD_+(0,\infty)$ has the integral expression
\begin{align}\label{F-4.4}
k(x)=a+{b\over x}+\int_{(0,\infty)}{1+t\over x+t}\,d\lambda_k(t),\qquad x\in(0,\infty),
\end{align}
where $a,b\ge0$ and $\lambda_k$ is a (unique) finite positive measure on $(0,\infty)$. When
$h\in\OM_+[0,\infty)$ and $k=1/h$ on $(0,\infty)$, the integral expression of $h^*$ (like \eqref{F-3.1})
is given as
\begin{equation}\label{F-4.5}
\begin{aligned}
h^*(x)&=a+bx+\int_{(0,\infty)}{1+t\over x^{-1}+t}\,d\lambda_{1/h}(t) \\
&=a+bx+\int_{(0,\infty)}{x(1+t)\over x+t}\,d\lambda_{1/h}(t^{-1}),\qquad x\in(0,\infty),
\end{aligned}
\end{equation}
so that $d\mu_{h^*}(t)=d\lambda_{1/h}(t^{-1})$. Expression \eqref{F-4.5} gives the integral
expression
\begin{align}\label{F-4.6}
J_h(\rho,\sigma)^{-1}=aR_\sigma^{-1}+bL_\rho^{-1}
+\int_{(0,\infty)}J_{h_t}(\rho,\sigma)^{-1}(1+t)\,d\lambda_{1/h}(t),
\end{align}
where $h_t(x):=x+t$ and $J_{h_t}(\rho,\sigma)^{-1}=(\Delta_{\rho,\sigma}+tI)^{-1}R_\sigma^{-1}$.

The next theorem is concerned with equality conditions for the monotonicity of
$\gamma_{\rho,\sigma}^h(X,X)$ and for the monotonicity version of Lieb's $3$-variable convexity
theorem \cite[Corollary 2.1]{Li} (see Corollary \ref{C-2.19}\,(3)). The theorem is a modification of
Jen\v cov\'a's result in \cite[Proposition 4]{Je}, while we treat two-point version
$\gamma_{\rho,\sigma}^h$ and our assumption on $\Phi$ is slightly weaker than that in \cite{Je}
(see Remark \ref{R-4.4}). The proof is similar to but simpler than those in Sec.~\ref{Sec-3.2}
because of simplicity of \eqref{F-4.1} for $\rho,\sigma\in\cBH^{++}$ compared with \eqref{F-2.8} for
$\rho,\sigma\in\cBH^+$. We give a sketchy proof here for completeness and for the convenience of
the reader. 

\begin{thm}\label{T-4.1}
Let $\Phi:\cBH\to\cBK$ be a trace-preserving map such that $\Phi^*$ is a Schwarz map. Let
$\rho,\sigma\in\cBH^{++}$ and $K\in\cBH$. Then the following conditions are equivalent:
\begin{itemize}
\item[(i)] $\gamma_{\Phi(\rho),\Phi(\sigma)}^h(\Phi(K),\Phi(K))=\gamma_{\rho,\sigma}^h(K,K)$ for
all $h\in\OM_+[0,\infty)$.
\item[(ii)] $\gamma_{\Phi(\rho),\Phi(\sigma)}^h(\Phi(K),\Phi(K))=\gamma_{\rho,\sigma}^h(K,K)$ for
some $h\in\OM_+[0,\infty)$ with $|\supp\lambda_{1/h}|\ge
|\Sp(\Delta_{\rho,\sigma})\cup\Sp(\Delta_{\Phi(\rho),\Phi(\sigma)})|$ (see \eqref{F-4.4}).
\item[(iii)] $S_h^{\Phi(K)}(\Phi(\rho)^{-1}\|\Phi(\sigma)^{-1})=S_h^K(\rho^{-1}\|\sigma^{-1})$ for all
$h\in\OM_+[0,\infty)$.
\item[(iv)] $S_h^{\Phi(K)}(\Phi(\rho)^{-1}\|\Phi(\sigma)^{-1})=S_h^K(\rho^{-1}\|\sigma^{-1})$ for some
$h\in\OM_+[0,\infty)$ with \eqref{F-3.5}.
\item[(v)] $\Tr\Phi(K)^*\Phi(\rho)^{-z}\Phi(K)\Phi(\sigma)^{z-1}=\Tr K^*\rho^{-z}K\sigma^{z-1}$ for all
$z\in\bC$.
\item[(vi)] $\Tr\Phi(K)^*\Phi(\rho)^{-\alpha}\Phi(K)\Phi(\sigma)^{\alpha-1}=
\Tr K^*\rho^{-\alpha}K\sigma^{\alpha-1}$ for some $\alpha\in(0,1)$.
\item[(vii)] $\Phi^*(\Phi(\rho)^{-z}\Phi(K)\Phi(\sigma)^{z-1})=\rho^{-z}K\sigma^{z-1}$ for all $z\in\bC$.
\end{itemize}
\end{thm}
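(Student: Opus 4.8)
The plan is to exploit the duality formula \eqref{F-4.3}, namely $\gamma_{\rho,\sigma}^h(X,X)=S_{h^*}^X(\rho^{-1}\|\sigma^{-1})$, to transfer the whole problem into the language of quasi-entropies, where the machinery of Section~\ref{Sec-3} is available in a streamlined form. Since $\rho,\sigma\in\cBH^{++}$, the operators $\rho^{-1},\sigma^{-1}$ are also invertible, and since $\Phi^*$ is a Schwarz map the same holds for the relevant compositions, so no boundary subtleties (no $f'(\infty)$ terms, no support projections other than $Q=\Phi(I_\cH)^0$) intervene. First I would note that as $h$ ranges over $\OM_+[0,\infty)$, so does $h^*$, and $\supp\mu_{h^*}(t)=\supp\lambda_{1/h}(t^{-1})$ by \eqref{F-4.5}; under the inversion $t\mapsto t^{-1}$ a set has a limit point in $(0,\infty)$ iff its image does, and cardinalities are preserved. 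Hence (i)$\iff$(iii) and (ii)$\iff$(iv) are immediate from \eqref{F-4.3} once one checks that the left-hand sides match: $\gamma_{\Phi(\rho),\Phi(\sigma)}^h(\Phi(K),\Phi(K))=S_{h^*}^{\Phi(K)}(\Phi(\rho)^{-1}\|\Phi(\sigma)^{-1})$, where the inverses on the right are understood on $Q\cK$, i.e. as generalized inverses — and here one uses $\Phi(\rho),\Phi(\sigma)\in\cB(Q\cK)^{++}$.

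Next I would handle the cycle among (iii), (iv), (v), (vi), (vii). Writing $\tilde\rho:=\Phi(\rho)$, $\tilde\sigma:=\Phi(\sigma)$, $\Delta:=\Delta_{\rho^{-1},\sigma^{-1}}$, $\tilde\Delta:=\Delta_{\tilde\rho^{-1},\tilde\sigma^{-1}}$, the argument parallels the proof of Theorem~\ref{T-3.2}, but simpler. The implications (iii)$\implies$(iv), (v)$\implies$(vi), (vii)$\implies$(v) are trivial or follow by taking traces. For (iv)$\implies$(vii): using the integral expression \eqref{F-3.1} for $h$ (with $h(0)$ and $h'(\infty)$ contributing the terms $S_1$ and $S_x$, which are equalities here because $\Tr\tilde\sigma^{-1}\Phi(K)^*\Phi(K)\le\Tr\sigma^{-1}\Phi^*(\Phi(K))^*\cdots$ — more precisely one uses $\|\Phi^*(Y)\sigma^{1/2}\|$ type bounds), one peels off $S_{\ffi_t}$, invokes Theorem~\ref{T-2.14} to get $S_{\ffi_t}^{\Phi(K)}(\tilde\rho^{-1}\|\tilde\sigma^{-1})\ge S_{\ffi_t}^K(\rho^{-1}\|\sigma^{-1})$ termwise, and concludes equality for $t$ in a set $T$ with $|T|\ge|\Sp(\Delta)\cup\Sp(\tilde\Delta)|$. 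One then builds the contraction $V:\cB(Q\cK)\to\cBH$ by $V(Y):=\Phi^*(Y\tilde\sigma^{1/2})\sigma^{-1/2}$ (note the sign of the exponent: here we work with $\rho^{-1},\sigma^{-1}$), checks $V^*\Delta V\le\tilde\Delta$, applies the operator-monotonicity result \cite[Theorem~2.1]{HaPe} to get $\ffi_t(\tilde\Delta)\ge V^*\ffi_t(\Delta)V$, forces equality on the vector $\Phi(K)\tilde\sigma^{-1/2}$ for $t\in T$, and via the Cauchy-matrix/interpolation argument of \cite[around (5.5)--(5.7)]{HMPB} promotes this to $V(f(\tilde\Delta)(\cdots))=f(\Delta)(\cdots)$ for all $f$, then specializes $f(x)=x^z$. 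The point that makes this \emph{simpler} than Theorem~\ref{T-3.2} is that $K\in\cM_{\Phi^*}$ is not assumed and not needed: the relation $\Phi(K)\tilde\sigma^{-1/2}=V^*(\cdots)$ and its partner follow directly from the definition of $V$ and $V^*$ without the multiplicative-domain identity, because the key adjoint computation collapses when $\rho,\sigma,\tilde\rho,\tilde\sigma$ are all invertible. Finally (v)$\implies$(iii) follows as in the proof of (v)$\implies$(i) of Theorem~\ref{T-3.2}\,(3): evaluating (v) at $z=-n$, $n\in\bN\cup\{0\}$, gives $\<\Phi(K)\tilde\sigma^{-1/2},\tilde\Delta^n(\cdots)\>=\<K\sigma^{-1/2},\Delta^n(\cdots)\>$, hence the same for all continuous functions by polynomial approximation on a compact interval containing both spectra, hence for all $h^*$ with $h\in\OM_+[0,\infty)$.

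The main obstacle I anticipate is bookkeeping around the support projection $Q=\Phi(I_\cH)^0$: when $\Phi^*$ is merely a Schwarz map (not $2$-positive), $\Phi(\sigma)$ need not be invertible on all of $\cK$, only on $Q\cK$, so every occurrence of $\Phi(\rho)^z$, $\Phi(\sigma)^z$ and of $\tilde\Delta$ must be read on $\cB(Q\cK)$, and one must verify that the contraction $V$ and its adjoint behave correctly with respect to $Q$ — in particular that $V^*$ maps into the relevant subspace and that the trace identities in (v)--(vii) are insensitive to whether one writes $Q$ or not. This is exactly the kind of care taken in Section~\ref{Sec-3.2}, and I would dispatch it by the same device (passing, if convenient, to $\hat\Phi$ as in \eqref{F-3.21}), but it is where a careless argument would break. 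A secondary, minor point is checking that the cardinality/limit-point conditions survive the passage $h\leftrightarrow h^*$ and the restriction to $Q\cK$, which is routine given \eqref{F-4.5}.
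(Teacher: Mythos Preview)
Your reductions (i)$\iff$(iii), (ii)$\iff$(iv) via \eqref{F-4.3} and \eqref{F-4.5}, and the polynomial-approximation argument for (v)$\implies$(iii), are fine. The gap is in your core implication (iv)$\implies$(vii), where two things go wrong.

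First, Theorem~\ref{T-2.14} does not give the termwise inequality you want. That theorem compares $S_h^K(\Phi(\rho)\|\Phi(\sigma))$ with $S_h^{\Phi^*(K)}(\rho\|\sigma)$, whereas here you need to compare $S_{h^*}^{\Phi(K)}(\Phi(\rho)^{-1}\|\Phi(\sigma)^{-1})$ with $S_{h^*}^K(\rho^{-1}\|\sigma^{-1})$: the reference operator is $\Phi(K)$, not $\Phi^*(K)$, and the arguments are $\Phi(\rho)^{-1}$, not $\Phi(\rho^{-1})$. There is no trace-preserving map sending $(\rho^{-1},\sigma^{-1})$ to $(\Phi(\rho)^{-1},\Phi(\sigma)^{-1})$ to which Theorem~\ref{T-2.14} would apply. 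The correct source of the termwise monotonicity is Theorem~\ref{T-2.2}\,(II$'$), i.e.\ \eqref{F-4.2}, which is an operator inequality on $\cBH$.

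Second, and more seriously, your contraction $V(Y)=\Phi^*(Y\tilde\sigma^{1/2})\sigma^{-1/2}$ is in general \emph{not} a contraction: tracing through, you would need $\Phi(\sigma^{-1})\le\Phi(\sigma)^{-1}$, which fails already for $\Phi=\T$ on $\bM_2$ (take $\sigma=I_2$: then $\Tr\sigma^{-1}=2\not\le\tfrac12=(\Tr\sigma)^{-1}$). Likewise your claimed inequality $V^*\Delta V\le\tilde\Delta$ would reduce to $\Phi(\rho^{-1})\le\Phi(\rho)^{-1}$, which fails for the same reason. The Section~\ref{Sec-3.2} machinery builds $V:\cBK\to\cBH$ via $\Phi^*$ because there the monotonicity comes from (III$'$), an operator inequality on $\cBK$; here the monotonicity comes from (II$'$), an operator inequality on $\cBH$, and the right contraction goes the other way: $V:\cBH\to\cBK$, $V(X\sigma^{-1/2}):=\Phi(X)\Phi(\sigma)^{-1/2}$, whose contractivity \emph{is} exactly (II$'$) for $h\equiv1$.

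This is also precisely why the argument is ``simpler than Theorem~\ref{T-3.2}'': since (II$'$) says $A:=J_h(\rho,\sigma)^{-1}-\Phi^*J_h(\Phi(\rho),\Phi(\sigma))^{-1}\Phi\ge0$ on $\cBH$, equality $\<K,AK\>_\HS=0$ immediately gives $AK=0$, with no need for Hansen--Pedersen or for $K\in\cM_{\Phi^*}$. The paper proves (i)$\implies$(vii) in one line this way (take $h(x)=x^\alpha$ and analytically continue), and (ii)$\implies$(i) by applying the same observation termwise to $h_t(x)=x+t$ using \eqref{F-4.6}, then interpolating via the Cauchy-matrix argument with the correct $V$ above.
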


\begin{proof} (Sketch)\enspace
By \eqref{F-4.3} and \eqref{F-4.5} we have (i)$\iff$(iii) and (ii)$\iff$(iv). (i)$\implies$(ii) and
(v)$\implies$(vi) are trivial. (vi)$\implies$(ii) is immediately seen since (vi) means that (ii) holds for
$h(x)=x^\alpha$ for some $\alpha\in(0,1)$. (vii)$\implies$(v) is easy as well.

(i)$\implies$(vii).\enspace
Condition (i) means that
\[
\<K,[\Phi^*J_h(\Phi(\rho),\Phi(\sigma))^{-1}\Phi]K\>_\HS
=\<K,J_h(\rho,\sigma)^{-1}K\>_\HS,
\]
which yields by Theorem \ref{T-2.2}\,(II$'$),
\[
\bigl[J_h(\rho,\sigma)^{-1}-\Phi^*J_h(\Phi(\rho),\Phi(\sigma))^{-1}\Phi\bigr]K=0.
\]
Hence (vii) follows by letting $h(x)=x^\alpha$ for any $\alpha\in(0,1)$ and then by analytic
continuation.

(ii)$\implies$(i).\enspace
In view of expression \eqref{F-4.6} and Theorem \ref{T-2.2}\,(II$'$), condition (ii) implies that there
exists a $T\subset\supp\lambda_{1/h}$ such that
$|T|\ge|\Sp(\Delta_{\rho,\sigma})\cup\Sp(\Delta_{\Phi(\rho),\Phi(\sigma)})|$ and
\begin{align}\label{F-4.7}
\bigl[J_{h_t}(\rho,\sigma)^{-1}-\Phi^*J_{h_t}(\Phi(\rho),\Phi(\sigma))^{-1}\Phi\bigr]K=0,
\qquad t\in T.
\end{align}
One can define a linear contraction $V:\cBH\to\cBK$ by
$V(X\sigma^{-1/2}):=\Phi(X)\Phi(\sigma)^{-1/2}$, $X\in\cBH$. Then \eqref{F-4.7} is equivalently
written as
\[
(\Delta_{\rho,\sigma}+tI)^{-1}(K\sigma^{-1/2})
=V^*(\Delta_{\Phi(\rho),\Phi(\sigma)}+tI)^{-1}(\Phi(K)\Phi(\sigma)^{-1/2}),\qquad t\in T.
\]
For an arbitrary function $f$ on $(0,\infty)$, similarly to \cite[around (5.5)--(5.7)]{HMPB} one can
obtain
\[
V^*f(\Delta_{\Phi(\rho),\Phi(\sigma)})(\Phi(K)\Phi(\sigma)^{-1/2})
=f(\Delta_{\rho,\sigma})(K\sigma^{-1/2}),
\]
which shows (i) by letting $f=1/h$.
\end{proof}

\begin{remark}\label{R-4.2}\rm
The point of Theorem \ref{T-4.1} is that the equality case in the monotonicity in \eqref{F-4.2} of
monotone metrics for a special function $h(x)=x^\alpha$ (i.e., condition (vi) of the theorem)
implies that for all operator monotone functions $h$ (i.e., condition (i)). A similar feature takes place
in Theorems \ref{T-3.2} and \ref{T-3.3} as well.
\end{remark}

\begin{remark}\label{R-4.3}\rm
In condition (ii) of Theorem \ref{T-4.1}, it might be more natural to take $\mu_h$ in place of
$\lambda_{1/h}$ in the support condition. This would be possible thanks to \eqref{F-4.5} if we have
$|\supp\mu_h|=|\supp\mu_{h^*}|$. However, an explicit relation between $\supp\mu_h$ and
$\supp\mu_{h^*}$ does not seem known, although the representing measures $\mu_h$ and $\mu_{h^*}$
are determined in principle by the Stieltjes inversion formula (see, e.g., \cite[Theorem V.4.12]{Bh}). For
example, if $h(x)=2x/(x+1)$ and so $h^*(x)=(x+1)/2$, then $\mu_h=\delta_1$ (point mass at $1$) and
$\mu_{h^*}=0$, so that $|\supp\mu_h|\ne|\supp\mu_{h^*}|$.
\end{remark}

\begin{remark}\label{R-4.4}\rm
Assume that $\rho=\sigma$ in Theorem \ref{T-4.1}. Then the conditions in the theorem is also
equivalent to
\begin{itemize}
\item[(viii)] $\Phi_\sigma^*(\Phi(K))=K$, where $\Phi_\sigma^*$ is the Petz recovery map given in
\eqref{F-3.36}.
\end{itemize}
Indeed, since this is the $z=1/2$ case of the equality in (vii), it is clear that
(vii)$\implies$(viii)$\implies$(vi). Furthermore, there is an alternative proof of
(viii)$\implies$(ii) even without the assumption that $\Phi_\sigma$ is a Schwarz map. Set
\[
\cD_{\sigma,K}:=\{\sigma+s\Re K+t\Im K:s,t\in\bR\}\cap\cBH^{++},
\]
where $\Re K$ and $\Im K$ are the real and imaginary parts of $K$. Since $\Phi_\sigma^*$ is a
trace-preserving positive map with $\Phi_\sigma^*(\Phi(\sigma))=\sigma$ (see \eqref{F-3.37}), it
follows from the monotonicity property of the relative entropy \cite{MR} that (viii) implies the
equality $D(\Phi(\sigma_1)\|\Phi(\sigma_2))=D(\sigma_1\|\sigma_2)$ for all
$\sigma_1,\sigma_2\in\cD_{\sigma,K}$ (see Remark \ref{R-3.15} for a similar argument).
Recall \cite{LeRu} that the so-called \emph{Bogoliubov} (or \emph{Kubo--Mori}) \emph{metric} is
the monotone metric corresponding to $h_0(x):={x-1\over\log x}\in\OM_+[0,\infty)$ and
\[
\gamma_\sigma^{h_0}(X,Y)
=-{\partial^2\over\partial s\partial t}D(\sigma+sX\|\sigma+tY)\Big|_{s=t=0},
\qquad X,Y\in\cBH^\sa.
\]
Therefore, we see that (viii) yields $\gamma_{\Phi(\sigma)}^{h_0}(\Phi(X),\Phi(Y))=
\gamma_\sigma^{h_0}(X,Y)$ for all $X,Y\in\{\Re K,\Im K\}$, which implies (ii).
\end{remark}

Rewriting conditions of Theorem \ref{T-4.1} in the situation where $\Phi$ is given in \eqref{F-3.44}
as in Proposition \ref{P-3.16}, we have equivalent conditions for equality cases in joint convexity
(equivalent to joint subadditivity as mentioned above) of monotone metrics and in Lieb's convexity
(\cite[Corollary 2.1]{Li}). For $\rho_i,\sigma_i\in\cBH^{++}$ and $K_i\in\cBH$, $1\le i\le n$, the
following are among such equivalent conditions, corresponding to (i), (iii), (vi) and (vii):
\begin{itemize}
\item[(I)] $\gamma_{\sum_{i=1}^n\rho_i,\sum_{i=1}^n\sigma_i}^h\bigl(\sum_{i=1}^nK_i,
\sum_{i=1}^nK_i\bigr)=\sum_{i=1}^n\gamma_{\rho_i,\sigma_i}^h(K_i,K_i)$ for
all $h\in\OM_+[0,\infty)$.
\item[(III)] $S_h^{\sum_{i=1}^nK_i}\bigl((\sum_{i=1}^n\rho_i)^{-1}\|(\sum_{i=1}^n\sigma_i)^{-1}\bigr)=
\sum_{i=1}^nS_h^{K_i}(\rho_i^{-1}\|\sigma_i^{-1})$ for all $h\in\OM_+[0,\infty)$.
\item[(VI)] $\Tr(\sum_{i=1}^nK_i)^*(\sum_{i=1}^n\rho_i)^{-\alpha}(\sum_{i=1}^nK_i)
(\sum_{i=1}^n\sigma_i)^{\alpha-1}=\sum_{i=1}^n\Tr K_i^*\rho_i^{-\alpha}K_i\sigma_i^{\alpha-1}$ for
some $\alpha\in(0,1)$.
\item[(VII)] $(\sum_{i=1}^n\rho_i)^{-z}(\sum_{i=1}^nK_i)(\sum_{i=1}^n\sigma_i)^{z-1}=
\rho_i^{-z}K_i\sigma_i^{z-1}$ for all $z\in\bC$.
\end{itemize}

For each $k\in\OMD_+(0,\infty)$ with $k(1)=1$, Temme et al.~\cite{TKRWV} introduced the
\emph{quantum $\chi^2$-divergence} of $\rho,\sigma\in\cSH$ with $\rho^0\le\sigma^0$ relative to
$k$ by
\[
\chi_k^2(\rho,\sigma):=\<\rho-\sigma,\Omega_\sigma^k(\rho-\sigma)\>_\HS
=\<\rho,\Omega_\sigma^k\rho\>_\HS-1,
\]
where $\Omega_\sigma^k:=k(\Delta_\sigma)R_\sigma^{-1}$. In fact, the symmetry
condition $k(x^{-1})=xk(x)$ was imposed in \cite{TKRWV}, which is though not essential since
$\chi_k^2(\rho,\sigma)=\chi_{k_\sym}^2(\rho,\sigma)$ for the symmetrized
$k_\sym(x):=(k(x)+x^{-1}k(x^{-1}))/2\in\OMD_+(0,\infty)$. Since $\Omega_\sigma^k\sigma=I_\cH$
and $\Omega_\sigma^k=J_h^{1/k}(\sigma,\sigma)^{-1}$ in our notation, we have
\begin{align}\label{F-4.8}
\chi_k^2(\rho,\sigma)=\gamma_\sigma^{1/k}(\rho,\rho)-1,
\end{align}
where $\gamma_\sigma^{1/k}(\rho,\rho)$ is defined as a monotone metric on
$\cB(\sigma^0\cH)^{++}$. By \eqref{F-4.2} we have the monotonicity
$\chi_k^2(\Phi(\rho),\Phi(\sigma))\le\chi_k^2(\rho,\sigma)$ under trace-preserving maps $\Phi$
such that $\Phi^*$ is a Schwarz map. Joint convexity of $\chi_k^2(\rho,\sigma)$ in $\rho,\sigma$
was shown in \cite{Han2}. The important case discussed in \cite{TKRWV} is
\[
k_\alpha(x):={x^{-\alpha}+x^{\alpha-1}\over2},\qquad
\chi_\alpha^2(\rho,\sigma):=\chi_{k_\alpha}^2(\rho,\sigma)
=\Tr\rho\sigma^{-\alpha}\rho\sigma^{\alpha-1}-1.
\]

The following are equality conditions for the monotonicity of $\chi^2$-divergences, which were
shown in \cite{Je} with a slightly stronger assumption on $\Phi$. They are, in view of \eqref{F-4.8},
rewritings of some conditions of Theorem \ref{T-4.1}, while those corresponding to (iii)--(v) are
omitted here. 

\begin{cor}\label{C-4.5}
For any trace-preserving map $\Phi:\cBH\to\cBK$ such that $\Phi^*$ is a Schwarz map, and for
every $\rho,\sigma\in\cSH$ with $\rho^0\le\sigma^0$, the following conditions are equivalent:
\begin{itemize}
\item[(i$'$)] $\chi^2_k(\Phi(\rho),\Phi(\sigma))=\chi_k^2(\rho,\sigma)$ for all $k\in\OMD_+(0,\infty)$
with $k(1)=1$.
\item[(ii$'$)] $\chi^2_k(\Phi(\rho),\Phi(\sigma))=\chi_k^2(\rho,\sigma)$ for some
$k\in\OMD_+(0,\infty)$ with $k(1)=1$ and
\begin{align}\label{F-4.9}
|\supp\lambda_k|\ge|\Sp(\Delta_\sigma)\cup\Sp(\Delta_{\Phi(\sigma)})|.
\end{align}
\item[(vi$'$)] $\Tr\Phi(\rho)\Phi(\sigma)^{-\alpha}\Phi(\rho)\Phi(\sigma)^{\alpha-1}=
\Tr\rho\sigma^{-\alpha}\rho\sigma^{\alpha-1}$, i.e.,
$\chi_\alpha^2(\Phi(\rho),\Phi(\sigma))=\chi_\alpha^2(\rho,\sigma)$ for some $\alpha\in(0,1)$.
\item[(vii$'$)] $\Phi^*(\Phi(\sigma)^{-z}\Phi(\rho)\Phi(\sigma)^{z-1})=\sigma^{-z}\rho\sigma^{z-1}$
for all $z\in\bC$.
\end{itemize}
\end{cor}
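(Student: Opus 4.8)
The plan is to deduce everything from Theorem \ref{T-4.1} via the identity \eqref{F-4.8}, namely $\chi_k^2(\rho,\sigma)=\gamma_\sigma^{1/k}(\rho,\rho)-1$, where the right-hand side is the monotone metric on $\cB(\sigma^0\cH)^{++}$. First I would reduce to the case $\sigma\in\cBH^{++}$: since $\rho^0\le\sigma^0$, one may pass to the compression onto $\sigma^0\cH$ and to the analogous compression of $\Phi$ (exactly as with $\hat\Phi$ in \eqref{F-3.21}), so that $\rho,\sigma$ become invertible on the relevant subspaces and $\Phi(\rho),\Phi(\sigma)$ invertible on $\Phi(\sigma^0)^0\cK$. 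Under this reduction, $\chi_k^2(\Phi(\rho),\Phi(\sigma))=\chi_k^2(\rho,\sigma)$ becomes precisely $\gamma_{\Phi(\sigma)}^{1/k}(\Phi(\rho),\Phi(\rho))=\gamma_\sigma^{1/k}(\rho,\rho)$, i.e.\ condition (i) or (ii) of Theorem \ref{T-4.1} with the parameter choice $K=\rho$, $h=1/k$ and with the roles of $\rho,\sigma$ there both taken to be $\sigma$ (the one-point case $\rho=\sigma$).

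Next I would match the parametrizations. A function $k\in\OMD_+(0,\infty)$ with $k(1)=1$ corresponds bijectively to $h:=1/k\in\OM_+(0,\infty)$ with $h(1)=1$, and the measure $\lambda_k$ in \eqref{F-4.4} is exactly the measure $\lambda_{1/h}$ appearing in \eqref{F-4.5} and in condition (ii) of Theorem \ref{T-4.1}. Hence (i$'$)$\iff$(i) and (ii$'$)$\iff$(ii): the support condition \eqref{F-4.9} on $\lambda_k$ is literally the support condition of Theorem \ref{T-4.1}(ii) (with $\rho=\sigma$, so $\Delta_{\rho,\sigma}=\Delta_\sigma$). For (vi$'$) I would specialize $k=k_\alpha$, which gives $1/k_\alpha$ up to the computation $\chi_{k_\alpha}^2(\rho,\sigma)=\Tr\rho\sigma^{-\alpha}\rho\sigma^{\alpha-1}-1$ recorded just before the corollary; equating $\chi_\alpha^2(\Phi(\rho),\Phi(\sigma))=\chi_\alpha^2(\rho,\sigma)$ is then exactly condition (vi) of Theorem \ref{T-4.1} with $K=\rho$. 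Finally (vii$'$) is condition (vii) of Theorem \ref{T-4.1} with $K=\rho$ and the two foot points equal, written out as $\Phi^*(\Phi(\sigma)^{-z}\Phi(\rho)\Phi(\sigma)^{z-1})=\sigma^{-z}\rho\sigma^{z-1}$.

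With these identifications in hand, the chain (i$'$)$\iff$(i)$\iff$(ii)$\iff$(ii$'$), (i)$\iff$(vi)$\iff$(vi$'$), (i)$\iff$(vii)$\iff$(vii$'$) follows directly from Theorem \ref{T-4.1}. The only points that require a little care — and the step I expect to be the mild obstacle — are the bookkeeping in the compression to $\sigma^0\cH$ (checking that $\chi_k^2$ is genuinely insensitive to this restriction, which is immediate from $\Omega_\sigma^k\sigma=I$ and $\rho^0\le\sigma^0$, and that the resulting compressed map still has a Schwarz adjoint, which is Lemma \ref{L-3.10}) and the verification that $\supp\lambda_k=\supp\lambda_{1/h}$ with the correct change of variables $t\mapsto t^{-1}$ from \eqref{F-4.5}. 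Everything else is a transcription. I would therefore write the proof as: reduce to $\sigma$ invertible; invoke \eqref{F-4.8} to rewrite each of (i$'$), (ii$'$), (vi$'$), (vii$'$) as the corresponding condition of Theorem \ref{T-4.1} with $K=\rho$ and $\rho=\sigma$ there; and conclude by Theorem \ref{T-4.1}. (Note that the conditions corresponding to (iii)–(v) of Theorem \ref{T-4.1} are simply not listed in the corollary, as stated.)
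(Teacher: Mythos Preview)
Your proposal is correct and follows essentially the same approach as the paper, which simply states that the conditions of Corollary~\ref{C-4.5} are, via \eqref{F-4.8}, rewritings of the corresponding conditions of Theorem~\ref{T-4.1} with $K=\rho$ and both foot points equal to $\sigma$. Your added care about the compression to $\sigma^0\cH$ (invoking Lemma~\ref{L-3.10}) is appropriate since Theorem~\ref{T-4.1} is stated for invertible $\rho,\sigma$; one minor slip is your remark about a change of variables $t\mapsto t^{-1}$ for $\supp\lambda_k$ versus $\supp\lambda_{1/h}$---since $h=1/k$ these measures are literally equal, and the substitution in \eqref{F-4.5} concerns the different measure $\mu_{h^*}$, so no change of variables is needed here.
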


\section{Maps preserving quasi-entropies and monotone metrics}\label{Sec-5}

In this section we first determine the form of maps $\Phi:\cBH\to \cBK$ which satisfy equality
conditions given in Sec.~\ref{Sec-3.1} for a sufficient family of $K\in\cBK$ with some (fixed)
pair $\rho,\sigma\in\cBH^{++}$.

\begin{thm}\label{T-5.1}
Let $\Phi:\cBH\to \cBK$ be a trace-preserving map such that $\Phi^*$ is a Schwarz map, and
let $Q:=\Phi(I_\cH)^0$. Then the following conditions are equivalent:
\begin{itemize}
\item[(a)] $\Phi$ satisfies all of (ii)--(ix) given in Sec.~\ref{Sec-3.1} for all
$\rho,\sigma\in\Phi^*(\cBK^+)$ and all $K\in\cBK$.
\item[(b)] $\Phi$ satisfies (i) (equivalently (i$'$)) in Sec.~\ref{Sec-3.1} for all
$\rho,\sigma\in\Phi^*(\cBK^+)$ and all $K\in\cBK$ with $QK=KQ$.
\item[(c)] There exist $\rho,\sigma\in\cBH^{++}$ and a set $\cT\subset \cBK$ such that
$Q\cT Q\cup\{Q\}$ (where $Q\cT Q:=\{QKQ:K\in\cT\}$) generates $\cB(Q\cK)$ as a
$C^*$-subalgebra and $\Phi$ satisfies one of (ii)--(ix) in Sec.~\ref{Sec-3.1} for $\rho,\sigma$ and
all $K\in\cT$.
\item[(d)] The multiplicative domain of $\Phi^*|_{\cB(Q\cK)}$ is $\cB(Q\cK)$.
\item[(e)] $\cM_{\Phi^*}=\cB(Q\cK)\oplus\cB(Q^\perp\cK)$ $(=\{Y\in\cBK:QY=YQ\})$.
\item[(f)] There exist a decomposition $\cH=\cH_1\otimes\cH_2$ and an isometry $V:\cH_1\to\cK$
such that $VV^*=Q$ and
\begin{align}\label{F-5.1}
\Phi(X)=V(\T_{\cH_2}X)V^*,\qquad X\in\cBH.
\end{align}
\end{itemize}

If $\Phi$ is in addition unital, then $\Phi$ satisfies one (equivalently, all) of (a)--(f) if and only if
$\Phi$ is a unitary transformation, i.e., there exists a unitary $U:\cH\to\cK$ such that
$\Phi(X)=UXU^*$ for all $X\in\cBH$.
\end{thm}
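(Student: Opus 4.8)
The plan is to run the cycle $(a)\Rightarrow(c)$, $(b)\Rightarrow(c)$, $(c)\Rightarrow(d)$, $(d)\Rightarrow(f)$, $(f)\Rightarrow(e)$, $(e)\Rightarrow(d)$, $(f)\Rightarrow(a)$, $(f)\Rightarrow(b)$, and then to read off the final (unital) assertion directly from (f). The two implications into (c) are immediate: since $I_\cH=\Phi^*(I_\cK)\in\Phi^*(\cBK^+)\cap\cBH^{++}$, I would specialize (a) (resp.\ (b)) to $\rho=\sigma=I_\cH$ and take $\cT=\cBK$ (resp.\ $\cT=\{K\in\cBK:QK=KQ\}$); in either case $Q\cT Q$ already exhausts $\cB(Q\cK)$, and since (i) implies (ii) by Lemma~\ref{L-3.1}, one of (ii)--(ix) holds for every $K\in\cT$, which is exactly (c).

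The heart of the matter is $(c)\Rightarrow(d)$. Fix $\rho,\sigma\in\cBH^{++}$ and $\cT$ as in (c). First, $\sigma\in\cBH^{++}$ forces $\Phi(\sigma)^0\le\Phi(I_\cH)^0=Q$, so the positive operator $\Phi^*(I_\cK-Q)$ satisfies $\Tr\sigma\,\Phi^*(I_\cK-Q)=\Tr\Phi(\sigma)(I_\cK-Q)=0$, hence $\Phi^*(I_\cK-Q)=0$; thus $\Phi^*(Q)=I_\cH$, and since $\cM_{\Phi^*}$ is a $C^*$-algebra containing $I_\cK$ and $I_\cK-Q$ it contains $Q$. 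Because $\rho^0=\sigma^0=I_\cH$, Theorem~\ref{T-3.2}(2) applies, so the single equality assumed in (c) forces all of (ii)--(ix) for each $K\in\cT$; and the proof of the implication (iv)$\Rightarrow$(viii) in Theorem~\ref{T-3.2}(2) establishes along the way that $\Phi^*(K^*K)=\Phi^*(K)^*\Phi^*(K)$ and $\Phi^*(KK^*)=\Phi^*(K)\Phi^*(K)^*$, i.e.\ $K\in\cM_{\Phi^*}$ (the invertibility of $\rho,\sigma$ being used exactly as there). Hence $\cT\subset\cM_{\Phi^*}$ and therefore $Q\cT Q\subset\cM_{\Phi^*}$; so $\cM_{\Phi^*}\cap\cB(Q\cK)$, which is precisely the multiplicative domain of $\Phi^*|_{\cB(Q\cK)}$, is a $C^*$-subalgebra of $\cB(Q\cK)$ containing $Q\cT Q\cup\{Q\}$ and thus equals $\cB(Q\cK)$, which is (d).

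Next come the structural equivalences. For $(d)\Rightarrow(f)$ I would invoke the classification of $*$-homomorphisms of a matrix algebra: by (d) together with $\Phi^*(Q)=I_\cH$ and $Q\in\cM_{\Phi^*}$, the map $\Phi^*|_{\cB(Q\cK)}$ is a unital $*$-homomorphism of the simple algebra $\cB(Q\cK)\cong M_n(\bC)$ into $\cBH$, hence unitarily equivalent to a multiple of the identity representation: there are a Hilbert space $\cH_2$ and a unitary $W:\cH\to Q\cK\otimes\cH_2$ with $\Phi^*(Z)=W^*(Z\otimes I_{\cH_2})W$ for $Z\in\cB(Q\cK)$. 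Since $Q\in\cM_{\Phi^*}$, \eqref{F-2.13} gives $\Phi^*(Y)=\Phi^*(QYQ)$ for all $Y\in\cBK$, so $\Phi^*(Y)=W^*(QYQ\otimes I_{\cH_2})W$; identifying $\cH$ with $Q\cK\otimes\cH_2$ via $W$, setting $\cH_1:=Q\cK$, and letting $V:\cH_1\hookrightarrow\cK$ be the inclusion (so $VV^*=Q$), this reads $\Phi^*(Y)=(V^*YV)\otimes I_{\cH_2}$, and passing to Hilbert--Schmidt adjoints yields $\Phi(X)=V(\T_{\cH_2}X)V^*$, i.e.\ (f). Conversely $(f)\Rightarrow(e)$ is a short computation from $\Phi^*(Y)=(V^*YV)\otimes I_{\cH_2}$ with $VV^*=Q$, $V^*V=I_{\cH_1}$: one checks $Y\in\cM_{\Phi^*}$ iff $(I_\cK-Q)YV=0=V^*Y(I_\cK-Q)$ iff $QY=YQ$, so $\cM_{\Phi^*}=\cB(Q\cK)\oplus\cB(Q^\perp\cK)$; and $(e)\Rightarrow(d)$ is immediate, as then the multiplicative domain of $\Phi^*|_{\cB(Q\cK)}$ equals $\cM_{\Phi^*}\cap\cB(Q\cK)=\cB(Q\cK)$.

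Finally $(f)\Rightarrow(a)$ and $(f)\Rightarrow(b)$ would be checked directly. With $\Phi(X)=V(\T_{\cH_2}X)V^*$ one has $\Phi^*(\cBK^+)=\{A\otimes I_{\cH_2}:A\in\cB(\cH_1)^+\}$, $\Phi(\rho_1\otimes I_{\cH_2})=(\dim\cH_2)\,V\rho_1V^*$, and $\Phi^*(K)=(V^*KV)\otimes I_{\cH_2}$; substituting these into each of (ii)--(ix) and using that $A\mapsto VAV^*$ is a $*$-isomorphism of $\cB(\cH_1)$ onto the corner $\cB(Q\cK)$, every such equality collapses to a trivial identity (e.g.\ both sides of (vii) equal $(\dim\cH_2)\Tr(V^*KV)^*\rho_1^\alpha(V^*KV)\sigma_1^{1-\alpha}$), which gives (a); and for $K$ with $QK=KQ$ we have $K\in\cM_{\Phi^*}$ by (e), so Theorem~\ref{T-3.2}(3) makes (i)--(x) all equivalent, and condition (x) is again verified directly from the formula, giving (b). For the last assertion, $\Phi$ unital forces $Q=\Phi(I_\cH)^0=I_\cK$, so in (f) $V$ is a unitary $\cH_1\to\cK$ and $\Phi(I_\cH)=(\dim\cH_2)VV^*=(\dim\cH_2)I_\cK$, whence $\dim\cH_2=1$ and $\Phi(X)=VXV^*$, the converse being trivial. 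I expect the main obstacle to be $(c)\Rightarrow(d)$ — in particular, extracting membership in $\cM_{\Phi^*}$ from a single scalar equality, which requires reopening the internal construction in the proof of Theorem~\ref{T-3.2}(2) — together with keeping the corner reduction in $(f)\Rightarrow(a)$ clean enough that the case of arbitrary $K$ (not assumed in the multiplicative domain) still goes through.
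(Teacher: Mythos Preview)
Your proof is correct and follows the same cycle of implications as the paper. The one notable difference is in $(c)\Rightarrow(d)$: rather than reopening the proof of Theorem~\ref{T-3.2}(2) --- which, strictly speaking, after the Lemma~\ref{L-3.11} reduction yields $QKQ\in\cM_{\hat\Phi^*}$ (sufficient for your conclusion, but not literally $K\in\cM_{\Phi^*}$ as you state) --- the paper first reduces to $Q=I_\cK$ and then takes the cleaner route of letting $\alpha\searrow0$ and $\alpha\nearrow1$ in condition~(vii) to obtain $\Tr\sigma\bigl(\Phi^*(K^*K)-\Phi^*(K)^*\Phi^*(K)\bigr)=0$ and its companion directly, avoiding any appeal to an internal construction.
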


\begin{proof}
Let $\hat\Phi:\cBH\to\cB(Q\cK)$ be defined as in \eqref{F-3.21} with $\sigma=I_\cH$ and so
$\tilde\sigma^0=Q$, that is, $\hat\Phi$ in the present case is $\Phi$ viewed as a map into
$\cB(Q\cK)$ instead of $\cBK$. By Lemma \ref{L-3.10} note that
$\hat\Phi^*=\Phi^*|_{\cB(Q\cK)}$ and $\Phi^*(Y)=\hat\Phi^*(QYQ)=\Phi^*(QYQ)$ for any
$Y\in\cBK$. Hence conditions (d) and (f) for $\Phi$ are the same as (d) and (f) for $\hat\Phi$,
respectively. Moreover, by Lemma \ref{L-3.11}, condition (c) for $\Phi$ implies (c) for $\hat\Phi$.
Therefore, we may and do assume $Q=I_\cK$ below to prove that (c)$\implies$(d)$\implies$(f).

(c)$\implies$(d).\enspace
Assume that (c) holds with $Q=I_\cK$. By Theorem \ref{T-3.2}\,(2) the equality in (vii) in
Sec.~\ref{Sec-3.1} holds for all $\alpha\in(0,1)$ and any $K\in\cT$. Letting $\alpha\searrow0$ and
$\alpha\nearrow1$ in the equality gives
\[
\Tr\sigma\Phi(K^*K)=\Tr\sigma\Phi(K)^*\Phi(K),\qquad
\Tr\rho\Phi(KK^*)=\Tr\rho\Phi(K)\Phi(K)^*,
\]
which imply that $\cT\subset\cM_{\Phi^*}$. Since $\cM_{\Phi^*}$ is a $C^*$-subalgebra of
$\cB(\cK)$, $\cM_{\Phi^*}=\cBK$ (i.e., (d) in the case $Q=I_\cK$) follows.

(d)$\implies$(f).\enspace
Since (d) (with $Q=I_\cK$) implies that $\Phi^*$ is a *-isomorphism from $\cBK$ into $\cBH$
(due to $\cBK$ being a factor), it follows (see, e.g., \cite[Lemma 2.2.2]{GDJ}) that
$\Phi^*(\cBK)'\cap \cBH$ is a factor and
$\cBH=\Phi^*(\cBK)\otimes(\Phi^*(\cBK)'\cap \cBH)$, where
$\Phi^*(\cBK)'\cap \cBH$ is the commutant of $\Phi^*(\cBK)$ in
$\cBH$. Hence one can write $\cH$ as a tensor product space
$\cH_1\otimes\cH_2$ so that $\Phi^*(\cBK)=\cB(\cH_1)\otimes I_2$ where $I_2:=I_{\cH_2}$, and
there exists a unitary $V:\cH_1\to\cK$ (this becomes an isometry when $Q\ne I_\cK$) such that
$\Phi^*(Y)=V^*YV\otimes I_2$ for any $Y\in\cBK$. For any $X\in\cBH$ one has
\begin{align}\label{F-5.2}
\<\Phi^*(Y),X\>_\HS=\Tr (V^*Y^*V\otimes I_2)X=\Tr Y^*V(\T_{\cH_2}X)V^*,
\qquad Y\in\cBK.
\end{align}
Therefore, \eqref{F-5.1} follows.

Let us now proceed to prove without the $Q=I_\cK$ assumption.

(e)$\implies$(d) is clear since $\Phi^*(Y)=\Phi^*(QYQ)$ for all $Y\in\cBK$.

(f)$\implies$(e).\enspace
Assume (f). Thanks to \eqref{F-5.1} and \eqref{F-5.2} (viewed in the converse direction)
it follows that
\begin{align}\label{F-5.3}
\Phi^*(Y)=V^*YV\otimes I_2,\qquad Y\in\cBK.
\end{align}
For every $Y\in\cBK$ one has
\begin{align*}
\Phi^*(Y^*Y)=\Phi^*(Y)^*\Phi(Y)
&\iff V^*Y^*YV=V^*Y^*VV^*YV \\
&\iff QY^*YQ=QY^*QYQ \\
&\iff(I_\cK-Q)YQ=0,
\end{align*}
and also $\Phi^*(YY^*)=\Phi^*(Y)\Phi^*(Y)^*\iff(I_\cK-Q)Y^*Q=0\iff QY(I_\cK-Q)=0$. Hence (e)
follows.

(a)$\implies$(c) and (b)$\implies$(c) are immediate by taking $\rho=\sigma=I_\cH$ and
$\cT=\cB(Q\cK)$ in (a) and (b).

(f)$\implies$(a).\enspace
Assume (f) and let $\rho,\sigma\in\Phi^*(\cBK^+)$; then in view of \eqref{F-5.3},
$\rho=\rho_1\otimes I_2$ and $\sigma=\sigma_1\otimes I_2$ for some
$\rho_1,\sigma_1\in\cB(\cH_1)^+$. To prove that $\Phi$ satisfies (ii)--(ix) in Sec.~\ref{Sec-3.1} for
all $\rho,\sigma\in\Phi^*(\cBK^+)$ and all $K\in\cBK$, it suffices by Lemma \ref{L-3.1} that $\Phi$
satisfies (ii) and (viii) for those $\rho,\sigma$ and $K$. Let $h$ be as stated in (ii). With the
spectral decompositions $\rho_1=\sum_aaP_a$ and $\sigma_1=\sum_bbQ_b$ we find by
\eqref{F-5.1} and \eqref{F-5.3} that
\begin{align*}
S_h^K(\Phi(\rho)\|\Phi(\sigma))
&=(\Tr I_2)S_h^K(V\rho_1V^*\|V\sigma_1V^*) \\
&=(\Tr I_2)\sum_{a,b>0}bh\Bigl({a\over b}\Bigr)\Tr K^*(VP_aV^*)K(VQ_bV^*) \\
&=(\Tr I_2)\sum_{a,b>0}bh\Bigl({a\over b}\Bigr)\Tr(V^*K^*V)P_a(V^*KV)Q_b \\
&=\sum_{a,b>0}bh\Bigl({a\over b}\Bigr)\Tr\Phi^*(K)^*(P_a\otimes I_2)\Phi^*(K)(Q_b\otimes I_2)
=S_h^{\Phi^*(K)}(\rho\|\sigma),
\end{align*}
where we have repeatedly used expression \eqref{F-2.4} with $h(0)=h'(\infty)=0$. Hence (ii) holds.
For any $K,Y\in\cBK$ and $z\in\bC$ we have
\begin{equation}\label{F-5.4}
\begin{aligned}
\sigma^0\Phi^*(Y\Phi(\rho)^zK\Phi(\sigma)^{-z})\sigma^0
&=\sigma^0\bigl[\bigl(V^*Y(V\rho_1^zV^*)K(V\sigma_1^{-z}V^*)V\bigr)\otimes I_2\bigr]\sigma^0 \\
&=\sigma^0\bigl[\bigl((V^*YV)\rho_1^z(V^*KV)\sigma_1^{-z}\bigr)\otimes I_2\bigr]\sigma^0 \\
&=\sigma^0\Phi^*(Y)\rho^z\Phi^*(K)\sigma^{-z},
\end{aligned}
\end{equation}
implying (viii). Hence (a) follows.

(f)$\implies$(b).\enspace
Assume (f) so that (a) and (e) hold as proved above. Let $\rho,\sigma\in\Phi^*(\cBK^+)$ and
$K\in\cBK$ with $QK=KQ$. Since $K\in\cM_{\Phi^*}$ by (e), condition (b) holds by (a) and
Theorems \ref{T-3.2}\,(3) and \ref{T-3.3}\,(1).

Therefore, it has been shown that (a)--(f) are all equivalent. Finally, if $\Phi$ is unital, then letting
$X=I_\cH$ in \eqref{F-5.1} gives $(\Tr I_2)Q=I_\cK$, and this is the case when $\dim\cH_2=1$ and
$V$ maps onto $\cK$. So the last assertion holds.
\end{proof}

\begin{remark}\label{R-5.2}\rm
In the situation where one (hence all) of conditions (a)--(f) in Theorem \ref{T-5.1} holds with
$Q\ne I_\cK$, one can never include condition (i) or (i$'$) (in Sec.~\ref{Sec-3.1}) in (a), as
immediately seen from (e) and Proposition \ref{P-3.13}.
\end{remark}

Finally, we are concerned with the special form of $2$-positive $\Phi:\cBH\to \cBK$ preserving a
standard $f$-divergence $S_f(\rho\|\sigma)=S_f^I(\rho\|\sigma)$, as well as a monotone
metric $\gamma_\sigma^h(K,K)$ and a $\chi^2$-divergence $\chi_k^2(\rho,\sigma)$ discussed in
Sec.~\ref{Sec-4}, for sufficiently many $\rho\in\cBH^+$ or $K\in\cBH$ with a fixed
$\sigma\in\cBH^{++}$.

\begin{thm}\label{T-5.3}
Let $\Phi:\cBH\to \cBK$ be a trace-preserving $2$-positive map, and let $Q:=\Phi(I_\cH)^0$.
\begin{itemize}
\item[(1)] The following conditions are equivalent:
\begin{itemize}
\item[(a)] $S_f(\Phi(\rho)\|\Phi(\sigma))=S_f(\rho\|\sigma)$ for all $\rho,\sigma\in\cBH^+$
and for all $f\in\OC[0,\infty)$.
\item[(b)] There exist a $\sigma\in\cBH^{++}$, a set $\cD\subset \cBH$ and an
$f\in\OC[0,\infty)$ with \eqref{F-3.5} for $\nu_f$ such that $\cD\cup\{\sigma\}$
generates $\cBH$ as a $C^*$-subalgebra and
\[
S_f(\Phi(\rho)\|\Phi(\sigma))=S_f(\rho\|\sigma),\qquad\rho\in\cD.
\]
\item[(c)] There exist a $\sigma\in\cBH^{++}$ and an $h\in\OM_+[0,\infty)$ such that
\begin{align}
&|\supp\lambda_{1/h}|\ge|\Sp(\Delta_\sigma)\cup\Sp(\Delta_{\Phi(\sigma)})|,
\label{F-5.5}\\
&\gamma_{\Phi(\sigma)}^h(\Phi(X),\Phi(X))=\gamma_\sigma^h(X,X),\qquad X\in\cBH.
\label{F-5.6}
\end{align}
\item[(d)] There exist a $\sigma\in\cBH^{++}$, a set $\cT\subset\cBH$ and an
$h\in\OM_+[0,\infty)$ with \eqref{F-5.5} such that $\cT\cup\{\sigma\}$ generates $\cBH$ as a
$C^*$-subalgebra and
\[
\gamma_{\Phi(\sigma)}^h(\Phi(K),\Phi(K))=\gamma_\sigma^h(K,K),\qquad K\in\cT.
\]
\item[(e)] There exist a $\sigma\in\cSH^{++}$ and a $k\in\OMD_+(0,\infty)$ with $k(1)=1$ and
\eqref{F-4.9} such that
\[
\chi_k^2(\Phi(\rho),\Phi(\sigma))=\chi_k^2(\rho,\sigma),\qquad\rho\in\cSH.
\]
\item[(f)] There exists a $\sigma\in\cBH^{++}$ such that $\Phi_\sigma^*\circ\Phi=\id_{\cBH}$
where $\Phi_\sigma^*$ is the Petz recovery map of $\Phi$ relative to $\sigma$.
\item[(g)] There exist a decomposition $Q\cK=\cK_1\otimes\cK_2$, a unitary $U$ from $\cH$
onto $\cK_1$, and an $\eta\in\cB(\cK_2)^{++}$ with $\Tr\eta=1$ such that
\begin{align}\label{F-5.7}
\Phi(X)=(UXU^*\otimes\eta)\oplus0,\qquad X\in\cBH,
\end{align}
where $0$ is the zero operator on $Q^\perp\cK$.
\end{itemize}

\item[(2)] If (g) holds, then
\begin{align}
\Phi^*(Y)&=U^*\bigl[\T_{\cK_2}((I_{\cK_1}\otimes\eta^{1/2})QYQ
(I_{\cK_1}\otimes\eta^{1/2}))\bigr]U,\qquad Y\in\cBK, \label{F-5.8}\\
\cM_{\Phi^*}&=(\cB(\cK_1)\otimes I_{\cK_2})\oplus\cB(Q^\perp\cK), \label{F-5.9}
\end{align}
and $\Phi$ satisfies all of (i)--(x) and (i$'$) in Sec.~\ref{Sec-3.1} for all $\rho,\sigma\in\cBH^+$
and all $K\in\cM_{\Phi^*}$.

\item[(3)] If $\Phi$ is in addition unital, then $\Phi$ satisfies one (equivalently, all) of (a)--(g) if and
only if $\Phi$ is a unitary transformation.
\end{itemize}
\end{thm}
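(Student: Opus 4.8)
The plan is to prove part (1) by three cycles of implications, all passing through the explicit form (g): $(a)\Rightarrow(b)\Rightarrow(f)\Rightarrow(g)\Rightarrow(a)$, $(c)\Rightarrow(d)\Rightarrow(f)\Rightarrow(g)\Rightarrow(c)$, and $(e)\Rightarrow(f)\Rightarrow(g)\Rightarrow(e)$. The implications $(a)\Rightarrow(b)$ and $(c)\Rightarrow(d)$ are immediate (take $\cD=\cBH$, resp.\ $\cT=\cBH$, and $f(x)=x^{3/2}$, resp.\ $h(x)=x^{1/2}$, whose representing measures have support $(0,\infty)$). Parts (2) and (3) will then be read off from (g).

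\textbf{The structural step $(f)\Rightarrow(g)$.} Since $\sigma\in\cBH^{++}$, $\tilde\sigma:=\Phi(\sigma)$ has support $Q$, so $\Phi$ may be regarded as a map $\hat\Phi:\cBH\to\cB(Q\cK)$, and (f) reads $\Phi_\sigma^*\circ\hat\Phi=\id_\cBH$. Passing to Hilbert--Schmidt adjoints gives $\hat\Phi^*\circ\Phi_\sigma=\id_\cBH$, where $\Phi_\sigma:\cBH\to\cB(Q\cK)$ is unital ($\Phi_\sigma(I_\cH)=Q$) and $2$-positive, hence a Schwarz map, and $\hat\Phi^*=\Phi^*|_{\cB(Q\cK)}$ is unital ($\Phi^*(Q)=I_\cH$, as $\Phi$ is trace-preserving), $2$-positive, and faithful on $\cB(Q\cK)^+$ (if $Y\ge0$ and $\Phi^*(Y)=0$ then $0=\Tr\Phi^*(Y)=\Tr(\Phi(I_\cH)Y)$ with $\Phi(I_\cH)$ of full support $Q$, so $Y=0$). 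I would then combine the Schwarz inequalities for $\Phi_\sigma$ and $\hat\Phi^*$ with $\hat\Phi^*\circ\Phi_\sigma=\id$ and faithfulness to force $\Phi_\sigma(X^*X)=\Phi_\sigma(X)^*\Phi_\sigma(X)$ for all $X$, so that $\Phi_\sigma$ is a unital $*$-monomorphism of the factor $\cBH$ into $\cB(Q\cK)$; by the structure of such embeddings (as in Theorem~\ref{T-5.1}, via \cite[Lemma~2.2.2]{GDJ}) one gets a decomposition $Q\cK=\cK_1\otimes\cK_2$ and a unitary $U:\cH\to\cK_1$ with $\Phi_\sigma(X)=UXU^*\otimes I_{\cK_2}$. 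Unwinding \eqref{F-3.35} then yields $\Phi(X)=S(UXU^*\otimes I_{\cK_2})S^*$ with $S:=\tilde\sigma^{1/2}(U\sigma^{-1/2}U^*\otimes I_{\cK_2})$, while $\Phi_\sigma^*\circ\Phi=\id$ forces $\T_{\cK_2}\Phi(X)=UXU^*$ for every $X$; hence the completely positive map $A\mapsto\T_{\cK_2}(S(A\otimes I_{\cK_2})S^*)$ on $\cB(\cK_1)$ is the identity, and by rigidity of unital completely positive maps equal to $\id$ (every Kraus operator is a scalar) one obtains $S=I_{\cK_1}\otimes T$ for some $T\in\cB(\cK_2)$ with $\Tr T^*T=1$, whence $\Phi(X)=(UXU^*\otimes\eta)\oplus0$ with $\eta:=TT^*$; fullness of $\eta$ and $\Tr\eta=1$ follow from $\tilde\sigma^0=Q$. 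This is (g).

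\textbf{The implications into (f), and $(g)\Rightarrow(a),(c),(e),(f)$.} For $(c)\Rightarrow(f)$: by Theorem~\ref{T-4.1} (with $\rho=\sigma$, applied for each $X\in\cBH$, using \eqref{F-5.5}) the hypothesis gives condition~(vii) there, $\Phi^*(\Phi(\sigma)^{-z}\Phi(X)\Phi(\sigma)^{z-1})=\sigma^{-z}X\sigma^{z-1}$ for all $z\in\bC$; putting $z=1/2$ and using \eqref{F-3.36} gives $\Phi_\sigma^*(\Phi(X))=X$. $(e)\Rightarrow(f)$ is the same argument via Corollary~\ref{C-4.5} (equivalently \eqref{F-4.8} and Theorem~\ref{T-4.1}) applied to all $\rho\in\cSH$, again extracting $z=1/2$, since density operators span $\cBH$. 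For $(b)\Rightarrow(f)$ (and similarly $(d)\Rightarrow(f)$): as $\sigma^0=I$, the boundary terms of $S_f$ coming from $f(0)+ax$ in \eqref{F-3.2} are preserved by trace-preservation, while by Theorem~\ref{T-2.16} (with $K=I_\cK\in\cM_{\Phi^*}$) the $bx^2$- and $\psi_t$-terms are individually non-increasing; hence the equality in (b) together with \eqref{F-3.5} for $\nu_f$ yields condition~(iv) of Section~\ref{Sec-3.1} for $K=I_\cK$ and each $\rho\in\cD$, so by Theorem~\ref{T-3.2}(3) and Proposition~\ref{P-3.14}(1) (case (b)) one obtains $\Phi_\sigma^*(\Phi(\rho))=\rho$ for every $\rho\in\cD$ (and for $\sigma$, by \eqref{F-3.37}); a generation argument, using that \emph{all} of conditions (i)--(x) hold for each such $\rho$ (in particular the Connes-cocycle identity (x)), then upgrades this to $\Phi_\sigma^*\circ\Phi=\id_\cBH$, in the spirit of \cite{HMPB,JP}. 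Finally $(g)\Rightarrow(a),(c),(e),(f)$: from the explicit form one has $\Phi_\sigma^*\circ\Phi=\id$ for \emph{every} invertible $\sigma$, and $\Phi(\rho)=(U\rho U^*\otimes\eta)\oplus0$ reduces, via the spectral formula \eqref{F-2.7} and $\Tr\eta=1$, the equalities in (a), in (c) (with $h(x)=x^{1/2}$) and in (e) (with $k(x)=x^{-1/2}$) to the corresponding identities for the identity map, exactly as in the proof of $(f)\Rightarrow(a)$ of Theorem~\ref{T-5.1}.

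\textbf{Parts (2) and (3), and the main obstacle.} Given (g), \eqref{F-5.8} is a direct adjoint computation; \eqref{F-5.9} follows because $\Phi^*(Y^*Y)=\Phi^*(Y)^*\Phi^*(Y)$ together with its mirror reduces, via \eqref{F-5.8} and fullness of $\eta$, to $QYQ\in\cB(\cK_1)\otimes I_{\cK_2}$, and the listed equalities of Section~\ref{Sec-3.1} for $K\in\cM_{\Phi^*}$ hold by the same tensor-factorisation computation. If $\Phi$ is unital, then $\Phi(I_\cH)=I_\cK$ forces $Q=I_\cK$ and $I_{\cK_1}\otimes\eta=I_\cK$ in (g), hence $\eta=I_{\cK_2}$; since $\Tr\eta=1$, $\dim\cK_2=1$, so $\Phi(X)=UXU^*$ with $U:\cH\to\cK$ unitary, and the converse is trivial. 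I expect the main obstacle to be the generation step in $(b),(d)\Rightarrow(f)$: because the Petz recovery map is in general not a Schwarz map, promoting recovery on a mere generating set to recovery on all of $\cBH$ is not a routine multiplicative-domain argument and has to be carried through the cocycle identities of condition~(x); the rigidity step inside $(f)\Rightarrow(g)$ that forces $S=I_{\cK_1}\otimes T$ is the secondary technical point.
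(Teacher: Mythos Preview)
Your direct argument for $(f)\Rightarrow(g)$ is correct and is actually cleaner than the paper's route. The paper never proves $(f)\Rightarrow(g)$ directly; it uses the equivalence $(a)\Leftrightarrow(f)$ from \cite[Theorem~5.1]{HMPB} and then $(b)\Rightarrow(g)$ via the Koashi--Imoto type decomposition of \cite[Theorem~3.19]{HM}. Your observation that $\hat\Phi^*\circ\Phi_\sigma=\id_\cBH$ together with faithfulness of $\hat\Phi^*$ forces $\Phi_\sigma$ to be a $*$-monomorphism, followed by the factor-embedding structure and the Kraus rigidity step pinning $S=I_{\cK_1}\otimes T$, is a self-contained and elementary alternative.

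The genuine gap is exactly where you flag it: the ``generation argument'' in $(b)\Rightarrow(f)$ and $(d)\Rightarrow(f)$. From (b) you correctly extract $\Phi_\sigma^*(\Phi(\rho))=\rho$ for each $\rho\in\cD\cup\{\sigma\}$, but the fixed-point set of $\Phi_\sigma^*\circ\Phi$ is only a linear subspace, not a $*$-subalgebra: the structure theorem \cite[Theorem~3.19]{HM} shows it has the form $\bigoplus_j\cB(\cH_{j,L})\otimes\omega_j$ with fixed $\omega_j\in\cB(\cH_{j,R})^{++}$, which is closed under products only when every $\dim\cH_{j,R}=1$. So knowing that $\cD\cup\{\sigma\}$ generates $\cBH$ as a $C^*$-algebra does \emph{not} by itself force the fixed-point set to be all of $\cBH$, and the Connes-cocycle identity (x) for individual $\rho\in\cD$ does not obviously supply the missing multiplicativity. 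The paper sidesteps this by going $(b)\Rightarrow(g)$ and $(d)\Rightarrow(g)$ directly: once $\cD\cup\{\sigma\}\subset\bigoplus_j\cB(\cH_{j,L})^+\otimes\omega_j$, every element of $\cD\cup\{\sigma\}$ commutes with $I_{\cH_{1,L}}\otimes P_1$ for any eigenprojection $P_1$ of $\omega_1$; since $\cD\cup\{\sigma\}$ generates $\cBH$, this projection must be $I_\cH$, forcing $r=1$ and $\dim\cH_{1,R}=1$, i.e.\ (g). In short, the structural input you need to close $(b),(d)\Rightarrow(f)$ is precisely the Koashi--Imoto decomposition, at which point it is more natural to land in (g) directly, as the paper does.
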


\begin{proof}
(1)\enspace
(a)$\implies$(b) is trivial.

(c)$\iff$(e).\enspace
Assume (e) and let $h:=1/k$. Then by \eqref{F-4.8} one has \eqref{F-5.6} for all $X\in\cSH$, which
extends to all $X\in\cBH$ by polarization formula. The converse is also easy since we may assume
$h(1)=1$ in (c).

(a)$\iff$(f).\enspace
Since $\Phi$ is $2$-positive, this is immediate from \cite[Theorem 5.1]{HMPB}.

(c)$\implies$(f).\enspace
Assume (c). Then by (ii)$\implies$(vii) (for $z=1/2$) of Theorem \ref{T-4.1} one has
\[
\Phi^*(\Phi(\sigma)^{-1/2}\Phi(X)\Phi(\sigma)^{-1/2})=\sigma^{-1/2}X\sigma^{-1/2},
\qquad X\in\cBH.
\]
Hence $\Phi_\sigma^*(\Phi(X))=X$ for all $X\in\cBH$.

(g)$\implies$(c),\,(d).\enspace
Assume (g), and let us show more strongly that \eqref{F-5.6} holds for any $\sigma\in\cBH^{++}$
and any $h\in\OM_+[0,\infty)$. Let $\sigma\in\cBH^{++}$ and $X\in\cBH$ be arbitrary. Since
$\gamma_{\Phi(\sigma)}^h(\Phi(X),\Phi(X))$ is defined on $\cB(Q\cK)$, we may and do
assume that $Q=I_\cK$ so that $\Phi(\sigma)\in\cBK^{++}$. For each $n\in\bN\cup\{0\}$ note that
\begin{align*}
\Delta_{\Phi(\sigma)}^nR_{\Phi(\sigma)}^{-1}\Phi(X)
&=L_{\Phi(\sigma)}^nR_{\Phi(\sigma)}^{-n-1}\Phi(X) \\
&=(U\sigma U^*\otimes\eta)^n(UXU^*\otimes\eta)(U\sigma U^*\otimes\eta)^{-n-1} \\
&=U\sigma^nX\sigma^{-n-1}U^*\otimes I_\cK
=U(\Delta_\sigma^nR_\sigma^{-1}X)U^*\otimes I_\cK.
\end{align*}
Approximating $1/h$ uniformly on an interval $[a,b]$ ($0<a<b$) including
$\Sp(\Delta_\sigma)\cup\Sp(\Delta_{\Phi(\sigma)})$ by polynomials, one has
\begin{align*}
J_h(\Phi(\sigma),\Phi(\sigma))^{-1}\Phi(X)
&=(1/h)(\Delta_{\Phi(\sigma)})R_{\Phi(\sigma)}^{-1}\Phi(X)
\quad(\mbox{see \eqref{F-2.3}}) \\
&=U((1/h)(\Delta_\sigma)R_\sigma^{-1}X)U^*\otimes I_\cK \\
&=U(J_h(\sigma,\sigma)^{-1}X)U^*\otimes I_\cK
\end{align*}
so that
\begin{align*}
\gamma_{\Phi(\sigma)}^h(\Phi(X),\Phi(X))
&=\Tr\Phi(X)^*J_h(\Phi(\sigma),\Phi(\sigma))^{-1}\Phi(X)
\quad(\mbox{see \eqref{F-4.1}}) \\
&=\Tr(UX^*U^*\otimes\eta)(U(J_h(\sigma,\sigma)^{-1}X)U^*\otimes I_\cK) \\
&=\Tr X^*J_h(\sigma,\sigma)^{-1}X\cdot\Tr\eta=\gamma_\sigma^h(X,X).
\end{align*}

(b)$\implies$(g).\enspace
Since $\Phi$ is $2$-positive, by \cite[Theorem 3.19]{HM} (see, e.g., \cite{Mos} for more details)
we have decompositions
\begin{align}\label{F-5.10}
\cH=\bigoplus_{j=1}^r\cH_{j,L}\otimes\cH_{j,R},\qquad
Q\cK=\bigoplus_{j=1}^r\cK_{j,L}\otimes\cK_{j,R},
\end{align}
operators $\omega_j\in\cB(\cH_{j,R})^{++}$, unitaries $U_j:\cH_{j,L}\to\cK_{j,L}$, and
trace-preserving $2$-positive maps $\Phi_j:\cB(\cH_{j,R})\to\cB(\cK_{j,R})$ such that
\begin{align}
\bigl(\cF_{\Phi_\sigma^*\circ\Phi}\bigr)^+
&=\bigoplus_{j=1}^r\cB(\cH_{j,L})^+\otimes\omega_j, \label{F-5.11}\\
\Phi(X_{j,L}\otimes X_{j,R})&=U_jX_{j,L}U_j^*\otimes\Phi_j(X_{j,R}) \label{F-5.12}
\end{align}
for all $X_{j,L}\in\cB(\cH_{j,L})$, $X_{j,R}\in\cB(\cH_{j,R})$. (To be more more precise,
the equalities in \eqref{F-5.12} hold up to unitary conjugation.)
Here, $\cF_{\Phi_\sigma^*\circ\Phi}$ denotes the set of $X\in\cBH^+$ such that
$\Phi_\sigma^*\circ\Phi(X)=X$. By \cite[Theorem 5.1]{HMPB} condition (b) implies that
$\rho,\sigma\in\bigl(\cF_{\Phi_\sigma^*\circ\Phi}\bigr)^+$ (for all $\rho\in\cD$) so that we have
by \eqref{F-5.11}
\[
\rho=\bigoplus_{j=1}^r\rho_j\otimes\omega_j,\qquad
\sigma=\bigoplus_{j=1}^r\sigma_j\otimes\omega_j
\qquad(\rho_j,\sigma_j\in\cB(\cH_{j,L})^+).
\]
Let $P_1$ be any eigen-projection of $\omega_1$. Then the above expressions imply
in particular that all $\rho\in\cD$ and $\sigma$ commute with $I_{\cH_{1,L}}\otimes P_1$. Since
$\cD\cup\{\sigma\}$ generates $\cBH$ as a $C^*$-subalgebra, it follows that
$I_{\cH_{1,L}}\otimes P_1=I_\cH$. This means that $r=1$ and $\dim\cH_{1,R}=1$.
Condition (g) then follows immediately from \eqref{F-5.12}, by letting $\eta:=\Phi_1(1)$ where $1$
is the identity of $\cB(\cH_{1,R})\cong\bC$.

(d)$\implies$(g).\enspace
Let us make use of the same decompositions as in \eqref{F-5.10}--\eqref{F-5.12}.
Assume (d). By Remark \ref{R-4.4} we find that
\[
\cD_{\sigma,\cT}:=\{\sigma+s\Re K+t\Im K:K\in\cT,\,s,t\in\bR\}\cap\cBH^{++}
\]
is included in $(\cF_{\Phi_\sigma^*\circ\Phi})^+$. Since the assumption means that $\cD_{\sigma,\cT}$
generates $\cBH$ as a $C^*$-subalgebra, (g) can be proved as in the above proof of (b)$\implies$(g).

(2)\enspace
Assume that (g) holds. For any $X\in\cBH$ and $Y\in\cBK$, by \eqref{F-5.7} one has
\begin{align*}
\<\Phi(X),Y\>_\HS&=\Tr(UXU^*\otimes\eta)^*QYQ \\
&=\Tr(UX^*U^*\otimes I_2)(I_1\otimes\eta^{1/2})QYQ(I_1\otimes\eta^{1/2}) \\
&=\Tr(UX^*U^*)\T_{\cK_2}((I_1\otimes\eta^{1/2})QYQ(I_1\otimes\eta^{1/2})) \\
&=\Tr X^*U^*\bigl[\T_{\cK_2}((I_1\otimes\eta^{1/2})QYQ(I_1\otimes\eta^{1/2}))\bigr]U,
\end{align*}
implying \eqref{F-5.8}. Next, if $Y=(Y_1\otimes I_2)\oplus Y_0$ where $Y_1\in\cB(\cK_1)$
and $Y_0\in\cB(Q^\perp\cK)$, then
\begin{align*}
\Phi^*(Y^*Y)
&=U^*\bigl[\T_{\cK_2}((I_1\otimes\eta^{1/2})(Y_1^*Y_1\otimes I_2)
(I_1\otimes\eta^{1/2}))\bigr]U \\
&=U^*\Tr_{\cK_2}(Y_1^*Y_1\otimes\eta)U=U^*Y_1^*Y_1U \\
&=U^*Y_1^*UU^*Y_1U=\Phi^*(Y)^*\Phi^*(Y)
\end{align*}
and similarly $\Phi^*(YY^*)=\Phi^*(Y)\Phi^*(Y)^*$. Hence
\begin{align}\label{F-5.13}
\cM_{\Phi^*}\supset(\cB(\cK_1)\otimes I_2)\oplus\cB(Q^\perp\cK).
\end{align}
On the other hand, define a unital CP map $\Psi:\cB(Q\cK)\to\cB(\cK_1)$ by
\[
\Psi(Y):=\T_{\cK_2}((I_1\otimes\eta^{1/2})Y(I_1\otimes\eta^{1/2})),
\qquad Y\in\cB(Q\cK).
\]
If $Y\in\cBK$ belongs to $\cM_{\Phi^*}$, then it follows from \eqref{F-5.8} that
\[
\Psi(QY^*YQ)=\Psi(QY^*Q)\Psi(QYQ),\qquad
\Psi(QYY^*Q)=\Psi(QYQ)\Psi(QY^*Q).
\]
Since $\Psi(QY^*Q)\Psi(QYQ)\le\Psi(QY^*QYQ)$, one has
$\Psi(QY^*(I_\cK-Q)YQ)=0$ so that $(I_\cK-Q)YQ=0$ since $\Psi$ is faithful on $\cB(Q\cK)^+$.
Similarly $(I_\cK-Q)Y^*Q=0$, and hence $QY=YQ$, which implies that
\begin{align}\label{F-5.14}
\cM_{\Phi^*}\subset\cB(Q\cK)\oplus\cB(Q^\perp\cK).
\end{align}
By \eqref{F-5.13} and \eqref{F-5.14}, to show \eqref{F-5.9} we need to prove that
$\cA:=\cM_{\Phi^*}\cap\cB(Q\cK)=\cB(\cK_1)\otimes I_2$. Note that $\cA$ is a von Neumann
subalgebra of $\cB(\cK_1)\otimes\cB(\cK_2)$ and $\cA'\subset I_1\otimes\cB(\cK_2)$ by
\eqref{F-5.13}, where $\cA'$ is the commutant of $\cA$. Hence $\cA'=I_1\otimes\cB$ for some
von Neumann subalgebra $\cB$ of $\cB(\cK_2)$. By von Neumann's double commutant theorem
we have $\cA=\cB(\cK_1)\otimes\cB'$. Now let $Y_2\in\cB'$; then $I_1\otimes Y_2\in\cM_{\Phi^*}$.
Since
\[
\Phi^*(I_1\otimes Y_2)=U^*\T_{\cK_2}(I_1\otimes\eta^{1/2}Y_2\eta^{1/2})U
=(\Tr\eta Y_2)I_\cH,
\]
it follows that $\Tr\eta Y_2^*Y_2=|\Tr\eta Y_2|^2$, from which it must hold that
$Y_2=(\Tr\eta Y_2)I_2$. Therefore, $\cB'=\bC I_2$ and \eqref{F-5.9} has been shown.
Finally, for any $\rho,\sigma\in\cBH^+$ and $K\in\cM_{\Phi^*}$ so that
$K=(K_1\otimes I_2)\otimes K_0$ where $K_1\in\cB(\cK_1)$ and $K_0\in\cB(Q^\perp\cK)$,
one has $\Phi^*(K)=U^*[\T_{\cK_2}(K_1\otimes\eta)]U=U^*K_1U$. Hence for any $\alpha\in(0,1)$,
\begin{align*}
\Tr\Phi^*(K)^*\rho^\alpha\Phi^*(K)\sigma^{1-\alpha}
&=\Tr U^*K_1^*U\rho^\alpha U^*K_1U\sigma^{1-\alpha} \\
&=\Tr(K_1^*\otimes I_2)(U\rho^\alpha U^*\otimes\eta^\alpha)
(K_1\otimes I_2)(U\sigma^{1-\alpha}U^*\otimes\eta^{1-\alpha}) \\
&=\Tr K^*\Phi(\rho)^\alpha K\Phi(\sigma)^{1-\alpha},
\end{align*}
so that condition (vii) in Sec.~\ref{Sec-3.1} holds for these $\rho,\sigma$ and $K$. Therefore, the
last assertion of (2) holds due to Theorems \ref{T-3.2}\,(3) and \ref{T-3.3}\,(1).

(3)\enspace
If $\Phi$ is unital, then \eqref{F-5.7} gives $(I_{\cK_1}\otimes\eta)\oplus0=I_\cK$, and this is the
case when $Q=I_\cK$ and $\dim\cK_2=1$. So the assertion holds.
\end{proof}

It is worth noting that the forms of $\Phi$ in \eqref{F-5.1} of Theorem \ref{T-5.1} and in \eqref{F-5.7}
of Theorem \ref{T-5.3} have some relationship like duality. Indeed, $\Phi^*$ in \eqref{F-5.8} is
quite similar to $\Phi$ in \eqref{F-5.1}. This might be natural from a certain duality between
quasi-entropies and monotone metrics found in their definitions, see \eqref{F-2.2} and \eqref{F-4.1}.


\begin{remark}\label{R-5.4}\rm
It is well known that $\cBK$ with $\dim\cK=n$ for any $n\in\bN$ is singly generated. Indeed, if
$K\in\cBK$ has $n$ distinct eigenvalues and no two eigenvectors corresponding to different
eigenvalues are orthogonal (for example, if $A$ has a matrix representation
\[
A=\begin{bmatrix}\lambda_1&1&&&\\&\lambda_2&1&&\mbox{\LARGE0}&\\
&&\ddots&\ddots&\\&\mbox{\LARGE0}&&\lambda_{n-1}&1\\&&&&\lambda_n\end{bmatrix}
\]
with $\lambda_1<\lambda_2<\dots<\lambda_n$), then $\cBK$ is generated by a single $K$ (see
\cite{Pea}). Take such a generator $K$ of $\cB(Q\cK)$. Let $\rho=\sigma=I_\cH$ and
$\cT=\{K\}$ in (c) of Theorem \ref{T-5.1}; then conditions (a)--(f) of Theorem \ref{T-5.1} are
equivalent to a single equality (the $\alpha=1/2$ case of condition (vii) in Sec.~\ref{Sec-3.1})
\[
\Tr K^*\Phi(I)^{1/2}K\Phi(I)^{1/2}=\Tr\Phi^*(K)^*\Phi^*(K).
\]
Also, when $\Phi$ is $2$-positive, let $\sigma=I_\cH$, $\cT=\{K\}$ and $h(x)=x^{1/2}$ in (d) of
Theorem \ref{T-5.3}; then (a)--(g) of Theorem \ref{T-5.3} are equivalent to a single equality
\[
\Tr\Phi(K)^*\Phi(I)^{-1/2}\Phi(K)\Phi(I)^{-1/2}=\Tr K^*K.
\]
\end{remark}

\begin{remark}\label{R-5.5}\rm
A similar result to Theorem \ref{T-5.3}\,(1)  was given in \cite{MNS} from a different point of view,
where the authors proved that if $f$ is a strictly convex function on $[0,\infty)$ and a
transformation (i.e., a bijection) $\phi$ on the set $\cSH$ satisfies
$S_f(\phi(\rho)\|\phi(\sigma))=S_f(\rho\|\sigma)$ for all $\rho,\sigma\in\cSH$, then
$\phi(\rho)=U\rho U^*$ ($\rho\in\cSH$) with either a unitary or an antiunitary $U$ on $\cH$. Since
the bijectivity of $\phi$ is essential in their proof based on Wigner's theorem, it does not seem
possible to apply the method in \cite{MNS} to our case.
\end{remark}

\section*{Acknowledgments}

The author thanks Anna Jen\v cov\'a, Mil\'an Mosonyi, and Yoshimichi Ueda for discussions which
helped to improve the paper. In particular, he is indebted to Jen\v cov\'a (\cite{Je6}) for
the idea of proving (b)$\implies$(g) and (d)$\implies$(g) of Theorem \ref{T-5.3}.

\addcontentsline{toc}{section}{References}

\end{document}